\newcommand*{\IR}{\mathbb{R}}
\theoremstyle{plain}
\newtheorem{theorem}{Theorem}[section]
\newtheorem{lemma}[theorem]{Lemma}
\newtheorem{proposition}[theorem]{Proposition}
\newtheorem{corollary}[theorem]{Corollary}
\newtheorem{remark}[theorem]{Remark}
\newtheorem{example}[theorem]{Example}
\newtheorem{conditions}[theorem]{Conditions}
\newcommand{\dd}[1]{\operatorname{d}\!#1}
\newcommand{\ee}[1]{\operatorname{e}^{#1}}
\newcommand{\R}{\mathds R}
\newcommand{\OF}{\mathcal{F}}
\newcommand{\OP}{\mathcal{P}}
\newcommand{\notiz}[1]{\relax}
\newcommand{\zitep}[1]{\relax}
\newcommand{\skr}{\rangle}
\newcommand{\1}{\mathds 1}            
\newcommand{\nn}{\mathds N}
\newcommand{\rr}{\mathds R}
\newcommand{\rrd}{\mathds{R}^d}
\newcommand{\cc}{\mathds C}
\newcommand{\skl}{\langle}
\newcommand{\sign}{\operatorname{sgn}}
\newcommand{\Price}[1][]{
		\ifthenelse{\equal{#1}{}}{\mathit{Price}}{\Price{}^{#1}}
	} 
\newcommand{\Call}{\mathit{Call}}
\newcommand{\icc}{i}
\newcommand{\tild}{~}
\newcommand*{\BS}{\text{Black{\&}Scholes}\xspace}
\newlength{\wordlength}
\newcommand{\olp}[1][]{\textcolor{white}{\underline{\textcolor{black}{\overline p}}_{\hspace{-.05em}\textcolor{black}{#1}}}}
\renewcommand{\cite}{\citet}
\numberwithin{equation}{section}
\numberwithin{figure}{section}
\numberwithin{table}{section}
\begin{document}
\title{\textbf{Chebyshev Interpolation for Parametric Option Pricing}\footnote{We like to thank Jonas Ballani, Behnam Hashemi, Daniel Kressner and Nick Trefethen for fruitful discussions on Chebyshev interpolation. Moreover, we gratefully acknowledge valuable feedback from Christian Bayer, Ernst Eberlein, Dilip Madan, Christian P\"{o}tz, Peter Tankov and Ralf Werner. For further we thank Paul Harrenstein and Pit Forster.
Additionally, we thank the participants of the conferences Advanced Modelling in Mathematical Finance, A conference in honour of Ernst Eberlein, Kiel. May 20--22, 2015, Stochastic Methods in Physics and Finance, 2015 in Heraklion, MoRePas–2015: Model Reduction of Parametrized Systems III, held in Trieste and the 12th German Probability and Statistic Days 2016 in Bochum. Furthermore we thank the participants of the research seminars Seminar Stochastische Analysis und Stochastik der Finanzm\"arkte, Technical University Berlin.  May 28, 2015 and the Groupe de Travail MathfiProNum: Finance math\'ematique, probabilit\'es num\'eriques et statistique des processus, Universit\'e Diderot, Paris. June 11, 2015.}
}

\bigskip
\author{\textbf{Maximilian Ga{\ss}$\vphantom{l}^{1,\dagger}$, Kathrin Glau$\vphantom{l}^{1}$,} \\\textbf{Mirco Mahlstedt$\vphantom{l}^{1,}$\footnote{We thank the KPMG Center of Excellence in Risk Management for their support.}\ , Maximilian Mair$\vphantom{l}^{1}$
}
\\\\$\vphantom{l}^{\text{1}}$Technical University of Munich, Germany
}

\maketitle
\begin{abstract}
Recurrent tasks such as pricing, calibration and risk assessment need to be executed accurately and in real-time. Simultaneously we observe an increase in model sophistication on the one hand and growing demands on the quality of risk management on the other. To address the resulting computational challenges, it is natural to exploit the recurrent nature of these tasks. We concentrate on Parametric Option Pricing (POP) and show that polynomial interpolation in the parameter space promises to reduce run-times while maintaining accuracy. The attractive properties of Chebyshev interpolation and its tensorized extension enable us to identify criteria for (sub)exponential convergence and explicit error bounds. We show that these results apply to a variety of European (basket) options and affine asset models. Numerical experiments confirm our findings. Exploring the potential of the method further, we empirically investigate the efficiency of the Chebyshev method for multivariate and path-dependent options.
\end{abstract}

\textbf{Keywords}
	Multivariate Option Pricing, Complexity Reduction, (Tensorized) Chebyshev Polynomials, Polynomial Interpolation, Fourier Transform Methods, Monte Carlo, Affine Processes
	
\noindent\textbf{2000 MSC} 91G60, 41A10  


\section{Introduction}

The development of fast and accurate computational methods for parametric models is one of the central issues in computational finance. Financial institutions dedicated to trading or assessment of financial derivatives have to cope with the daily tasks of computing numerous characteristic financial quantities. Examples of interest are prices, sensitivities and risk measures for products on different models and for varying parameter constellations. With regard to the ever growing market activities, more and more of these evaluations need to be delivered in real-time. 
In addition we face a constantly rising model sophistication since the original work of \cite{BlackScholes1973} and \cite{merton1973}. 
From the early nineties on, stochastic volatility and L\'evy models as well as models based on further classes of stochastic processes have been developed that reflect the observed market data in a more appropriate way. For asset models see e.g.\ \cite{Heston1993}, Eberlein, Keller and Prause~(1998)\nocite{EberleinKellerPrause98}, Duffie, Filipovi\'c and Schachermayer~(2003)\nocite{DuffieFilipovicSchachermayer2003}, Cuchiero, Keller-Ressel and Teichmann~(2015)\nocite{CuchieroKeller-ResselTeichmann2015}. In the case of fixed income models see e.g.\ \cite{EberleinOezkan2005}, Keller-Ressel, Papapantoleon and Teichmann~(2013)\nocite{Keller-ResselPapapantoleonTeichmann2013}, Filipovi\'c, Larsson and Trolle~(2014)\nocite{FilipovicLarssonTrolle2014}. The aftermath of the financial crisis 2007--2009, moreover, has lead to a new generation of more complex models, for instance by incorporating more risk factors. The usefulness of a pricing model critically depends on how well it captures the relevant aspects of market reality in its numerical implementation. 
Exploiting new ways to deal with the rising computational complexity therefore supports  the evolution of pricing models and touches a core concern of present mathematical finance.

A large body of computational tasks in finance needs to be repeatedly performed in real-time for a varying set of parameters. Prominent examples are option pricing and hedging of different option sensitivities, e.g. delta and vega, that also need to be calculated in real-time. In particular for optimization routines arising in model calibration, large parameter sets come into play. Further examples arise in the context of risk control and assessment, such as for quantification and monitoring of risk measures.
The following question serves as a starting point of our investigations:
\textit{How to systemically exploit the recurrent nature of parametric computational problems in finance with the approached objective to gain efficiency?} Looking for answers to this question, we focus on Parametric Option Pricing~(POP) in the sequel.

In the present literature on computational methods in finance, complexity reduction for parametric problems has largely been addressed by applying Fourier techniques following the seminal works of \cite{CarrMadan99} and \cite{Raible}. See also the monograph \cite{BoyarchenkoLevendorskii2002}. Research in this area concentrates on adopting fast Fourier transform (FFT) methods and variants for option pricing. \cite{Lee2004} accurately describes pricing European options with FFT. Further developments are for instance provided by Lord, Fang, Bervoets and Oosterlee~(2008)\nocite{LordFangBervoetsOosterlee2008} for early exercise options and by \cite{FengLinetsky2008} and \cite{KudryavtsevLevendorskiy2009} for barrier options. Another path to efficiently handle large parameter sets that has been pursued in finance relies on reduced basis methods. These are techniques to solve parametrized partial differential equations. \cite{SachsSchu2010}, Cont, Lantos and Pironneau~(2011)\nocite{ContLantosPironneau2011}, \cite{Pironneau2011} and Haasdonk, Salomon and Wohlmuth~(2012)\nocite{HaasdonkSalomonWohlmuth2012b} and Burkovska, Haasdonk, Salomon and Wohlmuth~(2015)\nocite{burkovska2015reduced} applied this approach to price European, American plain vanilla options and European baskets. FFT methods on the one hand can be highly beneficial when the prices are required in a large number of Fourier variables, e.g.\ for a large set of strikes of European plain vanillas. On the other hand numerical experiments have shown a promising gain in efficiency of reduced basis methods when an accurate PDE solver is readily available. In essence all these approaches reveal an immense potential of complexity reduction by targeting parameter dependence. Hereto, they exploit the functional architecture of the underlying pricing technique for varying parameters.

Financial institutions have to deal simultaneously with a diversity of models, a multitude of option types, and---as a consequence---a wide variety of underlying pricing techniques. It is therefore worthwhile to explore the possibility of a \emph{generic} complexity reduction method that is independent of the specific pricing technique.
To do so, we focus on the set of option prices and the set of parameters of interest, disregard on purpose the pricing technology and view the option price as a function of the parameters. 
The core idea is now to \emph{introduce interpolation of option prices in the parameter space as a complexity reduction technique for POP}. 

The resulting procedure naturally splits into two phases: Pre-computation and real-time evaluation. The first one is also called \emph{offline phase} while the second is also called \emph{online phase}.
  In the pre-computation phase the prices are computed for some fixed parameter configurations, namely the interpolation nodes. Here, any appropriate pricing method---for instance based on Fourier, PDE or even Monte Carlo techniques---can be chosen. The real-time evaluation phase then consists of the evaluation of the interpolation. Provided that the evaluation of the interpolation is faster than the benchmark tool, the scheme permits a gain in efficiency in all cases where accuracy can be maintained. Then, we distinguish two use cases:
\begin{itemize}
\item 
In comparison to the benchmark pricing routine, the fast evaluation of the interpolation will eventually outweigh the expensive pre-computation phase, if pricing is a task repeatedly employed.
Optimization procedures are an obvious instance where this feature becomes advantageous.
\item 
Even if the number of price computations is limited, we can still benefit from the separation of the procedure into its two phases. In this way, e.g., idle times in the financial industry can be put to good use by preparing the interpolation for whenever real-time pricing is needed during business activities.
\end{itemize}  
The question arising at this stage is: \textit{Under what circumstances can we hope to find an interpolation method that delivers both reliable results and a considerable gain in efficiency?} 
  
   One could now be tempted to proceed in a naive manner and first define an equidistant grid and then interpolate piecewise linearly in the parameter space. Numerical experiments for \BS call prices as function of the volatility, for instance, would then provide convincing evidence that the number of nodes needed for a given accuracy is considerably high. Increasing the polynomial degree might lead to better results. However, convergence might not be guaranteed. \cite{Runge1901} showed that polynomial interpolation on equispaced grids may diverge---even for analytic functions. Second, the evaluation of the polynomial interpolants  may be numerically problematic, as it is shown in \cite{Runge1901} that "the interpolation problem for polynomial interpolation on an equidistant grid is exponentially ill-conditioned", a formulation we borrow from \cite{TrefethenMythTalk}. For these reasons we abstain from polynomial interpolation with equidistant grids. Rather we take a step back and ask: \textit{Which methods for the interpolation of prices as functions of model and payoff parameters are numerically promising in terms of convergence, stability and implementational ease?}
  
Regarding this research question, we need to take into consideration both the set of interpolation methods as such and the specific features of the functions we investigate. 
It is well-known that the efficiency of interpolation methods critically depends on the degree of regularity of the approximated function. For the core problem of our study---European (basket) options---we investigate the regularity of the option prices as functions of the parameters. We find that these functions are indeed \textit{analytic} for a large set of option types, models and parameters. Taking the perspective of approximation theory, this inspires the hope to find suitable interpolation methods.

Empirically, we observe that parameters of interest often range within bounded intervals. One interpolation method that is highly effective for analytic functions on bounded intervals is Chebyshev interpolation. This intensively studied method enjoys excellent numerical properties---in stark contrast to polynomial interpolation on equally spaced nodal points. The interpolation nodes are known beforehand, implementation is straightforward and the method is numerically stable. For univariate functions that are several times differentiable, the method converges polynomially and for univariate analytic functions convergence is exponential. 
In a remarkable monograph, \cite{Trefethen2013} gives a comprehensive review of Chebyshev interpolation.
Its appealing theoretical properties are indeed of practical use as the software tool Chebfun\footnote{Chebfun is an open-source software system, see \url{http://www.chebfun.org}} demonstrates. In this implemention \cite{PlatteTrefethen2008} aim ``to combine the feel of symbolics with the speed of numerics". Therefore Chebyshev interpolation is our method of choice.\footnote{Chebyshev interpolation shares its good properties with for instance  Legendre transformation, for which we expect similarly positive results. We refer to \cite{Trefethen2013}, who states: "
It is the clustering near the ends of the interval that makes the difference, and other sets of points with similar clustering, like Legendre points [...] have similarly good behaviour."}

Exploring the potential of interpolation methods for more than one single free parameter, we choose a tensorized version of Chebyshev interpolation:
For parameters $p\in\rr^D$, where $D\in\nn$ denotes the dimensionality of the parameter space, the price $\Price^p$ is approximated by tensorized Chebyshev polynomials $T_j$ with pre-computed coefficients $c_j$, $j\in J$, as follows,
	\begin{equation*}
		\Price^p \approx \sum_{j\in J}c_j T_j(p).
	\end{equation*}

Chebyshev interpolation is a standard numerical method that has proven to be extremely useful for applications in such diverse fields as physics, engineering, statistics and economics. Nevertheless, for pricing tasks in mathematical finance Chebyshev interpolation still seems to be rarely used and its potential is yet to be unfolded. \cite{PistoriusStolte2012} use Chebyshev interpolation of \BS prices in the volatility as an intermediate step to derive a pricing methodology for a time-changed model. Independently from us, \cite{Pachon2016} recently proposed Chebyshev interpolation as a quadrature rule for the computation of option prices with a Fourier type representation, which is comparable to the cosine method.

Our main results are the following:
\begin{itemize}

\item Theorem \ref{ConvergenceFourierPrices} provides accessible sufficient conditions on options and models that guarantee an asymptotic error decay of order $O\big(\varrho^{-\sqrt[D]{N}}\big)$ in the total number $N$ of interpolation nodes where $\varrho>1$ is given by the domain of analyticity and $D$ is the number of varying parameters.

\item More specific conditions for parametric European options in L\'evy models are provided in Corollary~\ref{cor-EuroLevy}, while Corollary~\ref{cor-Basetaffine} provides the framework for parametric basket options in affine models. 
\end{itemize}
These results establish an error analysis that is based on the domain of analyticity of the prices as  functions of the parameters. Observing that typical payoff functions are not smooth, we cannot expect an exponential error decay for interpolation with arbitrarily small maturities. Small maturities thus serve as an example that domains of analyticity need to be carefully studied. 
\begin{itemize}
\item 
 The investigations in Sections \ref{sec-options}--\ref{sec:models} show that for a large set of relevant (basket) options, models and free parameters a domain of analyticity can indeed be identified. 
This gives examples of relevant financial applications where (sub)exponential error decay is guaranteed.
\end{itemize}
To numerically validate the theoretical results we compare prices obtained by Chebyshev interpolation to benchmark prices by Fourier techniques.
\begin{itemize}
\item 
Numerical experiments in affine models confirm the theoretical error decay empirically for European call options (Figure~\ref{fig:ChebyErrorDecayAllLog10scale_call}) and digital down{\&}out options (Figure~\ref{fig:ChebyErrorDecayAllLog10scale_digi}). For the considered model examples of \BS, Merton, CGMY and Heston we observe $L^\infty$-error levels of order $10^{-10}$ using not more than $N=(25+1)^2$ Chebyshev interpolation nodes when $D=2$ parameters are varied. 
\end{itemize}
Numerical results show that already a small number of nodes leads to high accuracy. This motivates us to further explore the potential of the Chebyshev method for multivariate options. Here we deliberately go beyond the scope of our theoretical results and consider additional features like path-dependency. 
\begin{itemize}
\item For multivariate basket and path-dependent options in the Black{\&}Scholes, Heston and Merton model we use Monte Carlo as reference method. In all of our settings in Section~\ref{accuracy_exotics} Chebyshev interpolation achieves an accuracy that is similar to the accuracy of the Monte Carlo simulation itself ($10^{-3}$) for $D=2$.
\end{itemize}
In  addition we present empirical results demonstrating the efficiency of the Chebyshev method.
\begin{itemize}
\item 
The gain in efficiency in comparison to Fourier techniques is first validated for bivariate options of European type in Section \ref{sec:ChebyshevEfficiency}.
\item Secondly, the explicit gains in efficiency in comparison to Monte Carlo methods are shown in Section \ref{sec:ChebyshevEfficiency_MC} taking multivariate lookback options in the Heston model as examples.
\end{itemize}

The remainder of the article is organized as follows. In Section \ref{sec-Cheby} we introduce Chebyshev interpolation in detail and present the general convergence results. Section \ref{sec-pop}  establishes a convergence analysis of Chebyshev interpolation for POP. We formulate analyticity conditions for the payoff profiles and models that guarantee (sub)exponential convergence of the method. These conditions are verified in Section \ref{sec-analyticity} for different option types, models and free parameters.
The numerical experiments in Section 5 confirm these findings using Fourier techniques. Pricing basket options, the gain in efficiency is numerically investigated. 
Experiments based on Monte Carlo and finite differences moreover suggest to further explore the potential of the approach beyond the scope of the theoretical investigations from the previous sections. The resulting conclusion and outlook are presented in Section 6. Finally, the appendix provides the proof of the multivariate convergence result.

\section{Chebyshev Polynomial Interpolation}\label{sec-Cheby}
In this section we introduce the notation for Chebyshev interpolation. Following \cite{Trefethen2013}, the one-dimensional version is shown. Then we present the multivariate extension and convergence results. Consider an option price with a single varying parameter 
\begin{align}\label{Price_p}
\Price^p, \qquad p\in [-1,1].
\end{align}
An interpolation of $\Price^p$ with Chebyshev polynomials of degree $N$ is of the form
 \begin{align}\label{eq-ChebInter1dim}
I_N(\Price^{(\cdot)})(p)&:= \sum_{j=0}^N c_j T_j(p), \\
\intertext{with coefficients}
c_j&:=\frac{2^{\1_{0<j<N}}}{N}\sum_{k=0}^{N}{}^{''} \Price^{p_k}\cos\Big(j\pi\frac{k}{N}\Big),\quad j\le N,
 \end{align}
and basis functions
  \begin{align}\label{eq-Tjp}
  T_j(p) := \cos\big(j \arccos(p)\big)\quad\text{for $p\in [-1,1]$ and $j\le N$}
   \end{align}
where $\sum{}^{''}$ indicates that the first and last summands are halved. The Chebyshev nodes ${p}_k = \cos\left(\pi\frac{k}{N}\right)$ may conveniently be displayed in a graph, see Figure \ref{fig:ChebyNodes1d}. For an arbitrary compact parameter interval $[\underline{p}, \overline{p}]$, interpolation \eqref{eq-ChebInter1dim} needs to be adjusted by the appropriate linear transformation.

\begin{figure}[htb!]
\includegraphics[scale=1, center]{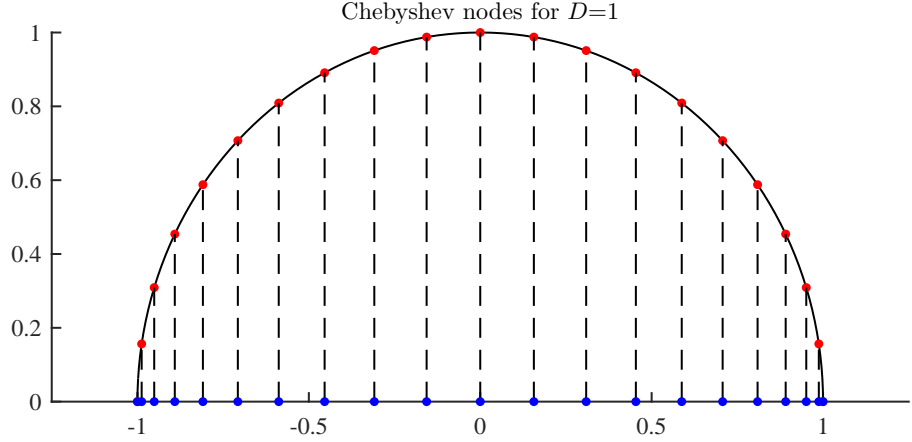}
\caption{A set of Chebyshev points $p_k\in [-1,1]$ (blue) for degree $N=20$ and equidistantly spaced auxiliary construction points (red) on the semi-circle.}
\label{fig:ChebyNodes1d}
\end{figure}

 \subsection{Multivariate Chebyshev Interpolation}\label{sec-multivariateChebyshev}
 \label{sec:Multi_Cheby}
The Chebyshev polynomial interpolation \eqref{eq-ChebInter1dim}--\eqref{eq-Tjp} has a tensor based extension to the multivariate case, see e.g. \cite{SauterSchwab2004}.
In order to obtain a nice notation, consider the interpolation of prices 
	\begin{equation}
		\Price^p,\qquad p\in [-1,1]^D.
	\end{equation}
	For a more general hyperrectangular parameter space $\OP=[\underline{p}_{1},\olp[1]]\times\ldots \times[\underline{p}_D,\olp[D]]$, the appropriate linear transformations need to be performed. Let $\overline{N}:=(N_1,\ldots,N_D)$ with $N_i \in\nn_0$ for $i=1,\ldots,D$. The interpolation with $\prod_{i=1}^D (N_{i}+1)$ summands is given by
	\begin{equation}
	I_{\overline{N}}(\Price^{(\cdot)})(p) := \sum_{j\in J} c_jT_j(p), 
	\end{equation}
where 
the summation index $j$ is a multiindex ranging over $J:=\{(j_1,\dots, j_D)\in\nn_0^D: j_i\le N_i\,\text{for }i=1,\ldots,D\}$, i.e.
\begin{equation}
	\label{eq:ChebySumd}
	I_{\overline{N}}(\Price^{(\cdot)})(p) = \sum_{j_1=0}^{N_1}\ldots \sum_{j_D=0}^{N_D} c_{(j_1,\ldots,j_D)} T_{(j_1,\ldots,j_D)}(p).
	\end{equation}
The basis functions $T_j$ for $j=(j_1,\dots, j_D)\in J$ are defined by
	\begin{equation}
	\label{def:ChebyTjd}
		T_{j}(p_1,\dots,p_D) = \prod_{i=1}^D T_{j_i}(p_i).
	\end{equation}	
The coefficients $c_j$ for $j=(j_1,\dots, j_D)\in J$ are given by
	\begin{equation}
	\label{def:Chebycj}
		c_j = \Big( \prod_{i=1}^D \frac{2^{\1_{\{0<j_i<N_i\}}}}{N_i}\Big)\sum_{k_1=0}^{N_1}{}^{''}\ldots\sum_{k_D=0}^{N_D}{}^{''} \Price^{p^{(k_1,\dots,k_D)}}\prod_{i=1}^D \cos\left(j_i\pi\frac{k_i}{N_i}\right),
	\end{equation}
where $\sum{}^{''}$ indicates that the first and last summand are halved and the Chebyshev nodes $p^k$ for multiindex $k=(k_1,\dots,k_D)\in J$ are given by
	\begin{equation}
	\label{eq:Chebynodesd}	
		p^k = (p_{k_1},\dots,p_{k_D})
	\end{equation}
	with the univariate Chebyshev nodes $p_{k_i}=\cos\left(\pi\frac{k_i}{N_i}\right)$ for $k_i=0,\ldots,N_i$ and $i=1,\ldots,D$. A set of $D$-variate Chebyshev nodes $p^{(k_1,\dots,k_D)}$ for $D=2$ and $N_1=N_2=20$ is depicted in Figure \ref{fig:ChebyNodes2d}.
\begin{figure}[htb!]
\includegraphics[scale=1, center]{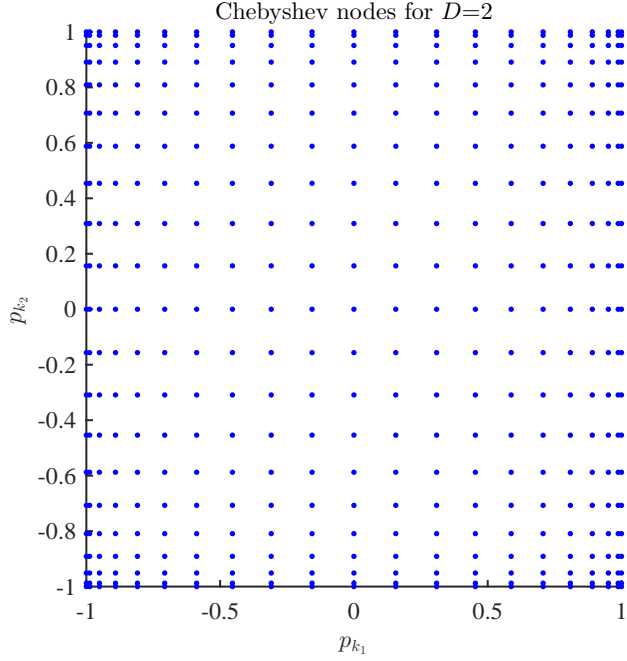}
\caption{A set of $D$-variate Chebyshev points $p^k\in [-1,1]^D$ for $D=2$ and $N_1=N_2=20$.
}
\label{fig:ChebyNodes2d}
\end{figure}

\newsavebox{\meinebox}
  \sbox{\meinebox}{\scalebox{0.925}[1]{\large\textbf{ Convergence of Multivariate Chebyshev Interpolation}}}
\subsection{\usebox{\meinebox}}
\label{sec-Cheby-convergence}
In the univariate case, it is well known that the error of approximation with Chebyshev polynomials decays polynomially for differentiable functions and exponentially for analytic functions. 
Let $f$ be analytic in $[-1,1]$ then it has an analytic extension to some \textit{Bernstein ellipse} $B([-1,1],\varrho)$ with parameter $\varrho>1$, defined as the open region in the complex plane bounded by the ellipse with foci $\pm 1$ and semiminor and semimajor axis lengths summing up to $\varrho$. 
This and the following result traces back to the seminal work of \cite{Bernstein1912}.
\begin{theorem}\label{Trefethen_1D_convergence}
\cite[Theorem 8.2]{Trefethen2013} Let a function f be analytic in the open Bernstein ellipse $B([-1,1],\varrho)$, with $\varrho>1$, where it satisfies $\vert f\vert\le V$ for some constant $V>0$. Then for each $N\ge 0$,
\begin{align*}
\Vert f-I_N(f)\Vert_{L^\infty([-1,1])}\le4 V\frac{\varrho^{-N}}{\varrho-1}.
\end{align*} 

\end{theorem}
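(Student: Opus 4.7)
\textbf{Proof sketch for Theorem \ref{Trefethen_1D_convergence}.} The plan is classical: expand $f$ in a Chebyshev series, bound its coefficients via a contour integral on the Bernstein ellipse, and control the interpolation error by combining this bound with the aliasing identity satisfied by Chebyshev interpolation at the nodes $x_j=\cos(j\pi/N)$.

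First I would pass to the Joukowski coordinate $x=\tfrac12(z+z^{-1})$, which maps the annulus $\varrho^{-1}<|z|<\varrho$ two-to-one onto $B([-1,1],\varrho)$ and satisfies $T_k(x)=\tfrac12(z^k+z^{-k})$. Setting $F(z):=f\bigl(\tfrac12(z+z^{-1})\bigr)$, the hypothesis gives $F$ analytic on the annulus with $|F|\le V$, and the Chebyshev coefficients of $f$ coincide (up to a factor $2$) with the Laurent coefficients of $F$. A direct contour-integral bound on any circle $|z|=r$ with $1<r<\varrho$, followed by $r\uparrow\varrho$, yields the decay estimate
\begin{equation*}
|a_k|\le 2V\varrho^{-k}\quad (k\ge 1),\qquad |a_0|\le V,
\end{equation*}
so in particular the series $f(x)=\sum_{k=0}^{\infty}a_kT_k(x)$ converges absolutely and uniformly on $[-1,1]$.

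Next I would derive the aliasing identity for the interpolation coefficients $\tilde a_k$ defined so that $I_N(f)=\sum_{k=0}^{N}\tilde a_kT_k$. Because $T_{2mN\pm k}(x_j)=\cos((2mN\pm k)j\pi/N)=\cos(kj\pi/N)=T_k(x_j)$ at every Chebyshev node, the discrete orthogonality of $\{T_k\}_{k\le N}$ at these nodes gives
\begin{equation*}
\tilde a_k-a_k=\sum_{m=1}^{\infty}\bigl(a_{2mN-k}+a_{2mN+k}\bigr)\qquad(0<k<N),
\end{equation*}
with the analogous (halved) formulas at $k=0,N$. Splitting the error as
\begin{equation*}
f(x)-I_N(f)(x)=\sum_{k=N+1}^{\infty}a_kT_k(x)-\sum_{k=0}^{N}(\tilde a_k-a_k)T_k(x),
\end{equation*}
using $|T_k|\le 1$ on $[-1,1]$, and inserting the coefficient bound turns both pieces into explicit geometric series in $\varrho^{-1}$ with leading term $\varrho^{-(N+1)}$. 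Each series contributes at most $2V\varrho^{-N}/(\varrho-1)$, giving the claimed bound $4V\varrho^{-N}/(\varrho-1)$.

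The main obstacle is the bookkeeping in the aliasing step: one has to track the halving conventions at $k=0$ and $k=N$ carefully and combine the double sums over $k$ and $m$ in a way that does not inflate the constant beyond $4$. Everything else reduces to summing geometric series, and the hypothesis $\varrho>1$ provides the uniform decay that makes the Laurent/Chebyshev tails converge.
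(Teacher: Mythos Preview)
Your sketch is essentially the standard argument from \cite[Theorems 8.1 and 8.2]{Trefethen2013}, and it is correct. Note, however, that the paper does not give its own proof of this statement: Theorem~\ref{Trefethen_1D_convergence} is simply quoted from Trefethen's monograph, so there is nothing to compare against beyond confirming that your outline matches the source reference.
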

In the multivariate case we will extend a convergence result from \cite{SauterSchwab2004}. 
We consider parametric option prices of form
\begin{align}
\Price^p \quad\text{for }p\in\OP
\end{align}
 with $\OP\subset\rr^D$ of hyperrectangular structure, i.e.\ $\OP=[\underline{p}_1,\olp[1]]\times\ldots\times[\underline{p}_D,\olp[D]]$ with real $\underline{p}_i\le\olp[i]$ for all $i=1,\ldots,D$.
We define the $D$-variate and transformed analogon of a Bernstein ellipse around the hyperrectangle $\OP$ with parameter vector $\varrho\in(1,\infty)^D$ as
\begin{align}\label{eq-genB}
B(\OP,\varrho):=B([\underline{p}_1,\olp[1]],\varrho_1)\times\ldots\times B([\underline{p}_D,\olp[D]],\varrho_D )
\end{align}
with $B([\underline{p},\olp],\varrho):=\tau_{[\underline{p},\olp]}\circ B([-1,1],\varrho)$, where for $p\in\cc$ we have the transform $\tau_{[\underline{p},\olp]}\big(\Re(p)\big):=\overline{p} + \frac{\underline{p}-\olp}{2}\big(1-\Re(p)\big)$ and $\tau_{[\underline{p},\olp]}\big(\Im(p)\big):= \frac{\olp-\underline{p}}{2}\Im(p)$. We call $B(\OP,\varrho)$ \textit{generalized Bernstein ellipse} if the sets $B([-1,1],\varrho_i)$ are Bernstein ellipses for $i=1,\ldots,D$.

%
  \begin{theorem}
 \label{Asymptotic_error_decay_multidim}
  Let $\OP\ni p\mapsto \Price^p$ be a real valued function that has an analytic extension to some generalized Bernstein ellipse $B(\OP,\varrho)$ for some parameter vector $\varrho\in (1,\infty)^D$ and $\sup_{p\in B(\OP,\varrho)}|\Price^p|\le V$. 
  Then
  \begin{align*}
 \max_{p\in\OP}\big|\Price^p - I_{\overline{N}}(\Price^{(\cdot)})(p)\big|
\le 2^{\frac{D}{2}+1}\cdot V \cdot\left(\sum_{i=1}^D\varrho_i^{-2N_i}\prod_{j=1}^D\frac{1}{1-\varrho_j^{-2}}\right)^{\frac{1}{2}}.
  \end{align*}
  \end{theorem}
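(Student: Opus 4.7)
The plan is to reduce to the reference cube $[-1,1]^D$ via the affine map $\tau$, exploit analyticity to expand $\Price^p$ in a tensorized Chebyshev series with geometrically decaying coefficients, and estimate the interpolation error through discrete orthogonality on the Chebyshev grid combined with a Cauchy--Schwarz step in the Chebyshev basis. After reducing to $\OP=[-1,1]^D$ (so that $B(\OP,\varrho)=\prod_{i=1}^D B([-1,1],\varrho_i)$), analyticity on the polyellipse gives a uniformly convergent expansion $\Price^p=\sum_{k\in\nn_0^D} a_k T_k(p)$. Iterating the univariate Cauchy-integral estimate along $\partial B([-1,1],\varrho_i)$ -- the ingredient underlying Theorem~\ref{Trefethen_1D_convergence} -- yields the decay bound $|a_k|\le 2^D V\prod_{i=1}^D\varrho_i^{-k_i}$.

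Next, the discrete orthogonality of $\{T_j\}_{j\in J}$ on the Chebyshev grid $\{p^k:k\in J\}$ gives an interpolation formula $I_{\overline N}(\Price^{(\cdot)})=\sum_{j\in J}\tilde a_j T_j$, in which $\tilde a_j-a_j$ is a bounded-multiplicity linear combination of coefficients $a_k$ with $k\notin J$ (the standard coordinatewise aliasing under the discrete Chebyshev transform). Consequently the interpolation error can be written as a single series in the $T_k(p)$ whose coefficients are controlled by $|a_k|$ with $k\in\nn_0^D\setminus J$.

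Using $|T_k(p)|\le 1$ on $[-1,1]^D$ together with a Cauchy--Schwarz estimate on the index set $\nn_0^D\setminus J=\bigcup_{i=1}^D\{k:k_i>N_i\}$, weighted by the product $\prod_i\varrho_i^{-k_i}$ so as to balance the coefficient decay against the geometric sum, one arrives at an estimate of the form $\sum_{i=1}^D\bigl(\sum_{k_i>N_i}\varrho_i^{-2k_i}\bigr)\prod_{j\neq i}\bigl(\sum_{k_j\ge 0}\varrho_j^{-2k_j}\bigr)$ under a square root. Evaluating the geometric series $\sum_{k_i>N_i}\varrho_i^{-2k_i}\le\varrho_i^{-2N_i}(1-\varrho_i^{-2})^{-1}$ and $\sum_{k_j\ge 0}\varrho_j^{-2k_j}=(1-\varrho_j^{-2})^{-1}$ then produces the claimed bound $2^{D/2+1}V\bigl(\sum_i\varrho_i^{-2N_i}\prod_j(1-\varrho_j^{-2})^{-1}\bigr)^{1/2}$, the constant $2^{D/2+1}$ absorbing the $2^D$ prefactor from the coefficient estimate together with the covering multiplicity.

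The main obstacle is producing the bound in the sharp $\ell^2$-form $\bigl(\sum_i\varrho_i^{-2N_i}\prod_j(1-\varrho_j^{-2})^{-1}\bigr)^{1/2}$ rather than the weaker $\ell^1$-form $\sum_i\varrho_i^{-N_i}\prod_j(1-\varrho_j^{-1})^{-1}$ that the obvious triangle inequality on $\sum_{k\notin J}|a_k|$ would yield; this requires the Cauchy--Schwarz grouping described above and careful bookkeeping of the multidimensional aliasing, so that the overlap between the sets $\{k:k_i>N_i\}$ and $\{k:k_j>N_j\}$ for $i\neq j$ does not inflate the constant beyond $2^{D/2+1}$.
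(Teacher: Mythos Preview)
Your outline has a genuine gap at exactly the point you flag as ``the main obstacle''. From the pointwise coefficient bound $|a_k|\le 2^D V\prod_i\varrho_i^{-k_i}$ and $|T_k(p)|\le 1$ alone, no Cauchy--Schwarz weighting produces the stated $\ell^2$-bound. Concretely, writing the error as $E(p)=\sum_{k\notin J}a_k\bigl(T_k(p)-T_{\alpha(k)}(p)\bigr)$ with $|T_k-T_{\alpha(k)}|\le 2$ and applying Cauchy--Schwarz with any weight $w_k>0$ gives
\[
|E(p)|^2\;\le\;4\Bigl(\sum_{k\notin J}|a_k|^2 w_k\Bigr)\Bigl(\sum_{k\notin J}w_k^{-1}\Bigr).
\]
If you insert $|a_k|^2\le 4^D V^2\prod_i\varrho_i^{-2k_i}$, then for the second factor to converge you need $w_k^{-1}$ summable, but then the first factor is bounded below by a constant times $\sum_{k\notin J}w_k\prod_i\varrho_i^{-2k_i}$, and optimizing over $w$ you recover at best the $\ell^1$ estimate $\bigl(\sum_{k\notin J}\prod_i\varrho_i^{-k_i}\bigr)^2$; for $\varrho_i$ near $1$ this is strictly worse than the claimed bound (already in $D=1$, $4V\varrho^{-N}/(\varrho-1)$ versus $2\sqrt{2}\,V\varrho^{-N}(1-\varrho^{-2})^{-1/2}$).

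What is actually needed is an $\ell^2$-control on the coefficients themselves, namely $\sum_k |a_k|^2\prod_i\varrho_i^{2k_i}\le 2^D V^2$; with this, Cauchy--Schwarz immediately gives $\sum_{k\notin J}|a_k|\le 2^{D/2}V\bigl(\sum_{k\notin J}\prod_i\varrho_i^{-2k_i}\bigr)^{1/2}$ and the theorem follows. But this $\ell^2$-coefficient bound is \emph{not} a consequence of the pointwise Cauchy-integral estimate---it is Parseval in the weighted Hilbert space $L^2(B(\OP,\varrho))$ with weight $\prod_i|1-z_i^2|^{-1/2}$, in which the rescaled Chebyshev polynomials $\tilde T_\mu=c_\mu T_\mu$, $c_\mu^2=(2/\pi)^D\prod_i(\varrho_i^{2\mu_i}+\varrho_i^{-2\mu_i})^{-1}$, form an orthonormal basis. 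That is precisely the route the paper takes (following Sauter--Schwab): rather than bounding individual $a_k$, one views $E:f\mapsto f(p)-I_{\overline N}(f)(p)$ as a bounded linear functional on this Hilbert space and computes $\|E\|_\varrho^2=\sum_\mu c_\mu^2|E(T_\mu)|^2$, using $|E(T_\mu)|\le 2$ for $\mu\notin J$ and $E(T_\mu)=0$ for $\mu\in J$. The growth $|T_\mu|\sim\prod_i\varrho_i^{\mu_i}$ on the ellipse, encoded in $c_\mu$, is what converts the uniform bound $|E(T_\mu)|\le 2$ into a geometrically summable $\ell^2$-sequence; working only on $[-1,1]^D$ with $|T_k|\le 1$ loses exactly this information.
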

  The proof of the theorem is provided in Appendix \ref{sec-proofAsymptotic}. 
  \begin{corollary}\label{cor-Asymptotic_error_Expo}
  Under the assumptions of Theorem \ref{Asymptotic_error_decay_multidim} there exists a constant $C>0$ such that
  \begin{align}\label{absch-TCanaOrder}
 \max_{p\in\OP}\big|\Price^p - I_{\overline{N}}(\Price^{(\cdot)})(p)\big|
 \le 
 C\underline{\varrho}^{-\underline{N}}, 
  \end{align}
  where $\underline{\varrho} = \min\limits_{1\le i\le D}\varrho_i$ and $\underline{N} = \min\limits_{1\le i\le D} N_i$. 
  \end{corollary}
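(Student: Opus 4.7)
The plan is to deduce the bound directly from Theorem \ref{Asymptotic_error_decay_multidim} by estimating each of the ingredients in its right-hand side in a way that exposes the slowest-decaying factor. Because $\underline{\varrho}>1$, the asymptotic rate of convergence is dominated by the direction in which the Bernstein ellipse is thinnest and the polynomial degree smallest, which motivates combining these two into the single exponent $\underline{\varrho}^{-\underline{N}}$.

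Concretely, I would first factor out the quantity $\prod_{j=1}^{D}(1-\varrho_j^{-2})^{-1}$ from the sum in Theorem \ref{Asymptotic_error_decay_multidim}, since it does not depend on the summation index $i$. It remains to bound $\sum_{i=1}^{D}\varrho_i^{-2N_i}$. For every $i$, the inequalities $\varrho_i\ge\underline{\varrho}>1$ and $N_i\ge\underline{N}$ imply
\begin{equation*}
\varrho_i^{-2N_i}\le\underline{\varrho}^{-2N_i}\le\underline{\varrho}^{-2\underline{N}},
\end{equation*}
so that $\sum_{i=1}^{D}\varrho_i^{-2N_i}\le D\,\underline{\varrho}^{-2\underline{N}}$. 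Taking the square root yields a factor $\sqrt{D}\,\underline{\varrho}^{-\underline{N}}$.

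Putting these estimates together, the bound from Theorem \ref{Asymptotic_error_decay_multidim} gives
\begin{equation*}
\max_{p\in\OP}\bigl|\Price^p-I_{\overline{N}}(\Price^{(\cdot)})(p)\bigr|\le 2^{\frac{D}{2}+1}\,V\,\sqrt{D}\,\Bigl(\prod_{j=1}^{D}\frac{1}{1-\varrho_j^{-2}}\Bigr)^{\!\!1/2}\,\underline{\varrho}^{-\underline{N}},
\end{equation*}
so defining $C:=2^{\frac{D}{2}+1}\,V\,\sqrt{D}\,\bigl(\prod_{j=1}^{D}(1-\varrho_j^{-2})^{-1}\bigr)^{1/2}$, which is finite and strictly positive because each $\varrho_j>1$, delivers \eqref{absch-TCanaOrder}.

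There is no real obstacle here: the statement is essentially a repackaging of Theorem \ref{Asymptotic_error_decay_multidim} that isolates the dominant exponential rate. The only step that requires a little care is verifying that the monotonicity argument $\varrho_i^{-2N_i}\le\underline{\varrho}^{-2\underline{N}}$ is valid uniformly in $i$; this works precisely because $\underline{\varrho}>1$, which ensures that enlarging the base and the exponent both decrease the value, so replacing $\varrho_i$ by the smaller $\underline{\varrho}$ and $N_i$ by the smaller $\underline{N}$ yields a legitimate upper bound.
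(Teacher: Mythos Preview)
Your proof is correct and is precisely the natural derivation the paper has in mind; the paper does not spell out a separate proof of this corollary, treating it as an immediate consequence of Theorem~\ref{Asymptotic_error_decay_multidim} via exactly the monotonicity bounds you give.
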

  
  \begin{remark}
  In particular, under the assumptions required by Theorem \ref{Asymptotic_error_decay_multidim} with \mbox{$N=\prod_{i=1}^D (N_i+1)$} denoting the total number of nodes, Corollary \ref{cor-Asymptotic_error_Expo} shows that the error decay is of (sub)exponential order 
$O\big(\varrho^{-\sqrt[D]{N}}\big)$ for some $\varrho>1$.
  \end{remark}
  
In the setting of Theorem \ref{Asymptotic_error_decay_multidim} additionally the derivatives of $\Price^p$ are approximated by the according derivatives of the Chebyshev interpolation. The one-dimensional result is shown in \cite{Tadmor1986} and a multivariate result is derived in \cite{CanutoQuarteroni1982} for functions in Sobolev spaces. 
These results allow us to obtain the Chebyshev approximation of derivatives with no additional cost. 
To state the according convergence results we follow \cite{CanutoQuarteroni1982} and introduce the weighted Sobolev spaces for $\sigma\in \nn$ by
\begin{equation}
W_2^{\sigma,\omega}(\OP) = \left\{\phi\in L^2(\OP): \Vert \phi\Vert_{W_2^{\sigma,\omega}(\OP)}< \infty \right\}\label{Sobolev-Raum},
\end{equation}
with norm
\begin{equation}
\Vert \phi\Vert_{W_2^{\sigma,\omega}(\OP)}^2=\sum_{\vert\alpha\vert\le \sigma} \int_{\OP}\vert\partial^{\alpha}\phi(p)\vert^2 \omega(p) \dd p,\label{Sobolev-Norm}
\end{equation}
wherein $\alpha=(\alpha_1,\dots,\alpha_D)\in \nn_0^D$ is a multiindex and $\partial^\alpha = \partial^{\alpha_1} \cdots\partial^{\alpha_D} $ and the weight function $\omega$ on $\OP$ given by
\begin{equation*}
\omega(x):=\prod_{j=1}^D \omega(\tau^{-1}_{[\underline{p}_j,\overline{p}_j]}(x_j)),\qquad \omega(\tau^{-1}_{[\underline{p}_j,\overline{p}_j]}(x_j)):=(1-\tau^{-1}_{[\underline{p}_j,\overline{p}_j]}(x_j)^2)^{-\frac{1}{2}}
\end{equation*}
with $\tau_{[\underline{p}_j,\overline{p}_j]}(p)=\overline{p}_j + \frac{\underline{p}_j-\olp}{2}\big(1-p\big)$.
Then we apply the result of \cite[Theorem 3.1]{CanutoQuarteroni1982} in the following corollary.
  
\begin{corollary}\label{Corollary_Sensitivities_1}
Under the assumptions of Theorem \ref{Asymptotic_error_decay_multidim} for any $\frac{D}{2}<\sigma\in\nn$ and any $\sigma\geq\mu\in \nn_0$ there exists a constant $C>0$ such that
\begin{align*}
\Vert \Price^{(\cdot)} - I_{\overline{N}}(\Price^{(\cdot)})(\cdot)\Vert_{W_2^{\mu,\omega}(\OP)}\le C N^{2\mu-\sigma}\Vert\Price^{(\cdot)}\Vert_{W_2^{\sigma,\omega}(\OP)},
\end{align*}
\end{corollary}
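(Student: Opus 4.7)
The plan is to derive the estimate as an essentially direct consequence of Theorem~3.1 in \cite{CanutoQuarteroni1982}, whose statement is exactly of the claimed form but on the reference cube $[-1,1]^D$. What must be checked is that the analyticity hypothesis of Theorem~\ref{Asymptotic_error_decay_multidim} places $\Price^{(\cdot)}$ in the weighted Sobolev spaces $W_2^{\sigma,\omega}(\OP)$ for every order $\sigma$, and that the affine pull-back from $[-1,1]^D$ to the hyperrectangle $\OP$ introduces only harmless multiplicative constants.

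First I would verify that $\Price^{(\cdot)} \in W_2^{\sigma,\omega}(\OP)$ for every $\sigma\in\nn_0$. Since $\Price^{(\cdot)}$ extends analytically to the open generalized Bernstein ellipse $B(\OP,\varrho)$ with $|\Price^p|\le V$ there, the multivariate Cauchy integral formula applied on polydiscs contained in $B(\OP,\varrho)$ yields a uniform bound $\|\partial^\alpha \Price^{(\cdot)}\|_{L^\infty(\OP)} \le C_\alpha V$ for every multiindex $\alpha\in\nn_0^D$. Because $\OP$ is compact and the Chebyshev weight $\omega$ is integrable on $\OP$, this gives $\|\Price^{(\cdot)}\|_{W_2^{\sigma,\omega}(\OP)}<\infty$ for all $\sigma$, so the right-hand side of the claimed bound is finite.

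Next I would transport the assertion to the reference cube. Let $\tau:[-1,1]^D\to\OP$ be the componentwise affine transform appearing in the definition of $\omega$, and set $\widetilde{\Price}:=\Price\circ\tau$. The transform maps tensor Chebyshev nodes on $[-1,1]^D$ to the nodes $p^k$ on $\OP$ used in \eqref{eq:Chebynodesd}, so the two interpolants are related by $I_{\overline{N}}(\Price^{(\cdot)}) = I_{\overline{N}}(\widetilde{\Price})\circ\tau^{-1}$. The weight $\omega$ on $\OP$ is defined precisely as the pull-back of the reference Chebyshev weight, so the chain rule combined with the constant Jacobian of $\tau$ yields equivalences
\begin{equation*}
c\, \|\phi\|_{W_2^{\mu,\omega}(\OP)} \le \|\phi\circ\tau\|_{W_2^{\mu,\omega}([-1,1]^D)} \le C\,\|\phi\|_{W_2^{\mu,\omega}(\OP)},
\end{equation*}
with constants depending only on the edge lengths $\overline{p}_i-\underline{p}_i$, and analogously for $\sigma$. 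Applying Theorem~3.1 of \cite{CanutoQuarteroni1982} to $\widetilde{\Price}$, which is applicable because $\sigma>D/2$ guarantees the Sobolev embedding needed to make pointwise interpolation meaningful, and then transferring the estimate back via $\tau^{-1}$ yields the stated inequality.

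The main obstacle, more of a bookkeeping matter than a conceptual one, is keeping track that the constants arising from the affine change of variables and from the Cauchy estimates can be absorbed into a single $C$ independent of $\overline{N}$ and of the particular multi-index $\mu\le\sigma$; the analyticity input, together with the tensor structure of the weight, is what ultimately guarantees this.
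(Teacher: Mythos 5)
Your proposal is correct and follows essentially the same route as the paper: reduce to the reference cube $[-1,1]^D$ via the componentwise affine map, apply Theorem~3.1 of \cite{CanutoQuarteroni1982} there, and absorb the constants arising from the change of variables (the paper computes the scaling factor $2^{|\alpha|}/(\overline{p}-\underline{p})^{|\alpha|}$ explicitly, which is your constant-Jacobian chain-rule step). Your Cauchy-estimate justification that $\Price^{(\cdot)}\in W_2^{\sigma,\omega}(\OP)$ is a welcome detail that the paper only asserts.
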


\begin{proof}
In our setting we have $\OP\subset\rr^D$ of hyperrectangular structure. Under the assumptions of Theorem \ref{Asymptotic_error_decay_multidim} it follows that $p\mapsto\Price^p\in W_2^{\sigma}(\OP)$ and therewith $p\mapsto\Price^p\in W_2^{\sigma,\omega}(\OP)$. Before we apply \cite[Theorem 3.1]{CanutoQuarteroni1982}, which assumes $\OP=[-1,1]^D$, we investigate how the linear transformation $\tau_{\OP}$, as introduced in the proof of Theorem \ref{Asymptotic_error_decay_multidim}, influences the derivatives. 
Let $p\mapsto \Price^p$ be a function on $\OP$. We set $\widehat{h}(p)=\Price^p\circ\tau_{\OP}(p)$. Further, let $\widehat{I}_{\overline{N}}(\widehat{h})(p)$ be the Chebyshev interpolation of $\widehat{h}(p)$ on $[-1,1]^D$. Then, it directly follows
\begin{align*}
\Price^p - I_{\overline{N}}(\Price^{(\cdot)})(p)=\left(\widehat{h}(\cdot) -\widehat{I}_{\overline{N}}(\widehat{h})(\cdot)  \right)\circ \tau^{-1}_{\OP}(p).
\end{align*}
First, let us assume $D=1$, i.e. $\OP=[\underline{p},\overline{p}]$, and let $\alpha\in\mathbb{N}_0$. For the partial derivatives it holds
\begin{align*}
\partial^{\alpha}\Price^p - \partial^{\alpha}I_{\overline{N}}(\Price^{(\cdot)})(p)&=\partial^{\alpha}\left(\Price^p -I_{\overline{N}}(\Price^{(\cdot)})(p)\right)\\
&=\partial^{\alpha}\left(	\left(\widehat{h}(\cdot) -\widehat{I}_{\overline{N}}(\widehat{h})(\cdot)  \right)\circ \tau^{-1}_{\OP}(p)\right)\\
&=\partial^{\alpha-1}	\left(\partial^1\widehat{h}(\tau^{-1}_{\OP}(p)) -\partial^1\widehat{I}_{\overline{N}}(\widehat{h}^{(\cdot)})(\tau^{-1}_{\OP}(p))  \right)\\
&=\partial^{\alpha-1}\frac{2}{\overline{p}-\underline{p}}\left(\left[\partial^1\widehat{h}\right](\tau^{-1}_{\OP}(p)) -\left[\partial^1\widehat{I}_{\overline{N}}(\widehat{h}^{(\cdot)})\right](\tau^{-1}_{\OP}(p))  \right).
\end{align*}
Repeating this step iteratively yields
\begin{align*}
\partial^{\alpha}\Price^p - \partial^{\alpha}I_{\overline{N}}(\Price^{(\cdot)})(p)=\frac{2^{\alpha}}{(\overline{p}-\underline{p})^{\alpha}}\left(\left[\partial^{\alpha}\widehat{h}\right](\tau^{-1}_{\OP}(p)) -\left[\partial^{\alpha}\widehat{I}_{\overline{N}}(\widehat{h}^{(\cdot)})\right](\tau^{-1}_{\OP}(p))  \right).
\end{align*}
Therewith, the error on $[-1,1]$ is scaled with a factor $\frac{2^{\alpha}}{(\overline{p}-\underline{p})^{\alpha}}$. Extending this to the D-variate case with, this analogously results with $\alpha=(\alpha_1,\dots,\alpha_D)\in \nn_0^D$ is a multiindex and $\partial^\alpha = \partial^{\alpha_1} \cdots\partial^{\alpha_D} $
\begin{align*}
\partial^{\alpha}\Price^p - \partial^{\alpha}I_{\overline{N}}&(\Price^{(\cdot)})(p)=\\
&\prod_{i=1}^D\frac{2^{\vert\alpha_i\vert}}{(\overline{p}_i-\underline{p}_i)^{\vert\alpha_i\vert}}\left(\left[\partial^{\alpha}\widehat{h}\right](\tau^{-1}_{\OP}(p)) -\left[\partial^{\alpha}\widehat{I}_{\overline{N}}(\widehat{h}^{(\cdot)})\right](\tau^{-1}_{\OP}(p))  \right).
\end{align*}
 From Theorem\tild 3.1 in \cite{CanutoQuarteroni1982} the assertion follows directly for $\widehat{h}(\cdot)\text{ on }\OP=[-1,1]^D$, i.e.
for any $\frac{D}{2}<\sigma\in\nn$ and any $\sigma\geq\mu\in \nn_0$ there exists a constant $\tilde{C}>0$ such that
\begin{align}
\Vert \widehat{h}(\cdot) -\widehat{I}_{\overline{N}}(\widehat{h})(\cdot) \Vert_{W_2^{\mu,\omega}(\OP)}\le \tilde{C} N^{2\mu-\sigma}\Vert\widehat{h}(\cdot)\Vert_{W_2^{\sigma,\omega}(\OP)},\label{Canuto_1}
\end{align}  
For arbitrary $\OP$ the constant from \ref{Canuto_1} has to be multiplied with the according factor resulting from the linear transformation $\tau_{\OP}$. 
\end{proof}
The result in Corollary \ref{Corollary_Sensitivities_1} is given in terms of weighted Sobolev norms. In the following remark we connect the approximation error in the weighted Sobolev norm to the $C^l(\OP)$ norm, where $C^l(\OP)$ is the Banach space of all functions $u$ in $\OP$ such that $u$ and $\partial^\alpha u$ with $|\alpha|\leq l$ are uniformly continuous in $\OP$ and the norm
	\begin{equation*}
		\lVert u \rVert_{C^l(\OP)} = \max_{|\alpha|\leq l}\max\limits_{p\in\OP}|\partial^\alpha u(p)|
	\end{equation*}
is finite.
\begin{corollary}
Under the assumptions of Theorem \ref{Asymptotic_error_decay_multidim} for any $\frac{D}{2}<\sigma\in\nn$ and any $\sigma\geq \mu\in \nn_0$ and $l\in \nn_0$ with $\mu-l>\frac{D}{2}$, there exists a constant $\bar{C}(\sigma)>0$ depending on $\sigma$, such that
\begin{align*}
\Vert \Price^{(\cdot)} - I_{\overline{N}}(\Price^{(\cdot)})(\cdot)\Vert_{C^{l}(\OP)}\le \bar{C}(\sigma) N^{2\mu-\sigma}\max_{\vert\alpha\vert\le \sigma}\sup_{p\in\OP}\vert\partial^{\alpha}\Price^p\vert.
\end{align*}
\end{corollary}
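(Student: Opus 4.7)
The plan is to obtain the $C^l$-estimate by sandwiching the classical Sobolev embedding between the weighted Sobolev error bound already supplied by Corollary \ref{Corollary_Sensitivities_1} on the left and a crude pointwise-to-weighted estimate on the right. In other words, I would apply Corollary \ref{Corollary_Sensitivities_1} to the error $e_{\overline{N}}:=\Price^{(\cdot)} - I_{\overline{N}}(\Price^{(\cdot)})$ and then feed the result into an embedding $W_2^{\mu,\omega}(\OP)\hookrightarrow C^{l}(\OP)$, and separately control $\Vert \Price^{(\cdot)}\Vert_{W_2^{\sigma,\omega}(\OP)}$ by $\max_{|\alpha|\le\sigma}\sup_{p\in\OP}|\partial^{\alpha}\Price^p|$.

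For the embedding step, the key observation is that every factor of the Chebyshev weight satisfies $(1-\tau^{-1}_{[\underline{p}_j,\overline{p}_j]}(x_j)^2)^{-1/2}\ge 1$ on $\OP$, so $\omega\ge 1$ pointwise. Hence $\Vert f\Vert_{W_2^{\mu}(\OP)}\le \Vert f\Vert_{W_2^{\mu,\omega}(\OP)}$ for every $f$, and since $\OP$ is a hyperrectangle (hence a Lipschitz domain), the standard Sobolev embedding theorem yields a constant $C_{\mathrm{emb}}$, independent of $\overline{N}$, with $\Vert f\Vert_{C^{l}(\OP)}\le C_{\mathrm{emb}}\Vert f\Vert_{W_2^{\mu,\omega}(\OP)}$ whenever $\mu-l>\frac{D}{2}$. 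Applying this to $f=e_{\overline{N}}$ together with Corollary \ref{Corollary_Sensitivities_1} gives
\begin{equation*}
\Vert e_{\overline{N}}\Vert_{C^l(\OP)} \le C_{\mathrm{emb}}\,C\, N^{2\mu-\sigma}\Vert\Price^{(\cdot)}\Vert_{W_2^{\sigma,\omega}(\OP)}.
\end{equation*}

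For the right-hand side I would use the elementary bound
\begin{equation*}
\Vert\Price^{(\cdot)}\Vert_{W_2^{\sigma,\omega}(\OP)}^{2} \le \#\{\alpha\in\nn_0^D:|\alpha|\le\sigma\}\cdot\Bigl(\max_{|\alpha|\le\sigma}\sup_{p\in\OP}|\partial^{\alpha}\Price^p|\Bigr)^{2}\cdot\int_{\OP}\omega(p)\,\dd p,
\end{equation*}
where the last integral factorizes as $\prod_{j=1}^{D}\frac{\overline{p}_j-\underline{p}_j}{2}\int_{-1}^{1}(1-y^2)^{-1/2}\dd y=\prod_{j=1}^{D}\frac{\pi(\overline{p}_j-\underline{p}_j)}{2}<\infty$. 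Collecting $C_{\mathrm{emb}}$, $C$, the cardinality of multi-indices of order at most $\sigma$ and the finite weighted volume into a single constant $\bar{C}(\sigma)$ then produces the claimed inequality.

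The only delicate point is confirming that the Sobolev embedding constant does not depend on $\overline{N}$; this is automatic once the embedding is applied to a fixed function, but I would write it out to be certain. Everything else is standard bookkeeping: the interplay with the linear transformation $\tau_{\OP}$ has already been dealt with in the proof of Corollary \ref{Corollary_Sensitivities_1}, so it need not be repeated here, and its effect is again absorbed into $\bar{C}(\sigma)$.
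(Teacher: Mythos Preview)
Your proposal is correct and follows essentially the same route as the paper: apply Corollary~\ref{Corollary_Sensitivities_1}, use $\omega\ge 1$ to drop to the unweighted Sobolev norm, invoke the Sobolev embedding $W_2^{\mu}(\OP)\hookrightarrow C^{l}(\OP)$ for $\mu-l>\frac{D}{2}$, and bound $\Vert\Price^{(\cdot)}\Vert_{W_2^{\sigma,\omega}(\OP)}$ by the pointwise maximum of derivatives times the finite weighted volume of $\OP$. Your computation of $\int_{\OP}\omega(p)\,\dd p=\prod_{j=1}^{D}\frac{\pi(\overline{p}_j-\underline{p}_j)}{2}$ is in fact more careful than the paper's, which writes $\pi^D$ (valid only for $\OP=[-1,1]^D$); either way the constant is absorbed into $\bar{C}(\sigma)$.
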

\begin{proof}
In the setting of Corollary \ref{Corollary_Sensitivities_1}, we start with the estimation of the approximation error in the weighted Sobolev norms,
\begin{align}
\Vert \Price^{(\cdot)} - I_{\overline{N}}(\Price^{(\cdot)})(\cdot)\Vert_{W_2^{\mu,\omega}(\OP)}\le C N^{2\mu-\sigma}\Vert\Price^{(\cdot)}\Vert_{W_2^{\sigma,\omega}(\OP)}.\label{Norms_both_sides}
\end{align}
On $\OP$ it holds that $w(p)\ge1$ and therewith we can deduce for the Sobolev norm with a constant weight of 1,
\begin{align*}
\Vert \Price^{(\cdot)} - I_{\overline{N}}(\Price^{(\cdot)})(\cdot)\Vert_{W_2^{\mu,\omega}(\OP)}\ge\Vert \Price^{(\cdot)} - I_{\overline{N}}(\Price^{(\cdot)})(\cdot)\Vert_{W_2^{\mu,1}(\OP)}.
\end{align*}
With $W_2^{\mu}(\OP)$ the usual Sobolev space,
\begin{equation*}
W_2^{\mu}(\OP) = \left\{\phi\in L^2(\OP): \Vert \phi\Vert_{W_2^{\mu}(\OP)}< \infty \right\},\quad
\Vert \phi\Vert_{W_2^{\mu}(\OP)}^2=\sum_{\vert\alpha\vert\le \mu} \int_{\OP}\vert\partial^{\alpha}\phi(p)\vert^2 \dd p,
\end{equation*}
Corollary 6.2 from \cite{Wloka-english} directly yields that for any $l$ with $\mu-l>\frac{D}{2}$ there exists a constant $\tilde{C}$ such that
\begin{align}
\Vert \Price^{(\cdot)} - I_{\overline{N}}(\Price^{(\cdot)})(\cdot)\Vert_{C^{l}(\OP)}\le\tilde{C}\Vert \Price^{(\cdot)} - I_{\overline{N}}(\Price^{(\cdot)})(\cdot)\Vert_{W_2^{\mu}(\OP)}.\label{Norms_left_side}
\end{align}
In formula \eqref{Norms_left_side} we have derived a lower bound for the left hand side of expression \eqref{Norms_both_sides}. Next, we will find an upper bound for the right hand side of \eqref{Norms_both_sides}. From the definition of the weighted Sobolev norm, see \eqref{Sobolev-Norm}, it follows
\begin{align*}
\Vert \Price^{(\cdot)}\Vert_{W_2^{\sigma,\omega}(\OP)}&=\sqrt{\sum_{\vert\alpha\vert\le \sigma} \int_{\OP}\vert\partial^{\alpha}\Price^p\vert^2 \omega(p) \dd p}\\
&\le \sqrt{\sum_{\vert\alpha\vert\le \sigma} \sup_{p\in\OP}\vert\partial^{\alpha}\Price^p\vert^2 \int_{\OP} \omega(p) \dd p}.
\end{align*}
Here, we apply $\int_{\OP} \omega(p) \dd p=\pi^D$ and that there exists a constant $\alpha^2(\sigma)$ depending on $\sigma$ such that
\begin{align}
\Vert \Price^p\Vert_{W_2^{\sigma,\omega}(\OP)}&\le \sqrt{\alpha^2(\sigma)\max_{\vert\alpha\vert\le \sigma}\sup_{p\in\OP}\vert\partial^{\alpha}\Price^p\vert^2 \pi^D}\notag\\
&= \alpha(\sigma)\max_{\vert\alpha\vert\le \sigma}\sup_{p\in\OP}\vert\partial^{\alpha}\Price^p\vert \pi^\frac{D}{2}.\label{Norms_right_side}
\end{align}
Finally, using \eqref{Norms_left_side} and \eqref{Norms_right_side} in \eqref{Norms_both_sides} yields an estimate of the approximation error in the $C^l(\OP)$ norm,
\begin{align*}
\frac{1}{\tilde{C}}\Vert \Price^{(\cdot)} - I_{\overline{N}}(\Price^{(\cdot)})(\cdot)\Vert_{C^{l}(\OP)}\le C N^{2\mu-\sigma}\alpha(\sigma)\max_{\vert\alpha\vert\le \sigma}\sup_{p\in\OP}\vert\partial^{\alpha}\Price^p\vert \pi^\frac{D}{2}.
\end{align*}
Collecting all constants in $\bar{C}(\sigma)$, we achieve 
\begin{align*}
\Vert \Price^{(\cdot)} - I_{\overline{N}}(\Price^{(\cdot)})(\cdot)\Vert_{C^{l}(\OP)}\le \bar{C}(\sigma) N^{2\mu-\sigma}\max_{\vert\alpha\vert\le \sigma}\sup_{p\in\OP}\vert\partial^{\alpha}\Price^p\vert.
\end{align*}
\end{proof}

\section{Exponential Convergence of Chebyshev \\ Interpolation for POP}\label{sec-pop} 

In this section we embed the multivariate Chebyshev interpolation into the option pricing framework. We provide sufficient conditions under which option prices depend analytically on the parameters. 

Analytic properties of option prices can be conveniently studied in terms of Fourier transforms. First, Fourier representations of option prices are explicitly available for a large class of both option types and asset models. Second, Fourier transformation unveils the analytic properties of both the payoff structure and the distribution of the underlying stochastic quantity in a beautiful way. By contrast, if option prices are represented as expectations, their analyticity in the parameters is hidden.
For example the function $K\mapsto(S_T-K)^+$ is not even differentiable, whereas the Fourier transform of the dampened call payoff function evidently is analytic in the strike, compare Table \ref{table-exEuropayoff} on page~\pageref{table-exEuropayoff}.

\subsection{Conditions for Exponential Convergence}

Let us first introduce a general option pricing framework. We consider option prices of the form
\begin{align}\label{def-Price}
\Price^{p=(p^1,p^2)}=E\big(f^{p^1}(X^{p^2})\big)
\end{align}
 where $f^{p^1}$ is a parametrized family of 
measurable payoff functions $f^{p^1}:\R^d\rightarrow\R_+$ with payoff parameters $p^1\in\OP^1$ and $X^{p^2}$ is a family of $\R^d$-valued random variables with model parameters $p^2\in\OP^2$. The parameter set
\begin{align}\label{def-para}
p=(p^1,p^2)\in\OP=\OP^1\times\OP^2\subset \rr^D
\end{align}
is again of hyperrectangular structure, i.e.\ $\OP^1=[\underline{p}_1,\olp[1]]\times\ldots\times[\underline{p}_m,\olp[m]]$ and $\OP^2=[\underline{p}_{m+1},\olp[m+1]]\times\ldots\times[\underline{p}_D,\olp[D]]$ for some $1\le m \le D$ and real $\underline{p}_i\le\olp[i]$ for all $i=1,\ldots,D$.

Typically we are given a parametrized $\R^d$-valued driving
stochastic process $H^{p'}$ with $S^{p'}$ being the vector of asset price processes modeled as an exponential of $H^{p'}$, i.e.\ 
\begin{align}\label{asset-Rd}
S^{p',\icc}_{t}=S^{p',\icc}_0\exp(H^{p',\icc}_{t}), \qquad 0\leq t\leq T,\quad \;1\le \icc\le d,
\end{align}
and $X^{p^2}$ is an $\OF_T$-measurable $\rr^d$-valued
random variable, possibly depending on the history of the $d$
driving processes, i.e.\ $p^2=(T,p')$ and
\begin{align*}
X^{p^2}:= \Psi\big(H_t^{p'},\,0\le t\le T\big),
\end{align*}
where $\Psi$ is an $\R^d$-valued measurable functional. 

We now focus on the case that the price \eqref{def-Price} is given in terms of Fourier transforms. This enables us to provide sufficient conditions under which the parametrized prices have an analytic extension to an appropriate generalized Bernstein ellipse.
For most relevant options, the payoff profile $f^{p^1}$ is not integrable and its Fourier transform over the real axis is not well defined. 
Instead, there exists an exponential dampening factor $\eta\in\rrd$ such that $\ee{\langle \eta,\cdot\rangle}f^{p^1}\in L^1(\rr^d)$. We therefore introduce exponential weights in our set of conditions and denote the Fourier transform of $g\in L^1(\rrd)$ by
\begin{align*}
\hat{g}(z):= \int_{\rrd} \ee{i\skl z, x\skr} g(x) \dd x
\end{align*}
and we denote the Fourier transform of $\ee{\langle \eta,\cdot\rangle}f\in L^1(\rr^d)$ by $\hat{f}(\cdot-i\eta)$. The exponential weight of the payoff will be compensated by exponentially weighting the distribution of $X^{p^2}$ and that weight will reappear in the argument of $\varphi^{p^2}$, the characteristic function of $X^{p^2}$.

\begin{conditions}
\label{conditionsA1234}
\emph{ Let parameter set $\OP=\OP^1\times\OP^2\subset \rr^D$ of hyperrectangular structure as in \eqref{def-para}. 
Let $\varrho\in(1,\infty)^D$ and denote $\varrho^1:=(\varrho_1,\ldots,\varrho_m)$ and $\varrho^2:=(\varrho_{m+1},\ldots,\varrho_D)$ and let weight $\eta\in\rrd$.
\vspace{-1.5ex}
\begin{enumerate}[leftmargin=3em, label=(A\arabic{*}),widest=(A4)]\label{A2-A4}
\item
For every $p^1\in \OP^1$ the mapping $x\mapsto \ee{\skl \eta,x\skr}f^{p^1}(x)$ is in $L^1(\rrd$).\label{A1}
\item
For every $z\in\rrd$ the mapping $p^1\mapsto \widehat{f^{p^1}}(z-i\eta)$ is analytic in the generalized Bernstein ellipse $B(\OP^1,\varrho^1)$ and there are constants $c_1,c_2>0$ such that
$\sup_{p^1\in B(\OP^1,\varrho^1)}|\widehat{f^{p^1}}(-z-i\eta)|\le c_1e^{c_2|z|}$ for all $z\in\rrd$. \label{A2}
\item
For every $p^2\in \OP^2$ 
the exponential moment condition $E\big(\ee{-\langle \eta,X^{p^2}\rangle}\big)<\infty$ holds.\label{A3}
\item
For every $z\in\rrd$ the mapping $p^2\mapsto \varphi^{p^2}(z+i\eta)$ is analytic in the generalized Bernstein ellipse  $B(\OP^{2},\varrho^2)$ and there are constants $\alpha\in(1,2]$ and $c_1,c_2>0$ such that $\sup_{p^2 \in B(\OP^{2},\varrho^2)}|\varphi^{p^2}(z+i\eta)|\le c_1\ee{-c_2 |z|^\alpha}$ for all $z\in\rrd$.
 \label{A4}
\end{enumerate}
}
\end{conditions}
Conditions (A1)--(A4) are satisfied for a large class of payoff functions and asset models, see Sections \ref{sec-Payoffs} and \ref{sec-models}.

\begin{theorem}\label{ConvergenceFourierPrices}
Let $\varrho\in(1,\infty)^D$ and weight $\eta\in \rrd$. Under conditions (A1)--(A4) we have
\begin{align*}
 \max_{p\in\OP}\big|\Price^p - I_{\overline{N}}(\Price^{(\cdot)})(p)\big|
\le 2^{\frac{D}{2}+1}\cdot V \cdot\left(\sum_{i=1}^D\varrho_i^{-2N_i}\prod_{j=1}^D\frac{1}{1-\varrho_j^{-2}}\right)^{\frac{1}{2}}.
  \end{align*}
\end{theorem}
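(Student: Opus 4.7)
The plan is to reduce the statement to Theorem \ref{Asymptotic_error_decay_multidim}. That is, I will show that under (A1)--(A4) the map $p\mapsto\Price^p$ admits an analytic extension to the generalized Bernstein ellipse $B(\OP,\varrho)$, and that it is uniformly bounded by some finite $V$ on this set. Then the desired bound is exactly the one delivered by Theorem \ref{Asymptotic_error_decay_multidim}.

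First, I would establish the Fourier representation of the price. By (A1) the mapping $x\mapsto\ee{\langle\eta,x\rangle}f^{p^1}(x)$ lies in $L^1(\rrd)$, and by (A3) the measure $\ee{-\langle\eta,x\rangle}P_{X^{p^2}}(\dd x)$ is finite. A standard Parseval argument (in the spirit of the dampened Fourier pricing approach of Eberlein--Glau--Papapantoleon) then yields, for every $p=(p^1,p^2)\in\OP$,
\begin{equation*}
\Price^p \;=\; (2\pi)^{-d}\int_{\rrd}\widehat{f^{p^1}}(-z-i\eta)\,\varphi^{p^2}(z+i\eta)\,\dd z.
\end{equation*}
The integrability needed to make this identity rigorous is supplied by combining the polynomial-exponential bound in (A2) with the stretched-exponential decay in (A4): the product is dominated by $c\,\ee{c_2|z|-c_2'|z|^\alpha}$, which is integrable over $\rrd$ because $\alpha>1$.

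Next, I extend $p$ to $B(\OP,\varrho)=B(\OP^1,\varrho^1)\times B(\OP^2,\varrho^2)$ and analyze the integrand pointwise in $z$. By (A2), for every fixed $z$ the map $p^1\mapsto\widehat{f^{p^1}}(-z-i\eta)$ is analytic on $B(\OP^1,\varrho^1)$, and by (A4) the map $p^2\mapsto\varphi^{p^2}(z+i\eta)$ is analytic on $B(\OP^2,\varrho^2)$. Since the integrand factorizes, it is jointly analytic in $p=(p^1,p^2)$ on $B(\OP,\varrho)$. Moreover, the uniform bounds in (A2) and (A4) give
\begin{equation*}
\sup_{p\in B(\OP,\varrho)}\bigl|\widehat{f^{p^1}}(-z-i\eta)\,\varphi^{p^2}(z+i\eta)\bigr|\;\le\; c_1c_1'\,\ee{c_2|z|-c_2'|z|^\alpha},
\end{equation*}
which is an integrable majorant independent of $p$. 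By the standard theorem on analyticity of parameter integrals (Morera plus Fubini, applied coordinate by coordinate), it follows that $p\mapsto\Price^p$ extends to a jointly analytic function on $B(\OP,\varrho)$, and that
\begin{equation*}
V:=\sup_{p\in B(\OP,\varrho)}|\Price^p|\;\le\; (2\pi)^{-d}c_1c_1'\int_{\rrd}\ee{c_2|z|-c_2'|z|^\alpha}\,\dd z\;<\;\infty.
\end{equation*}

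With analyticity on $B(\OP,\varrho)$ and the uniform bound $V$ in hand, the hypotheses of Theorem \ref{Asymptotic_error_decay_multidim} are met, and the claimed inequality follows immediately. The only genuinely delicate step is the second one: justifying the transfer of the pointwise analyticity of the integrand to analyticity of the integral on the full multivariate Bernstein ellipse. This is precisely where the stretched-exponential decay $\alpha>1$ in (A4) does the work—without it, the at-most-exponential growth in (A2) would not be absorbed, and no uniform dominating function would be available.
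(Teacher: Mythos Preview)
Your proof is correct and follows essentially the same route as the paper: establish the Fourier representation via the Eberlein--Glau--Papapantoleon result, use the bounds in (A2) and (A4) to obtain a uniform integrable majorant, transfer analyticity of the integrand to analyticity of the integral by Morera plus Fubini coordinatewise, and then invoke Theorem~\ref{Asymptotic_error_decay_multidim}. Your write-up is in fact slightly more explicit than the paper's about the dominating function and the resulting finite bound~$V$.
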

The proof of the theorem builds on the following proposition that can be derived from Eberlein, Glau and Papapantoleon (2010, Theorem 3.2)\nocite{EberleinGlauPapapantoleon2010a}, observing that the proof solely uses that $z\mapsto \widehat{f^{p^1}}(-z-i\eta)  \varphi^{p^2}(z+i\eta)$ belongs to $L^1(\rrd)$ instead of the slightly stronger assumption that $z\mapsto \varphi^{p^2}(z+i\eta)$ belongs to $L^1(\rrd)$ which is imposed there. 
\begin{proposition}\label{FourierPrices}
Assume conditions (A1), (A3) and that the mapping $z\mapsto \widehat{f^{p^1}}(-z-i\eta)  \varphi^{p^2}(z+i\eta)$ belongs to $L^1(\rrd)$ for every $p=(p^1,p^2)\in\OP$, then the option prices~\eqref{def-Price} have for every $p=(p^1,p^2)\in\OP$ the Fourier transform based representation 
\begin{align}\label{FT-Price}
\Price^p=\frac{1}{(2\pi)^d} \int_{\rrd + i\eta}  \widehat{f^{p^1}}(-z) \varphi^{p^2}(z) \dd z.
\end{align}
\end{proposition}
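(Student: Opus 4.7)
The goal is to verify the hypotheses of Theorem~\ref{Asymptotic_error_decay_multidim} for the map $p\mapsto \Price^p$, namely (i) that it admits an analytic extension to the generalized Bernstein ellipse $B(\OP,\varrho)$ and (ii) that it is uniformly bounded there by some constant $V$. Once these are established the desired error estimate follows by a direct application of Theorem~\ref{Asymptotic_error_decay_multidim}. The main tool is the Fourier representation supplied by Proposition~\ref{FourierPrices}, which reduces the problem to complex-analytic statements about the integrand.

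\textbf{Step 1: Integrability and Fourier representation.} First I would check the hypothesis of Proposition~\ref{FourierPrices} for every real $p=(p^1,p^2)\in\OP$. Applying the bounds from (A2) and (A4) at $p^1\in\OP^1\subset B(\OP^1,\varrho^1)$ and $p^2\in\OP^2\subset B(\OP^2,\varrho^2)$ yields
\[
\bigl|\widehat{f^{p^1}}(-z-i\eta)\,\varphi^{p^2}(z+i\eta)\bigr|\le c_1^2\,e^{c_2|z|-c_2'|z|^\alpha}
\]
with $\alpha\in(1,2]$, which is integrable on $\rrd$. Hence Proposition~\ref{FourierPrices} delivers
\begin{equation*}
\Price^p=\frac{1}{(2\pi)^d}\int_{\rrd}\widehat{f^{p^1}}(-z-i\eta)\,\varphi^{p^2}(z+i\eta)\dd z \qquad \text{for every } p\in\OP.
\end{equation*}

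\textbf{Step 2: Analytic extension to $B(\OP,\varrho)$.} I would now allow $p^1$ and $p^2$ to be complex and range over $B(\OP^1,\varrho^1)$ and $B(\OP^2,\varrho^2)$ respectively. By (A2) and (A4), for fixed $z\in\rrd$ the integrand is a holomorphic function of $p\in B(\OP,\varrho)$ (the product of two holomorphic factors in disjoint variables). Moreover, the uniform bounds in (A2) and (A4) combined give, for every $p\in B(\OP,\varrho)$,
\[
\bigl|\widehat{f^{p^1}}(-z-i\eta)\,\varphi^{p^2}(z+i\eta)\bigr|\le c_1^2\,e^{c_2|z|-c_2'|z|^\alpha}\in L^1(\rrd),
\]
so the same dominating integrable function works on the whole ellipse. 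Holomorphy of the extended map
\[
p\mapsto \widetilde{\Price}^{\,p}:=\frac{1}{(2\pi)^d}\int_{\rrd}\widehat{f^{p^1}}(-z-i\eta)\,\varphi^{p^2}(z+i\eta)\dd z
\]
on $B(\OP,\varrho)$ then follows by standard arguments (differentiation under the integral sign in each coordinate, or Morera's theorem combined with Fubini). Since $\widetilde{\Price}^{\,p}$ coincides with $\Price^p$ on the real hyperrectangle $\OP$ by Step 1, it is the desired analytic extension.

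\textbf{Step 3: Uniform bound and conclusion.} The same dominating function shows that
\[
V:=\sup_{p\in B(\OP,\varrho)}\bigl|\widetilde{\Price}^{\,p}\bigr|\le \frac{c_1^2}{(2\pi)^d}\int_{\rrd} e^{c_2|z|-c_2'|z|^\alpha}\dd z <\infty,
\]
so all assumptions of Theorem~\ref{Asymptotic_error_decay_multidim} are met. That theorem then yields exactly the claimed bound. The main conceptual obstacle is Step~2: one must carefully combine the exponential growth bound of $\widehat{f^{p^1}}$ with the stronger-than-exponential decay of $\varphi^{p^2}$ (guaranteed by $\alpha>1$) to obtain a $z$-integrable majorant that is valid \emph{uniformly} over the complex parameters in $B(\OP,\varrho)$; this is precisely what conditions (A2) and (A4) are engineered to provide, and it is what legitimizes interchanging limit/derivative and integral to transfer analyticity of the integrand to analyticity of $\Price^{(\cdot)}$.
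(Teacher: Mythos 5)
You have proved the wrong statement. The proposition you were asked to establish is the Fourier \emph{representation} \eqref{FT-Price} itself, under the hypotheses (A1), (A3) and integrability of $z\mapsto \widehat{f^{p^1}}(-z-i\eta)\,\varphi^{p^2}(z+i\eta)$ — note that (A2) and (A4) are not even assumed here. What you have written is instead a proof of the convergence bound of Theorem~\ref{ConvergenceFourierPrices}: you verify analyticity and boundedness on the generalized Bernstein ellipse and then invoke Theorem~\ref{Asymptotic_error_decay_multidim}. Worse, your Step~1 explicitly \emph{uses} Proposition~\ref{FourierPrices} as an input (``Hence Proposition~\ref{FourierPrices} delivers\dots''), so as a proof of that proposition the argument is circular. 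Nothing in your three steps addresses why $E\big(f^{p^1}(X^{p^2})\big)$ equals the contour integral $\frac{1}{(2\pi)^d}\int_{\rrd+i\eta}\widehat{f^{p^1}}(-z)\varphi^{p^2}(z)\dd z$.

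What is actually needed is a Parseval/Fubini argument: by (A1) the dampened payoff $g(x):=\ee{\skl\eta,x\skr}f^{p^1}(x)$ lies in $L^1(\rrd)$, by (A3) the measure $\ee{-\skl\eta,x\skr}P_{X^{p^2}}(\dd x)$ is finite, and the assumed integrability of the product $\widehat{f^{p^1}}(-\cdot-i\eta)\,\varphi^{p^2}(\cdot+i\eta)$ is precisely what justifies interchanging the Fourier inversion of $g$ with the integration against that finite measure; identifying $\widehat{g}(-u)=\widehat{f^{p^1}}(-u-i\eta)$ and $E\big(\ee{i\skl u,X^{p^2}\skr-\skl\eta,X^{p^2}\skr}\big)=\varphi^{p^2}(u+i\eta)$ and shifting the contour to $\rrd+i\eta$ yields \eqref{FT-Price}. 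The paper does not spell this out but derives the proposition from Theorem~3.2 of Eberlein, Glau and Papapantoleon~(2010), observing that their proof only uses integrability of the product $\widehat{f^{p^1}}(-\cdot-i\eta)\,\varphi^{p^2}(\cdot+i\eta)$ rather than of $\varphi^{p^2}(\cdot+i\eta)$ alone. For what it is worth, your Steps~2 and~3 do reproduce the essential content of the paper's proof of Theorem~\ref{ConvergenceFourierPrices} (the paper uses Morera plus Fubini rather than differentiation under the integral), but that is not the statement under review.
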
 
We are now in a position to prove Theorem \ref{ConvergenceFourierPrices}.
\begin{proof}[Proof of Theorem \ref{ConvergenceFourierPrices}]
In view of Theorem \ref{Asymptotic_error_decay_multidim}, it suffices to show that the mapping $p\mapsto \Price^p$ has an analytic extension to the generalized Bernstein ellipse $B(\OP,\varrho)$. Thanks to Proposition \ref{FourierPrices}, we have the following Fourier representation of the option prices,
\begin{align*}
\Price^p=\frac{1}{(2\pi)^d} \int_{\rrd + i\eta}  \widehat{f^{p^1}}(-z) \varphi^{p^2}(z) \dd z.
\end{align*}
Due to assumptions (A2) and (A4) the mapping
\[
p=(p^1,p^2)\mapsto \widehat{f^{p^1}}(-z) \varphi^{p^2}(z)
\]
has an analytic extension to $B(\OP,\varrho)$.
Let $\gamma$ be a contour of a compact triangle in the interior of $B([\underline{p}_i,\olp[i]],\varrho_i)$ for arbitrary $i=1,\ldots,D$. Then thanks to assumption~(A2) and~(A4) we may apply Fubini's theorem and obtain
\begin{align*}
\int_\gamma \Price^{(p_1,\ldots,p_D)}(z)\dd p_i &= \frac{1}{(2\pi)^d}  \int_\gamma \int_{\rrd + i\eta}  \widehat{f^{p^1}}(-z) \varphi^{p^2}(z) \dd z \dd p_i\\
& =  \frac{1}{(2\pi)^d}  \int_{\rrd + i\eta}  \int_\gamma  \widehat{f^{p^1}}(-z) \varphi^{p^2}(z) \dd p_i  \dd z=0.
\end{align*}
Moreover, thanks to assumptions (A2) and (A4), dominated convergence shows continuity of $p\mapsto \Price^p$ in  $B(\OP,\varrho)$ which yields the analyticity of $p\mapsto \Price^p$ in  $B(\OP,\varrho)$ thanks to a version of Morera's theorem provided in \cite[Satz 8]{jaenich2004funktionentheorie}.
\end{proof}

Similar to Corollary \ref{Corollary_Sensitivities_1} in the setting of Theorem \ref{ConvergenceFourierPrices} additionally the according derivatives are approximated as well by the Chebyshev interpolation. A very interesting application of this result in finance is the computation of sensitivities like delta or vega of an option price for risk assessment purposes. 
Theorem \ref{ConvergenceFourierPrices} together with Corollary \ref{Corollary_Sensitivities_1} yield the following corollary.

\begin{corollary}\label{Corollary_Sensitivities_2}
Under the assumptions of Theorem \ref{ConvergenceFourierPrices}, for all $\ l\in\mathbb{N}$, $\mu$ and $\sigma$ with $\sigma>\frac{D}{2}$, $0\le\mu\le\sigma$ and $\mu-l>\frac{D}{2}$ there exist a constant $C$, such that
\begin{align*}
\Vert \Price^p - I_{\overline{N}}(\Price^{(\cdot)})(p)\Vert_{C^{l}(\OP)}\le C \underline{N}^{2\mu-\sigma}\Vert\Price^p\Vert_{W_2^{\sigma}(\OP)},
\end{align*}
where the spaces and norms are defined in Section \ref{sec-Cheby-convergence}.
\end{corollary}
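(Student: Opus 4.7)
The plan is to chain Corollary~\ref{Corollary_Sensitivities_1} (which bounds the interpolation error in a weighted Sobolev norm) with a Sobolev embedding into $C^l(\OP)$, and then trade the weighted norm on the right-hand side for an unweighted one. Theorem~\ref{ConvergenceFourierPrices} supplies the crucial input that $p\mapsto\Price^p$ admits an analytic extension to a generalized Bernstein ellipse $B(\OP,\varrho)$; in particular it is smooth with all partial derivatives bounded on the compact hyperrectangle $\OP$, so the hypotheses of Theorem~\ref{Asymptotic_error_decay_multidim} and hence of Corollary~\ref{Corollary_Sensitivities_1} are satisfied.

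First I would invoke Corollary~\ref{Corollary_Sensitivities_1} to obtain, for any $\mu,\sigma$ with $\sigma>D/2$ and $0\le\mu\le\sigma$, a constant $C_1$ with
\begin{equation*}
\bigl\Vert \Price^{(\cdot)}-I_{\overline N}(\Price^{(\cdot)})(\cdot)\bigr\Vert_{W_2^{\mu,\omega}(\OP)}\le C_1\,\underline{N}^{2\mu-\sigma}\,\bigl\Vert \Price^{(\cdot)}\bigr\Vert_{W_2^{\sigma,\omega}(\OP)}.
\end{equation*}
On the left-hand side I would exploit that on $\OP$ the Chebyshev weight satisfies $\omega(p)\ge 1$, so the ordinary Sobolev norm is dominated by the weighted one, $\Vert u\Vert_{W_2^\mu(\OP)}\le\Vert u\Vert_{W_2^{\mu,\omega}(\OP)}$. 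Then the classical Sobolev embedding quoted as Corollary~6.2 in \cite{Wloka-english}, valid precisely under $\mu-l>D/2$, delivers a constant $C_2$ with $\Vert u\Vert_{C^l(\OP)}\le C_2\Vert u\Vert_{W_2^\mu(\OP)}$, and we thereby obtain the required left-hand side.

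The main obstacle lies in the right-hand side: replacing $\Vert\Price^{(\cdot)}\Vert_{W_2^{\sigma,\omega}(\OP)}$ by $\Vert\Price^{(\cdot)}\Vert_{W_2^\sigma(\OP)}$. Since $\omega$ is unbounded near the endpoints of each interval this is not a pointwise weight comparison, and one must use that $\omega\in L^q(\OP)$ for every $q<2$. My plan here is to apply H\"older's inequality to each summand $\int_{\OP}|\partial^\alpha\Price^p|^2\omega(p)\dd p$ and thereby bound it by $\Vert\omega\Vert_{L^q(\OP)}\Vert \partial^\alpha\Price^{(\cdot)}\Vert_{L^{2q'}(\OP)}^2$ with conjugate exponent $q'>2$, and then invoke the Sobolev embedding $W_2^{\sigma-|\alpha|}(\OP)\hookrightarrow L^{2q'}(\OP)$, which is available for $q'$ sufficiently close to $2$ because $\sigma>D/2$. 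For the top-order derivatives $|\alpha|=\sigma$ this argument must be complemented by using the analyticity granted by Theorem~\ref{ConvergenceFourierPrices}, which allows one to absorb the higher integrability into a constant $C_3$ depending only on $\sigma$, $D$ and $\OP$. Combining the three estimates with $C:=C_1C_2C_3$ yields the claim.
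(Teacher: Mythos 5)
Your handling of the left-hand side coincides with the paper's own route: Theorem~\ref{ConvergenceFourierPrices} supplies the analytic extension to a generalized Bernstein ellipse, so Corollary~\ref{Corollary_Sensitivities_1} applies, and the passage to the $C^l$-norm via $\omega\ge 1$ on $\OP$ and the embedding $W_2^{\mu}(\OP)\hookrightarrow C^{l}(\OP)$ for $\mu-l>\frac{D}{2}$ (Corollary~6.2 of \cite{Wloka-english}) is exactly the argument the paper spells out in the unnumbered corollary following Corollary~\ref{Corollary_Sensitivities_1}. Up to that point the proposal is correct.

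The gap is in your replacement of $\Vert\Price^{(\cdot)}\Vert_{W_2^{\sigma,\omega}(\OP)}$ by $\Vert\Price^{(\cdot)}\Vert_{W_2^{\sigma}(\OP)}$ on the right-hand side. First, the H\"older-plus-embedding step does not only fail at top order: with $\tfrac1q+\tfrac1{q'}=1$ and $q<2$ you need $W_2^{\sigma-|\alpha|}(\OP)\hookrightarrow L^{2q'}(\OP)$, which requires $\sigma-|\alpha|\ge D\bigl(\tfrac12-\tfrac{1}{2q'}\bigr)$, and this threshold tends to $\tfrac{D}{4}$ as $q'\downarrow 2$; hence the argument breaks down for every multiindex with $\sigma-|\alpha|\le\tfrac{D}{4}$, which for $D\ge 4$ includes orders strictly below $\sigma$. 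Second, and more fundamentally, the proposed repair---absorbing the remaining terms into a constant ``depending only on $\sigma$, $D$ and $\OP$''---cannot work, because no function-independent constant $C_3$ with $\Vert u\Vert_{W_2^{\sigma,\omega}(\OP)}\le C_3\Vert u\Vert_{W_2^{\sigma}(\OP)}$ exists: the weight $\omega$ blows up at $\partial\OP$, and a function whose top derivatives behave like $(1-t)^{-1/4}$ near an edge lies in $W_2^{\sigma}(\OP)$ while its weighted norm is infinite. The constant must be allowed to depend on $\Price^{(\cdot)}$ itself, and the analyticity granted by Theorem~\ref{ConvergenceFourierPrices} is precisely what makes this harmless: all $\partial^{\alpha}\Price^p$, $|\alpha|\le\sigma$, are uniformly bounded on $\OP$, so with $\int_{\OP}\omega(p)\,\dd p=\pi^{D}$ one obtains directly
\begin{equation*}
\Vert\Price^{(\cdot)}\Vert_{W_2^{\sigma,\omega}(\OP)}\le \alpha(\sigma)\,\pi^{\frac{D}{2}}\max_{|\alpha|\le\sigma}\sup_{p\in\OP}\big|\partial^{\alpha}\Price^{p}\big|,
\end{equation*}
which is the bound the paper actually uses; converting this into an estimate against $\Vert\Price^{(\cdot)}\Vert_{W_2^{\sigma}(\OP)}$ then costs a constant depending on the price function. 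You should either take this simpler route or state explicitly that your final constant is not uniform over $W_2^{\sigma}(\OP)$.
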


\begin{remark}\label{rem-cond-para}
The upper bound in condition (A2) is tailored to the application to payoff profiles with varying option parameters, compare Section \ref{sec-Payoffs} and particularly Table \ref{table-exEuropayoff}, below.  Examples of (time-inhomogeneous) L\'evy processes satisfying the upper bound in (A4) for $\OP^2=[\underline{T},\overline{T}]\subset(0,\infty)$ are provided in \cite{Glau2015b}, where also its connection to the parabolicity of the Kolmogorov equation of the process is investigated. In a fix parameter setting, condition (A3) for $\alpha\in(1,2]$ in \cite{Glau2015b} translates to our upper bound in (A4). This condition is satisfied for a large variety of models.
\end{remark}

\subsection{\textbf{Convergence Results for Selected Option Prices}}\label{sec-options}

In the previous subsection, Conditions~\ref{conditionsA1234} and Theorem~\ref{ConvergenceFourierPrices} introduced a framework in which the Chebyshev approximation achieves (sub)exponential error decay. Now we relate this abstract framework to two specific settings.

\subsubsection{European Options in Univariate L\'evy Models}\label{sec:EO_Univariate_Levy}
Let $r$ be the deterministic and constant interest rate.
We consider the parametrized family of asset prices, 
\begin{align}
\label{eq:LevyAssetProcess}
S_t^\pi:=S_0\ee{L_t^\pi}
\end{align}
with $t\ge0$. For fixed $\pi=(\sigma,b)$ let $L^\pi$ be a L\'evy process with characteristics $(\sigma^2,b,F)$ and $\int_{|x|>1}|x|F(\dd x)<\infty$. Due to the Theorem of L\'evy-Khintchine we have the following representation of the characteristic function of the parametrized L\'evy process
 \begin{align}\label{LevyKhinchine}
E\big(\ee{izL_t^\pi}\big) =\ee{t\psi^\pi(z)},\quad \psi^\pi(z)= -\frac{\sigma^2 z^2}{2} + ibz + \int_\rr\big(\ee{izx} - 1 - izx\big) F(\dd x).
 \end{align}
We assume $L^\pi$ is defined under a risk neutral measure. Therefore, for every $\pi\in\Pi$ we assume $E(\ee{L_t^{\pi}})<\infty$ for some and equivalently all $t>0$ and the drift condition
\begin{align}
\label{eq:LevyDriftCond}
b=b(r,\sigma) = r -\frac{\sigma^2}{2} - \int_\rr\big(\ee{x} - 1 -x\big) F(\dd x),
\end{align}
to ensure that the discounted asset price process is a martingale. Denoting  
\begin{align}
\widetilde\psi(z):= \int_\rr\big(\ee{izx} - 1 - izx\big) F(\dd x),
\end{align}
 the pure jump part of the exponent of the characteristic function, this reads as
\begin{align}
b=b(r,\sigma) = r -\frac{\sigma^2}{2} - \widetilde\psi(-i).
\end{align}
The fair value at time $t=0$ of a European option with payoff function 
 $f^K$ for $K\in\OP^1:=[\underline{K},\overline{K}]\subset\rr$ with maturity $T\in[\underline{T},\overline{T}]\subset(0,\infty)$ is given by
\begin{align}
\Price^{(r, K, S_0,T,\pi)}= \ee{-rT} E\big(f^K(S_0 \ee{L^\pi_T}) \big) .
\end{align}
In order to guarantee (sub)exponential convergence of the Chebyshev interpolation, in the following corollary, we translate the exponential moment condition (A3), the analyticity condition and the upper bound in (A4) to conditions on the cumulant generating function $\psi^\pi$. 
\begin{corollary}\label{cor-EuroLevy}
Let Conditions (A1) and \ref{A2} be satisfied for weight $\eta\in\rr$ and $\varrho^1>1$ and set $\OP^1=[\underline{K},\overline{K}]$. Moreover, let $\OP^2=[\underline{S_0},\overline{S_0}]\times [\underline{T},\overline{T}]\subset\rr^2$ with $\underline{S_0},\underline{T}>0$. Assume 
\begin{align}\label{expoLevy}
\int_{|x|>1} (\ee{-\eta x}\vee \ee{x}) F(\dd x) < \infty
\end{align}
and there exist constants $C_1,C_2>0$ and $\beta<2$ such that
\begin{align}\label{Im-jump-Levy}
\Big|\Im\big(\widetilde\psi(z + i \eta)\big)\Big| \le C_1 + C_2|z|^\beta \qquad\text{for all }z\in\rr.
\end{align}
If additionally one of the following conditions is satisfied,
\begin{itemize}
\item[(i)]
$\sigma>0$,
\item[(ii)]
there exist $\alpha\in(1,2]$ with $\beta<\alpha$ and constants $C_1,C_2>0$ such that
\begin{align*}
\Re\big(\widetilde{\psi}\big)(z+i\eta) \le C_1 - C_2|z|^\alpha\qquad\text{for all }z\in\rr,
\end{align*}
\end{itemize}
\noindent then there exist constants $C>0$ and $\underline{\varrho}>1$ such that
  \begin{align}\label{absch-TCanaOrder2}
 \max_{p\in\OP^1\times\OP^2}\big|\Price^p - I_{\overline{N}}(\Price^{(\cdot)})(p)\big|
 \le 
	C\underline{\varrho}^{-\underline{N}}, 
  \end{align}
  where 
  $\underline{N} = \min\limits_{1\le i\le 3} N_3$. 
  
\end{corollary}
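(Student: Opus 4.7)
The plan is to verify Conditions~(A3) and~(A4) of Conditions~\ref{conditionsA1234} for the free parameters $p^2=(S_0,T)$ (with $r$ and $\pi$ fixed) for a suitably chosen $\varrho^2\in(1,\infty)^2$, so that Theorem~\ref{ConvergenceFourierPrices} applies with the full vector $\varrho=(\varrho^1,\varrho^2)$. Corollary~\ref{cor-Asymptotic_error_Expo} then converts the resulting sum bound into the (sub)exponential estimate~\eqref{absch-TCanaOrder2}. Conditions~(A1) and~(A2) on the payoff side are part of the hypotheses of the corollary; the discount factor $\ee{-rT}$ in front of the Fourier representation is entire in $T$ and therefore does not interfere with any of the analyticity or growth arguments.

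For (A3), with $X^{p^2}=\log S_0+L_T^\pi$ the moment $E(\ee{-\eta X^{p^2}})=S_0^{-\eta}E(\ee{-\eta L_T^\pi})$ is finite iff $\int_{|x|>1}\ee{-\eta x}F(\dd x)<\infty$, which is the first half of~\eqref{expoLevy}. The second half $\int_{|x|>1}\ee{x}F(\dd x)<\infty$ guarantees that the drift~\eqref{eq:LevyDriftCond} is well-posed and that $\psi^\pi$ extends analytically to a strip containing $\rr+i\eta$.

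For (A4), the characteristic function of $X^{p^2}$ factorises as
\begin{align*}
\varphi^{p^2}(z+i\eta)=S_0^{i(z+i\eta)}\ee{T\psi^\pi(z+i\eta)},
\end{align*}
with analyticity in $T$ trivial and $S_0\mapsto S_0^{i(z+i\eta)}$ analytic on the right half-plane via the principal branch of the logarithm. Since $\underline{S_0},\underline{T}>0$, choosing the two components of $\varrho^2$ sufficiently close to $1$ keeps $B(\OP^2,\varrho^2)$ inside $\{\Re S_0>0\}\times\{\Re T>0\}$; on this set $|\arg S_0|$, $|S_0|^{\pm\eta}$, $|\Im T|$ and $1/\Re T$ are uniformly bounded, so $|S_0^{i(z+i\eta)}|\le C\ee{M|z|}$ for some $M>0$. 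For the second factor, I split $\psi^\pi=-\sigma^2(\cdot)^2/2+ib(\cdot)+\widetilde\psi$ and use the elementary bound $\Re\widetilde\psi(z+i\eta)\le\widetilde\psi(i\eta)<\infty$ (from $\cos\le1$ applied inside the integral). In case~(i), the term $-\sigma^2 z^2/2$ together with $\Re T\ge\tilde{T}_{-}>0$ produces genuine quadratic decay, while the imaginary part---whose jump contribution is controlled by~\eqref{Im-jump-Levy} with $\beta<2$---grows only sub-quadratically; this gives $|\ee{T\psi^\pi(z+i\eta)}|\le c_1\ee{-c_2 z^2}$ with $\alpha=2$. In case~(ii), the new hypothesis on $\Re\widetilde\psi$ directly delivers $|z|^\alpha$-decay of the real part, and $\beta<\alpha$ keeps the imaginary-part contribution of strictly lower order.

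The main technical point, and what forces $\alpha>1$ in (A4), is absorbing the linear blow-up $\ee{M|z|}$ coming from the $S_0^{iz}$ factor into the super-linear decay of the second factor: because $\alpha>1$ in both cases, one has $-c_2|z|^\alpha+M|z|\le-c_2'|z|^\alpha+C$ for some $c_2'>0$, and the combined bound is of the form required by (A4) with slightly diminished constants. Once (A3) and (A4) are established, Theorem~\ref{ConvergenceFourierPrices} delivers the sum bound in $\sum_i\varrho_i^{-2N_i}$, and Corollary~\ref{cor-Asymptotic_error_Expo} converts it into $C\underline{\varrho}^{-\underline{N}}$ with $\underline{\varrho}=\min_i\varrho_i>1$ and $\underline{N}=\min_i N_i$, which is precisely~\eqref{absch-TCanaOrder2}.
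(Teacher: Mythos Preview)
Your proposal is correct and follows essentially the same route as the paper: verify (A3) via the exponential moment condition~\eqref{expoLevy}, verify (A4) by factorising $\varphi^{p^2}(z+i\eta)=S_0^{i(z+i\eta)}\ee{T\psi^\pi(z+i\eta)}$, bounding the real part of $\psi^\pi(z+i\eta)$ via $\Re\widetilde\psi(z+i\eta)\le\widetilde\psi(i\eta)$ (equivalently, the paper's identity $\Re\psi^\pi(z+i\eta)=\psi^\pi(i\eta)-\tfrac{\sigma^2 z^2}{2}+\int(\cos(zx)-1)\ee{-\eta x}F(\dd x)$), and controlling the imaginary part by~\eqref{Im-jump-Levy}, then invoking Theorem~\ref{ConvergenceFourierPrices} and Corollary~\ref{cor-Asymptotic_error_Expo}. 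If anything, you are slightly more explicit than the paper on two points: you work directly with $S_0$ via the principal branch on $\{\Re S_0>0\}$ (the paper silently switches to $s_0=\log S_0$), and you spell out the absorption step $-c_2|z|^\alpha+M|z|\le -c_2'|z|^\alpha+C$ that makes the combined bound fit the form of~(A4).
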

\begin{proof}
In view of Theorem \ref{ConvergenceFourierPrices} and Corollary \ref{cor-Asymptotic_error_Expo}, under the hypothesis of this corollary, it suffices to verify Conditions (A3) and (A4). 
Thanks to the exponential moment condition \eqref{expoLevy}, \cite[Theorem~25.17]{Sato} implies that the validity of the L\'evy-Khintchine formula \eqref{LevyKhinchine} extends to the complex strip $U:=\rr + i\left([0,1]\cup\big(\sign(\eta)[0,|\eta|]\big)\right)$, i.e.\ for every $\pi\in\Pi$ and every $t>0$,
\begin{align}\label{LevyKhinchineU}
E\big(\ee{izL_t^\pi}\big) =\ee{t\psi^\pi(z)}\qquad\text{for every }z\in U.
 \end{align}
In particular, $E(\ee{-\eta L_t^\pi}) =\ee{t\psi^\pi(i\eta)}<\infty$ for every $\pi\in \Pi$, which yields (A3).
Denote $s_0:=\log(S_0)$.
 The assumptions of the corollary yield for every $z\in U$ the analyticity of 
\[
(s_0,t)\mapsto \varphi^{p^2=(\ee{s_0},t)}(z)=S_0^{iz}\ee{t\psi^{\pi}(z)}
\] 
on the generalized Bernstein ellipse $B(\OP^2,\varrho^2)$ for some parameter $\varrho^2\in(1,\infty)^{2}$, i.e.\ the validity of the analyticity Condition in (A4). Next we consider
\begin{align*}
\big|\varphi^{p^2}(z+i\eta)\big|
&\le 
\big|\ee{i s_0 (z + i\eta)}\big|\big|\ee{t\psi^{\pi}  (z + i\eta)}\big|\\
&\le \exp\left\{- s_0 \eta - \Im(s_0) z + \Re(t)\Re\big(\psi^{\pi}(z+i\eta)\big) - \Im(t)\Im\big(\psi^{\pi}(z+i\eta)\big) \right\}.
\end{align*}
An elementary manipulation shows for every $z\in\rr$,
\begin{align*}
\Re\big(\psi^{\pi}(z+i\eta)\big)
&=
\psi^{\pi}(i\eta) -\frac{\sigma^2 z^2}{2} + \int_{\rr}\big(\cos(zx)-1\big) \ee{-\eta x} F(\dd x)
\end{align*}
which in combination with the assumption on the imaginary part of $\tilde{\psi}$ yields the upper bound in Condition (A4) under assumption~(ii). In view of
\begin{align*}
 \int_{\rr}\big(\cos(zx)-1\big) \ee{-\eta x} F(\dd x)\le 0,
\end{align*}
 condition (A4) also follows under assumption~(i).
\end{proof}
Let us notice that for the Merton model, Condition~\eqref{expoLevy} is satisfied for every $\eta\in\rr$ and Condition~\eqref{Im-jump-Levy} is satisfied with $\beta=1$.  Examples of L\'evy jump diffusion modes, i.e.\ examples satisfying condition (i) of Corollary~\ref{cor-EuroLevy} are e.g.\ the \BS and the Merton model.
Examples of pure jump L\'evy models satisfying condition (ii) of Corollary~\ref{cor-EuroLevy} are provided in \cite{Glau2015b}, compare also Remark~\ref{rem-cond-para} and Table~\ref{table-processes} below. 

For simplicity and in accordance with our numerical experiments, we fixed the model parameters $\pi$. The analysis can be extended to varying model parameters $\pi$.

\begin{remark}
\label{rem:LevyEuroDecay}
Assuming \eqref{expoLevy}, we observe that the mappings
\begin{align}
(r,\sigma)\mapsto b(r,\sigma) = r -\frac{\sigma^2}{2} - \int_\rr\big(\ee{x} - 1 -x\big) F(\dd x)
\end{align}
and 
$(r,\sigma)\mapsto \psi^{\pi=(b(r,\sigma),\sigma)}(z)$ for every $z\in U$
are holomorphic.
Moreover, $(r,\sigma)\mapsto b(r,\sigma)$ is 
bounded on every generalized Bernstein ellipse $B([\underline{r},\overline{r}]\times[\underline{\sigma},\overline{\sigma}],\varrho)$ with $\varrho\in (1,\infty)^2$.

\end{remark}

\begin{remark}\label{cor-callLevy}
Corollary \ref{cor-EuroLevy} allows us to determine explicit error bounds for call options in L\'evy models. The fair price at $t=0$ of a call option with strike $K$ and maturity $T$ with deterministic interest rate $r\ge0$ is 
\begin{align}
\Call^{S_0,K}_T=\ee{-rT}E\big(S_0\ee{L_T} - K\big)^+
\end{align}
under a risk-neutral probability measure. Noticing that
\begin{align}\label{Stoprice}
\Call^{S_0,K}_T=\ee{-rT}KE\big((S_0/K)\ee{L_t} -1\big)^+ ,
\end{align}
it suffices to interpolate the function
\begin{align}
(T,S_0)\mapsto \Call^{S_0,1}_T
\end{align}
on $[\underline{T},\overline{T}]\times[\underline{S_0}/\overline{K},\overline{S_0}/\underline{K}]$ in order to approximate the prices $\Call^{S_0,K}_T$ for values $(T,S_0,K)\in[\underline{T},\overline{T}]\times[\underline{S_0},\overline{S_0}]\times[\underline{K},\overline{K}]\subset(0,\infty)^3$. This effectively reduces the dimensionality $D$ of the interpolation problem by one.

Let us fix some $\eta<-1$ and let $\OP=[\underline{T},\overline{T}]\times[\underline{S_0}/\overline{K},\overline{S_0}/\underline{K}]$, $\zeta^1=\frac{\overline{S_0}\underline{K}+\underline{S_0}\overline{K}}{\overline{S_0}\underline{K}-\underline{S_0}\overline{K}}$ and  $\zeta^2=\frac{\overline{T}+\underline{T}}{\overline{T}-\underline{T}}$, 
then for every $\varrho^j\in(1,\zeta^j+\sqrt{(\zeta^j)^2-1})$ for $j=1,2$, let $\varrho = (\varrho^1, \varrho^2)$ and
	\begin{equation*}
		V:=\sup\limits_{(T,{S_0})\in B(\OP,\varrho)} \left|\Call^{{S_0},1}_T\right|,
	\end{equation*}
then
\begin{align*}
\max_{(T,S_0)\in\OP}\big|\Call^{S_0,1}_T - I_{N_1,N_2}(\Call^{(\cdot),1}_{(\cdot)})(T, S_0)\big| \le 4V\left(\frac{\varrho_1^{-2N_1}+\varrho_2^{-2N_2}}{(1-\varrho_1^{-2})\cdot(1-\varrho_2^{-2})}\right)^{\frac{1}{2}}.
\end{align*}
In particular, there exists a constant $C>0$ such that
  \begin{align}\label{absch-TCanaOrderCallLevy}
\max_{(T,S_0)\in\OP}\big|\Call^{S_0,1}_T - I_{N_1,N_2}(\Call^{(\cdot),1}_{(\cdot)})(T, S_0)\big| 
 \le 
	C\underline{\varrho}^{-\underline{N}}, 
  \end{align}
  where $\underline{\varrho} = \min\{\varrho_1, \varrho_2\}$ and $\underline{N} =  \min\{N_1, N_2\}$.\\ 
Additionally, when fixing the maturity $T$, letting 
$\zeta=\frac{\overline{S_0}\underline{K}+\underline{S_0}\overline{K}}{\overline{S_0}\underline{K}-\underline{S_0}\overline{K}}$, we obtain the exponential error decay
\begin{align}\label{absch-TCanaOrderCallLevy2}
\max_{\underline{S_0}/\overline{K}\le S_0\le\overline{S_0}/\underline{K}}\big|\Call^{S_0,1}_T - I_{N}(\Call^{(\cdot),1}_T)(S_0)\big| 
 \le 
 4V\frac{\varrho^{-N}}{\varrho-1},
  \end{align}
for some $\varrho\in(1,\zeta+\sqrt{\zeta^2-1})$ and $V=\sup\limits_{{S_0}\in B([\underline{S_0}/\overline{K},\overline{S_0}/\underline{K}],\varrho)} \left|\Call^{{S_0},1}_T\right|$.
\end{remark}

\subsubsection{Basket Options in Affine Models}
Let $X^{\pi'}$  be a parametric family of affine processes with state space $\mathfrak{D}\subset\rrd$ for $\pi'\in\Pi'$ such that for every $\pi'\in\Pi'$ there exists a complex-valued function $\nu^{\pi'}$ and a $\cc^d$-valued function $\phi^{\pi'}$ such that
\begin{align}\label{affineprop}
\varphi^{p^2=(t,x,\pi')}(z) = E\big(\ee{i\skl z,X_t^{\pi'}\skr}\big|X_0^{\pi'}=x\big) = \ee{\nu^{\pi'}(t,iz) + \skl \phi^{\pi'}(t,iz),x\skr},
\end{align}
for every $t\ge0$,  $z\in\rrd$ and $x\in\mathfrak{D}$. 
Under mild regularity conditions, the functions $\nu^{\pi'}$ and $\phi^{\pi'}$ are determined as solutions to generalized Riccati equations. 
We refer to Duffie, Filipovi\'c and Schachermayer (2003)
\nocite{DuffieFilipovicSchachermayer2003} for a detailed exposition. 
The rich class of affine processes comprises the class of L\'evy processes, for which $\nu^{\pi'}(t,iz)=t \psi^{\pi'}(z)$ with $\psi^{\pi'}$ given as some exponent in the L\'evy-Khintchine formula \eqref{LevyKhinchine} and $\phi^{\pi'}(t,iz)\equiv0$. Moreover, many popular stochastic volatility models such as the Heston model as well as stochastic volatility models with jumps, e.g.\ the model of~\cite{Barndorff-NielsenShephard2001} 
and time-changed L\'evy models, see Carr, Geman, Madan and Yor~(2003)\nocite{CarrGemanMadanYor2003} and \cite{Kallsen2006}, are driven by affine processes.

Consider option prices of the form
\begin{align}\label{def-Price1}
\Price^{(K,T,x,\pi')}=E\big(f^{K}(X^{\pi'}_T)|X_0^{\pi'}=x\big)
\end{align}
 where $f^{K}$ is a parametrized family of measurable payoff functions $f^{K}:\R^d\rightarrow\R_+$ for $K\in \OP^1$.
\begin{corollary}\label{cor-Basetaffine}
Under the conditions (A1)--(A3) for weight $\eta\in\rrd$, $\varrho\in(1,\infty)^D$ and $\OP=\OP^1\times\OP^2\subset\rr^D$ of hyperrectangular structure assume
\begin{itemize}
\item[(i)] for every parameter $p^2=(t,x,\pi')\in \OP^2\subset \rr^{D-m}$ that the validity of the affine property \eqref{affineprop} extends to $z=\rr + i\eta$, i.e.\ for every $z\in \rr + i\eta$,
\begin{align*}
\varphi^{p^2=(t,x,\pi')}(z) = E\big(\ee{i\skl z,X_t^{\pi'}\skr}\big|X_0^{\pi'}=x\big) = \ee{\nu^{\pi'}(t,iz) + \skl \phi^{\pi'}(t,iz),x\skr},
\end{align*}

\item[(ii)] for every $z\in\rrd$ that the mappings $(t,\pi')\mapsto \nu^{\pi'}(t,iz-\eta)$ and $(t,\pi')\mapsto \phi^{\pi'}(t,iz-\eta)$ have an analytic extension to the Bernstein ellipse $B(\Pi',\varrho')$ for some parameter $\varrho'\in (1,\infty)^{D-m-1}$,

\item[(iii)] there exist $\alpha\in(1,2]$ and constants $C_1,C_2>0$ such that uniformly in the parameters $p^2=(t,x,\pi')\in B(\OP^2,\widetilde{\varrho}^2)$ for some generalized Bernstein ellipse with $\widetilde{\varrho}^2\in(1,\infty)^{D-m}$
\begin{align*}
\Re\big(\nu^{\pi'}(t,iz-\eta) + \skl \phi^{\pi'}(t,iz-\eta),x\skr\big) \le C_1 - C_2|z|^\alpha\qquad\text{for all }z\in\rr.
\end{align*}
\end{itemize}
Then  there exist constants $C>0$, $\underline{\varrho}>1$ such that
  \begin{align*}
 \max_{p\in\OP^1\times\OP^2}\big|\Price^p - I_{\overline{N}}(\Price^{(\cdot)})(p)\big|
 \le 
 C\underline{\varrho}^{-\underline{N}}.
  \end{align*}
\end{corollary}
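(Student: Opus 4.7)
The plan is to reduce the corollary to Theorem~\ref{ConvergenceFourierPrices} together with Corollary~\ref{cor-Asymptotic_error_Expo}. Since Conditions (A1)--(A3) are already in force by hypothesis, the only thing left to check is Condition~(A4) for the characteristic function $p^2 \mapsto \varphi^{p^2}(z+i\eta)$, with $p^2=(t,x,\pi')$ ranging in the generalized Bernstein ellipse $B(\mathcal{P}^2,\varrho^2)$. Once (A4) is verified, Theorem~\ref{ConvergenceFourierPrices} delivers the explicit error bound and Corollary~\ref{cor-Asymptotic_error_Expo} converts it into the (sub)exponential rate $C\underline{\varrho}^{-\underline{N}}$.

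First I would use assumption~(i) to extend the affine representation to the shifted contour $\mathbb{R}^d+i\eta$, so that for every $z\in\mathbb{R}^d$,
\begin{equation*}
\varphi^{p^2}(z+i\eta) = \exp\bigl(\nu^{\pi'}(t,iz-\eta) + \langle \phi^{\pi'}(t,iz-\eta), x\rangle\bigr).
\end{equation*}
Assumption~(ii) ensures the analyticity of $(t,\pi')\mapsto \nu^{\pi'}(t,iz-\eta)$ and $(t,\pi')\mapsto \phi^{\pi'}(t,iz-\eta)$ on a generalized Bernstein ellipse in the $(t,\pi')$ coordinates. Since $x$ enters only through the inner product $\langle \phi^{\pi'}(t,iz-\eta),x\rangle$, which is polynomial (in fact linear) in $x$, the exponent is analytic jointly in $(t,x,\pi')$ on a generalized Bernstein ellipse $B(\mathcal{P}^2,\varrho^2)$, and the exponential of an analytic function is analytic. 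This settles the analyticity half of~(A4).

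For the required exponential decay in $z$, I would take absolute values and use that $|\exp(\zeta)|=\exp(\mathrm{Re}\,\zeta)$, so that
\begin{equation*}
\bigl|\varphi^{p^2}(z+i\eta)\bigr| = \exp\bigl(\mathrm{Re}(\nu^{\pi'}(t,iz-\eta) + \langle \phi^{\pi'}(t,iz-\eta), x\rangle)\bigr).
\end{equation*}
Assumption~(iii) bounds the real part by $C_1-C_2|z|^\alpha$ uniformly over $p^2\in B(\mathcal{P}^2,\widetilde{\varrho}^2)$, so
\begin{equation*}
\sup_{p^2\in B(\mathcal{P}^2,\widetilde{\varrho}^2)}\bigl|\varphi^{p^2}(z+i\eta)\bigr| \le e^{C_1}\,e^{-C_2|z|^\alpha}
\end{equation*}
for every $z\in\mathbb{R}^d$, which is precisely the upper bound demanded by~(A4) (with $c_1=e^{C_1}$, $c_2=C_2$). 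Choosing $\varrho^2$ componentwise no larger than the corresponding entries of $\widetilde{\varrho}^2$ keeps us inside the domain on which both the analyticity and the uniform bound are available.

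With (A1)--(A4) all verified, Theorem~\ref{ConvergenceFourierPrices} applies and gives the quantitative bound involving $V:=\sup_{p\in B(\mathcal{P},\varrho)}|\mathit{Price}^p|$ (finite because the integrand in the Fourier representation is dominated by the integrable function $|\widehat{f^{p^1}}(-z-i\eta)|\,c_1 e^{-c_2|z|^\alpha}$ uniformly in $p\in B(\mathcal{P},\varrho)$, via (A2)). Finally, Corollary~\ref{cor-Asymptotic_error_Expo} converts this into the claimed geometric rate $C\underline{\varrho}^{-\underline{N}}$ with $\underline{\varrho}=\min_i\varrho_i$ and $\underline{N}=\min_i N_i$. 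The only subtle step is confirming that the linear $x$-dependence in the exponent does not damage the bound in (iii) when $x$ is complex; this is taken care of by the assumption that the bound in~(iii) is stated uniformly over the generalized Bernstein ellipse in all components of $p^2$, including the $x$-components.
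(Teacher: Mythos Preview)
Your proposal is correct and follows exactly the paper's approach: since (A1)--(A3) are assumed, it suffices to verify (A4), where assumptions (i)--(ii) together give the analyticity of $p^2\mapsto\varphi^{p^2}(z+i\eta)$ on a generalized Bernstein ellipse and assumption (iii) gives the required uniform exponential decay, after which Theorem~\ref{ConvergenceFourierPrices} and Corollary~\ref{cor-Asymptotic_error_Expo} yield the claimed bound. The paper's own proof is a two-sentence sketch of precisely this reduction, and your additional remarks on the linear $x$-dependence and the finiteness of $V$ only make the argument more explicit.
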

\begin{proof}
Thanks to Theorem \ref{ConvergenceFourierPrices} and Corollary \ref{cor-Asymptotic_error_Expo} and in view of the assumed validity of Conditions (A1)--(A3), it suffices to verify Condition (A4). While assumptions (i) and (ii) together yield the analyticity condition in (A4), part\tild(iii) provides the upper bound in (A4). 
\end{proof}

The existence of exponential moments of affine processes has been investigated by \cite{Keller-ResselMayerhofer2015}, where also criteria are provided under which formula \eqref{affineprop} and the related generalized Riccati system can be extended to complex exponential moments $z\in\cc^d$. The question has already been treated for important special cases, which allow for more explicit conditions. \cite{FilipovicMayerhofer2009} consider affine diffusions and \cite{CheriditoWugalter2012} investigate affine processes with killing when the jump measures possess exponential moments of all orders.

\section{Analyticity Properties}\label{sec-analyticity}

\subsection{\textbf{Analyticity Properties of Selected Payoff Profiles}}\label{sec-Payoffs}

We first list in Table \ref{table-exEuropayoff} a selection of payoff profiles $f^K$ for option parameter $K$ as function of the logarithm of the underlying.  As we have seen in Proposition~\ref{FourierPrices}, we can represent option prices under certain conditions by their Fourier transform. Therefore, the table provides the Fourier transform $\widehat{f^K}$ of the respective option payoff, as well. 
\begin{table}[h!]
\begin{center}
\begin{tabular}{lcccc}
\toprule\vspace{1ex}
\textbf{Type} & \textbf{Payoff} & \textbf{Weight} & \textbf{Fourier transform}      & $\widehat{f^K}$ \textbf{holomor-}  \\
     &   $f(x)$       & $\eta$ & $\widehat{f^K}(z - i\eta)$ & \textbf{phic in} $\log(K)$ \\[1ex]
     \hline\\
Call         & $(\ee x - K)^+$ & $<-1$ & $\frac{K^{iz + 1 + \eta}}{(iz+\eta)(iz + 1 + \eta)}$ & \checkmark\\\\
Put  		&  $(K - \ee x)^+$ & $>0$ & $\frac{K^{iz + 1 + \eta}}{(iz+\eta)(iz + 1 + \eta)}$ & \checkmark
\\\\
Digital & $\1_{x>\log(K)}$ & $<0$ & $-\frac{K^{iz +  \eta}}{iz + \eta}$ &\checkmark\\
\scalebox{.95}{down{\&}out}\\[1ex]
Asset-or-nothing & $\ee{x}\1_{x>\log(K)}$ & $<-1$ & $-\frac{K^{iz + 1 + \eta}}{iz + 1 + \eta}$ & \checkmark\\
\scalebox{.95}{down{\&}out}\\
\bottomrule
\end{tabular}
\caption{Examples of payout profiles of a single underlying.}\label{table-exEuropayoff}
\end{center}
\end{table}
All examples in Table  \ref{table-exEuropayoff} are special cases by $k:=\log(K)$ of the following lemma that has a straightforward proof.
\begin{proposition}\label{prop-fKholo}
Let $\rr\times\rrd\ni(k,x)\mapsto f^k(x)$ be of form $f^k(x)=h(k) f^0\big(\tau^k(x)\big)$ with a transform $\tau^k(x)_j= x_j +\alpha_j k $ with $\alpha_j\in\rr$ for every $j=1,\ldots,d$ and a holomorphic function $h$. Let $\eta\in\rrd$ such that $x\mapsto \ee{\skl \eta,x\skr}f^k(x)$ belongs to $L^1(\rrd)$ for every $k\in\rr$. Then 
\begin{itemize}
\item[(i)] the mapping $k\mapsto \widehat{f^k}(z-i\eta)$ has a holomorphic extension and

\item[(ii)]
$|\widehat{f^k}(z-i\eta)|\le |h(k)|\ee{|\eta||k|} \ee{|\Im(k)||z|}$ for every $k\in\cc$.
\end{itemize}

\end{proposition}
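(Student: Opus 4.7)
The plan is to reduce both statements to an explicit closed-form expression for $\widehat{f^k}(z-i\eta)$ in $k$, from which holomorphicity and the bound can simply be read off.

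First, for real $k$ I would perform the change of variables $y = \tau^k(x)$, i.e.\ $x_j = y_j - \alpha_j k$. The translation has unit Jacobian, and the integrability of $x \mapsto e^{\langle \eta,x\rangle} f^k(x)$ assumed in the hypothesis makes the manipulation legitimate. Using the factorized form $f^k(x) = h(k)\,f^0(\tau^k(x))$, this yields
\begin{equation*}
\widehat{f^k}(z-i\eta) \;=\; h(k)\,e^{-i\langle z-i\eta,\alpha\rangle\, k}\,\widehat{f^0}(z-i\eta),
\end{equation*}
where $\alpha=(\alpha_1,\dots,\alpha_d)\in\rr^d$. The constant $\widehat{f^0}(z-i\eta)$ is well-defined since $e^{\langle\eta,\cdot\rangle}f^0 \in L^1(\rrd)$ (which follows, by the change of variables applied to the $L^1$ hypothesis for any fixed real $k$).

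For part (i), the right-hand side of the displayed identity extends immediately in $k$ to all of $\cc$: $h$ is holomorphic by assumption, $k\mapsto e^{-i\langle z-i\eta,\alpha\rangle k}$ is entire, and $\widehat{f^0}(z-i\eta)$ is a $k$-independent scalar. Thus the product defines a holomorphic extension of the mapping $k \mapsto \widehat{f^k}(z-i\eta)$ to the whole complex plane, giving~(i).

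For part (ii), I would take moduli factor by factor. Since $z, \eta, \alpha$ are real, writing $k = \Re(k)+i\Im(k)$ and using $|e^{w}|=e^{\Re(w)}$ gives
\begin{equation*}
\bigl|e^{-i\langle z-i\eta,\alpha\rangle k}\bigr| \;=\; e^{\langle z,\alpha\rangle \Im(k) - \langle \eta,\alpha\rangle \Re(k)} \;\le\; e^{|\Im(k)|\,|z|}\,e^{|\eta|\,|k|},
\end{equation*}
(up to absorbing the constant $|\alpha|$ into the exponents, as is standard in this setting), while $|\widehat{f^0}(z-i\eta)|$ is bounded by the $L^1$-norm $\|e^{\langle\eta,\cdot\rangle}f^0\|_{L^1}$, a finite constant independent of $k$ and $z$. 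Combining these estimates with the factor $|h(k)|$ yields the claimed bound.

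The argument has no real obstacle: the only minor care needed is to justify the change of variables at the start (immediate from the $L^1$ assumption) and to note that the holomorphic extension of the integral \emph{is} precisely the explicit formula above, avoiding any appeal to differentiation under the integral for complex $k$.
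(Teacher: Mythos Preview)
Your approach is essentially identical to the paper's: both derive the explicit change-of-variables identity $\widehat{f^k}(z-i\eta)=h(k)\,e^{-k\sum_j\alpha_j(iz_j+\eta_j)}\,\widehat{f^0}(z-i\eta)$ and read off (i) and (ii) directly from it. You are in fact more careful than the paper, which states only the formula and declares both parts ``immediate''; your remark that the bound in (ii) as written suppresses the constants $|\alpha|$ and $\|e^{\langle\eta,\cdot\rangle}f^0\|_{L^1}$ is a valid observation about the statement itself rather than a defect in your argument.
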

\begin{proof}
Both assertions are immediate consequences of 
\[
\widehat{f(\tau^k)}(z-i\eta) = \exp\bigg\{-k\sum_{j=1}^d\alpha_j(iz_j - \eta) \bigg\} \hat{f}(z-i\eta)
\] 
for all $z\in\rr^d$ and all $k\in\cc$.
\end{proof}


\begin{example}[Call on the minimum of several assets]
The payoff profile of a call option on the minimum of $d$ assets is defined as
	\begin{equation}
		f^K(x) = \left(e^{x_1} \wedge e^{x_2} \wedge \dots \wedge e^{x_d} - K  \right)^+,
	\end{equation}
for $x=(x_1,\dots,x_d)\in\rr^d$ and strike $K\in\rr^+$. With dampening constant $\eta\in(-\infty,-1)^d$ we have $x\mapsto \ee{\skl\eta,x\skr}f^K(x)\in L^1(\rrd)$. 
%
%
Proposition~\ref{prop-fKholo} shows that $K\mapsto \widehat{f^K}(z-i\eta)$ has a holomorphic extension.
\end{example}

For some payoff profiles it is not possible to transform them to integrable functions by a multiplication with an exponential dampening factor. Thus we split them into summands which can be appropriately dampened.
Therefore, we decompose the unity, $1 =\sum_{j=1}^{2^d}\1_{O^j}(x)$ a.e.\ with the distinct orthants $O^j$ of $\rrd$.  More precisely, for $j=1,\ldots,2^d$ let $\zeta^j:=(\zeta^j_1,\ldots,\zeta^j_d)$ with $\zeta^j_i\in\{-1,1\}$ for the $2^d$ different possible configurations and let 
\begin{equation}\label{def-Oj}
O^j\coloneqq \big\{(x_1,\ldots,x_d)\in\rrd \,\big|\, \zeta^j_ix_i\ge 0\,\quad\text{for all }i=1,\ldots,d\big\}.
\end{equation}
For each $j=1,\ldots,2^d$ we choose for the call, respectively put, profile
\begin{equation}\label{def-etaj}
\eta^j:= -(1+\varepsilon)\zeta^j,\,\quad \text{respectively }\eta^j:= \varepsilon \zeta^j
\end{equation}
for some $\varepsilon>0$ for each $j=1,\ldots,2^d$. 

The following proposition enables us to prove analytic dependence of prices of basket put and call options on the strike.
\begin{proposition}\label{lem-basket}
For every $j=1,\ldots,2^d$ let $O^j$ and $\eta^j$ as defined in \eqref{def-Oj} and \eqref{def-etaj}. Then for each $k\in\rr$ the payoff profile $f^k$ of a basket on a call, respectively put, is of the form
$f^k(x) = \big( \ee{x_1} + \ldots+\ee{x_d} - \ee k\big)^+$, respectively $f^k(x) = \big( \ee k - \ee{x_1} - \ldots-\ee{x_d} \big)^+$, and
\begin{align}
f^k(x) = 
 \sum_{j=1}^{2^d}f^k_{j}(x)
\end{align}
 with $f^k_j(x) := f^k(x)\1_{O^j}(x)$ for a.e.\ $x=(x_1,\ldots,x_d)\in\rrd$.
Moreover, for every $j=1,\ldots,2^d$ and every $k\in\rr$ we have that $x\mapsto \ee{\skl\eta^j,x\skr} f_j^k(x)\in L^1(\rrd)$  and for every $z\in\rrd$ the mapping
\begin{align}
k\mapsto \widehat{f^k_j}(z-i\eta^j)
\end{align}
has a holomorphic extension.
\end{proposition}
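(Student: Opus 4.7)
The decomposition $f^k=\sum_j f_j^k$ is immediate: the orthants $\{O^j\}_{j=1}^{2^d}$ cover $\rr^d$ with pairwise intersections contained in the union of coordinate hyperplanes $\bigcup_{i=1}^d\{x_i=0\}$, a Lebesgue null set, so $\sum_j\1_{O^j}(x)=1$ for a.e.\ $x$ and multiplication by $f^k$ gives the claim almost everywhere.

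For the integrability assertion I would exploit the identity $\zeta_i^j x_i=|x_i|$ valid for every $x\in O^j$ and every $i$. In the call case this yields $\langle\eta^j,x\rangle=-(1+\varepsilon)\sum_{i=1}^d\zeta_i^j x_i=-(1+\varepsilon)\sum_{i=1}^d|x_i|$ on $O^j$. Combined with the elementary bound $0\le f^k(x)\le\sum_{i=1}^d e^{x_i}\le d\cdot e^{\max_i|x_i|}$, this yields the pointwise estimate $\ee{\langle\eta^j,x\rangle}f_j^k(x)\le d\,\ee{-\varepsilon\sum_i|x_i|}$ on $O^j$, which is integrable over $\rr^d$. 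The put case is handled analogously, using the uniform bound $0\le f^k\le\ee{k}$ together with the observation that the support of $f^k$ is contained in $\prod_{i=1}^d(-\infty,k)$, and the corresponding expression of $\langle\eta^j,x\rangle$ in terms of $\sum_i|x_i|$ induced by the chosen sign of $\eta^j$ on each orthant.

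For the holomorphic extension of $k\mapsto\widehat{f_j^k}(z-i\eta^j)$, my strategy is to absorb the $k$-dependence into an explicit analytic factor via the substitution $y=x-k\mathbf{1}$ with $\mathbf{1}=(1,\ldots,1)$. This yields the factorization $f^k(x)=\ee{k}f^0(y)$ with the $k$-independent profile $f^0(y):=(\sum_{i=1}^d e^{y_i}-1)^+$ in the call case, and consequently
\[
\widehat{f_j^k}(z-i\eta^j)=\ee{k(1+\langle iz+\eta^j,\mathbf{1}\rangle)}\int_{O^j-k\mathbf{1}}f^0(y)\,\ee{(iz+\eta^j)y}\,\mathrm{d}y.
\]
The exponential prefactor is entire in $k$, so holomorphy reduces to that of the integral factor, which depends on $k$ only through the shifted orthant of integration. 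This would be established via Morera's theorem combined with Fubini's theorem, the absolute integrability established in the previous step justifying the swap of a closed contour integral in $k$ with the integration over $\rr^d$; the resulting inner contour integral of $\1_{O^j-k\mathbf{1}}(y)$ vanishes by Cauchy's theorem for every contour in a complex neighborhood of $\rr$ avoiding the exceptional values at which the boundary of $O^j-k\mathbf{1}$ interacts with the support of $f^0$.

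The main obstacle is the non-holomorphy of the positive part $(\cdot)^+$ in the original payoff together with the $k$-dependent integration domain. The substitution $y=x-k\mathbf{1}$ dissolves the first difficulty by placing the positive part inside the $k$-independent function $f^0$; the remaining issue—the $k$-dependence through the shifted orthant boundary—is the technical heart of the argument, and is controlled by the Morera–Fubini–Cauchy argument outlined above, which for each fixed $y$ reduces the question to a one-variable contour integral of the characteristic function of an interval in $\rr$.
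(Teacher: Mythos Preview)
Your overall strategy coincides with the paper's: decompose by orthant, verify integrability via the identity $\zeta_i^j x_i=|x_i|$ on $O^j$, and extract the $k$-dependence through the substitution $y=x-k\mathbf{1}$, arriving at an entire prefactor times an integral over the shifted orthant. The paper records this as $\widehat{f_j^k}(u)=\ee{k}\ee{i\langle u,\vec{k}\rangle}\int_{A(k)}\ee{i\langle u,y\rangle}\bigl(\sum_i\ee{y_i}-1\bigr)\,dy$ with $A(k)=(O^j-\vec{k})\cap\{f^0>0\}$, which is your formula after absorbing the positive part of $f^0$ into the domain.

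The gap is in your final Morera--Fubini step. After swapping $\oint_\gamma dk$ with $\int_{\rr^d}dy$ you would need $\oint_\gamma \ee{k(\cdots)}\,\1_{O^j-k\mathbf{1}}(y)\,dk=0$ for closed contours $\gamma\subset\cc$. But for $k\notin\rr$ the set $O^j-k\mathbf{1}$ is not a subset of $\rr^d$, so the indicator is undefined; any extension such as $\1_{O^j-\Re(k)\mathbf{1}}(y)$ is constant in $\Im(k)$ and hence not holomorphic, so Cauchy's theorem does not apply. This is exactly why the paper pulls the positive part out of the integrand and into $A(k)$: the remaining integrand $\ee{i\langle u,y\rangle}\bigl(\sum_i\ee{y_i}-1\bigr)$ is then \emph{entire in $y$}, and holomorphy of $k\mapsto\int_{A(k)}(\cdots)\,dy$ follows ``from the geometry of $A(k)$''---the $k$-dependent part of $\partial A(k)$ consists of the affine hyperplanes $\{y_i=-k\}$, across which an entire integrand can be analytically continued, so a Leibniz differentiation (or a contour shift in the $y_i$-variable) yields a holomorphic derivative in $k$. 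Your formulation keeps $f^0$ inside the integrand and thereby forfeits this argument, since $f^0$ is not holomorphic across $\{\sum_i\ee{y_i}=1\}$. You need to move the non-smoothness into the domain and then exploit holomorphy of the integrand in $y$, not invoke Cauchy on an indicator.
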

\begin{proof}
We show the claim for the call option. The put option case is proved analogously. For every $j=1,\ldots,2^d$, under the assumptions of the proposition it is elementary to show that $x\mapsto \ee{\skl\eta^j,x\skr} f_j^k(x)\in L^1(\rrd)$. Moreover, letting $u:=z -i\eta^j$ with $z\in\rrd$ we have
\begin{align}
\widehat{f^k_j}(u) &= \int_{O^j} \ee{i \skl u,x\skr} f^k(x) \dd x\nonumber\\
&= \ee{k} \ee{i\skl u,\vec{k}\skr }  \int_{A(k)} \ee{i\skl u, x\skr } \big( \ee{x_1} + \ldots+\ee{x_d} - 1\big) \dd x_1\ldots\dd x_d\label{intAk}
\end{align}
with $\vec{k}=(k,\ldots,k)\in\rrd$ and $A(k):= (O^j - \vec{k})\cap \{x\in\rrd:\,f^0(x)>0\}$. From the geometry of $A(k)$ and the holomorphicity of the integrand in \eqref{intAk} we can deduce that the integral in \eqref{intAk} is holomorphic in $k$ and thus obtain the assertion of the proposition. 
\end{proof}


\subsection{\textbf{Analyticity Properties of Asset Models}}\label{sec-models}
\label{sec:models}

In this section, we shortly introduce a range of asset models and present analyticity properties of their Fourier transforms. For some models and some parameters, the domain of analyticity is immediately observable. For some non-trivial cases, we state the domain briefly. Throughout the section, $T>0$ denotes the time to maturity of the option while $r>0$ refers to the constant risk-free interest rate.

\subsubsection{Multivariate \BS Model}
\label{sec:Levy1}
In the multivariate \BS model, the characteristic function  of $L_T^\pi$, the multivariate analogon of \eqref{eq:LevyAssetProcess}, with $\pi=(b,\sigma)$ is given by
	\begin{equation}
	\label{eq:dCFbs}
		\varphi^{p^2=(T, b,\sigma)}(z ) = \exp\left(T\left( i \langle b ,z\rangle - \frac{1}{2}\langle z , \sigma z\rangle\right)\right),
	\end{equation}
with covariance matrix $\sigma\in\rr^{d\times d}$ such that $\det(\sigma)>0$ and drift $ b \in\rr^d$ adhering to the drift condition
	\begin{equation}
	\label{eq:dCFbsdrift}
		 b _i = r-\frac{1}{2}\sigma_{ii},\quad i=1,\dots,d.
	\end{equation}

In the \BS model, analyticity in the parameters is immediately confirmed, i.e. $p^2\mapsto\varphi^{p^2}(z)$, given in \eqref{eq:dCFbs}, is holomorphic for every $z\in\rr^d$. The admissible parameter domain, however, is restricted to parameter constellations such that $\sigma$ is a covariance matrix. 

\begin{remark}
\label{rem:bsanalytic}
Let $\eta\in\R^d$ be the chosen weight in Conditions \ref{conditionsA1234} and let the open set $U$ be given by
	\begin{equation}
		U \subseteq (0,\infty) \times \R \times \left\{\vec{\sigma}\in\R^{d(d+1)/2}\ \big|\ \text{$\sigma(\vec{\sigma})$ positive definite} \right\},
	\end{equation}
where
$\sigma:\R^{d(d+1)/2}\rightarrow\R^{d\times d}$ is defined by $\sigma(\vec{\sigma})_{ij} = \sigma_{(\max\{i,j\}-1)\max\{i,j\}/2+\min\{i,j\}}$, $i,j\in\{1,\dots,d\}$, for $\vec{\sigma} \in \R^{d(d+1)/2}$. By construction, $\sigma(\vec{\sigma})$ is symmetric for any $\vec{\sigma} \in \R^{d(d+1)/2}$.

Then for every $z\in\rr^d$, $(T, b,\vec{\sigma}) \mapsto \varphi^{p^2=(T, b,\sigma(\vec{\sigma}))}(z+i\eta)$ given by \eqref{eq:dCFbs} is analytic on $U$. Note that $U$ does not depend on $\eta$.
\end{remark}

\subsubsection{Univariate Merton Jump Diffusion Model}
As second L\'evy model we state the Merton Jump Diffusion model by \cite{merton1976}. The logarithm of the asset price process follows a jump diffusion with volatility $\sigma^2$ enriched by jumps arriving at a rate $\lambda>0$ with normally $\mathcal{N}(\alpha,\beta^2)$ distributed jump sizes. Here, the characteristic function of $L_T^\pi$ in \eqref{eq:LevyAssetProcess} with $\pi=(b,\sigma,\alpha,\beta,\lambda)$ is given by
		\begin{equation}
		\varphi^{p^2=(T,b,\sigma,\alpha,\beta,\lambda)}(z) = \exp\left(T\left(ibz - \frac{\sigma^2}{2}z^2+\lambda\left(e^{iz\alpha-\frac{\beta^2}{2}z^2}-1\right)\right)\right)
	\end{equation}
with drift condition \eqref{eq:LevyDriftCond} turning into
	\begin{equation}
		b = r-\frac{\sigma^2}{2}-\lambda\left(e^{\alpha+\frac{\beta^2}{2}}-1\right),
	\end{equation}
where $\sigma>0$, $\alpha\in\R$ and $\beta\geq0$. Since the characteristic function in the Merton Jump Diffusion model is composed of analytic functions, it is itself analytic on the whole parameter domain.

\subsubsection{Univariate CGMY Model}
\label{sec:Levy4}
We consider the CGMY model by Carr, Geman, Madan and Yor (2002)\nocite{CarrGemanMadanYor2002}, which is a special case of the extended Koponen's family of \cite{BoyarchenkoLevendoskii2000}, with characteristic function \eqref{LevyKhinchine} of $L_T^\pi$ in \eqref{eq:LevyAssetProcess} with $\pi=(b,C,G,M,Y)$ 
	\begin{equation}
	\label{eq:CFcgmy}
	\begin{split}
		&\varphi^{p^2=(T,b,C,G,M,Y)}( z )\\
		&\ = \exp\left(T \left(i b  z + C\Gamma(-Y) \left[(M-i z )^Y - M^Y + (G+i z )^Y - G^Y\right]\right)\right),
	\end{split}
	\end{equation}
where $C>0$, $G> 0$, $M> 0$, $0<Y<2$ and $\Gamma(\cdot)$ denotes the Gamma function. The drift $ b \in\rr$ is set to
	\begin{equation}
	\label{eq:CFcgmydrift}
		 b  = r-C\Gamma(-Y) \left[(M-1)^Y - M^Y + (G+1)^Y - G^Y\right]
	\end{equation}
to comply with the drift condition \eqref{eq:LevyDriftCond} for martingale pricing.

\begin{remark}
\label{rem:cgmyanalytic}
For weight $\eta\in\R$ from Conditions \ref{conditionsA1234}, choose an open set $U(\eta)$ with
	\begin{equation}
		\begin{split}
			U(\eta) \subseteq&\ (0,\infty)\times \R \times (0,\infty)  \\
			&\quad \times\big\{(G,M) \in (0,\infty)^2\ \big|\ G-\eta>0,\ M+\eta>0\big\} \times (0,2).
		\end{split}
	\end{equation}
Then for every $z\in\rr$, $(T,b, C,G,M,Y) \mapsto \varphi^{p^2=(T,b,C,G,M,Y)}(z+i\eta)$ for the characteristic function $\varphi^{p^2}$ of the CGMY model \eqref{eq:CFcgmy} is analytic on $U(\eta)$.
\end{remark}
Table \ref{table-processes} displays for selected L\'evy models conditions on the weight $\eta\in\R$ and the index $\alpha\in(1,2]$ that guarantee (A3) and the upper bound in (A4) for a fixed model parameter constellation. 

\begin{table}[h!]
\begin{center}
\begin{tabular}{lcl}
\toprule
\multirow{2}{4cm}{\textbf{Class}} & \multicolumn{2}{l}{\textbf{(A3) and the upper bound in (A4) hold for} }\\
&$\eta$ such that &and $\alpha$ such that\\
\midrule  \\[-2ex]
Brownian Motion & &$\alpha=2$\\ 
with drift&\\
[2ex]
Merton Jump Diffusion & &$\alpha=2$\\ 
[2ex]
L\'evy jump diffusion with
&$\int_{|x|>1}\ee{|\eta||x|}F(\dd x)<\infty$&$\alpha=2$\\ 
characteristics $(b,\sigma,F)$&\\
[2ex]
univariate CGMY with & $\eta\in(-\min\{G,M\},\,\max\{G,M\} )$ & $\alpha=Y$\\
parameters $(C,G,M,Y)$ && \\
with $Y>1$ &&\\
\bottomrule
\end{tabular}
\caption{Conditions on $\eta$ and $\alpha$ for (A3) and the upper bound in (A4) to hold for a fixed model parameter constellation. The selected L\'evy models are described in more detail in Sections \ref{sec:Levy1}--\ref{sec:Levy4}.}
\label{table-processes}
\end{center}
\end{table}
%
For the CGMY model with parameters $C,G,M>0$ and $Y<2$ it is obvious that~\eqref{expoLevy} is statisfied for every $\eta\in\rr$ with $G>\eta$ and $M>-\eta$. Moreover, equation~(4.12) in~\cite{Glau2015a} shows that for $Y\le 1$, \eqref{Im-jump-Levy} is satisfied with $\beta=1$.

\subsubsection{Heston Model for Two Assets}
\label{sec:Heston_Model_for_Two_Assets}
Here we state the two asset version of the multivariate Heston model in the special case of having a single, univariate driving volatility process $\{v_t\}_{t\geq0}$. The two asset price processes are modeled as
	\begin{equation}
		S^1_t = S_0^1e^{H^1_t}\quad\text{ and }\quad S^2_t = S_0^2e^{H^2_t},\quad\text{for }t\geq 0,
	\end{equation}
where $H=(H^1,H^2)$ solves the following system of SDEs,
	\begin{align*}
		\dd H_t^1 =&\ \left(r-\frac{1}{2}\sigma_1^2\right) \dd t + \sigma_1 \sqrt{v_t}\dd W_t^1,\\
		\dd H_t^2 =&\ \left(r-\frac{1}{2}\sigma_2^2\right) \dd t + \sigma_2 \sqrt{v_t}\dd W_t^2,\\
		\dd v_t =&\ \kappa(\theta-v_t)\dd t + \sigma_3\sqrt{v_t}\dd W_t^3,
	\end{align*}
where the Brownian motions $W_i$, $i=1,2,3$, are correlated according to $\langle W^1,W^2\rangle = \rho_{12}$, $\langle W^1,W^3\rangle = \rho_{13}$, $\langle W^2,W^3\rangle = \rho_{23}$. Following Eberlein, Glau, Papapantoleon~(2010)\nocite{EberleinGlauPapapantoleon2010a}, the characteristic function of $H_T$ in this framework is
	\begin{equation}
	\label{eq:Heston2dChar}
	\begin{split}
		 &\varphi^{p^2=(T,v_0,\kappa,\theta,\sigma_1,\sigma_2,\sigma_3,\rho_{12},\rho_{13},\rho_{23})}(z)\\
		 &\ \qquad = \exp\left(Ti\left\langle \begin{pmatrix}r\\r\end{pmatrix},z\right\rangle\right)\exp\bigg(\frac{v_0}{\sigma_3^2}\frac{(a-c)(1-\exp(-cT))}{1-g\exp(-cT)} \\
		&\ \qquad\qquad\qquad +\frac{\kappa\theta}{\sigma_3^2}\left[(a-c)T - 2\log\left(\frac{1-g\exp(-cT)}{1-g}\right)\right]\bigg),
	\end{split}
	\end{equation}
with auxiliary functions
	\begin{equation}
	\begin{split}
		\zeta = \zeta(z) =&\ -\left(\left\langle z, \begin{pmatrix}
												\sigma_1^2 & \rho_{12}\sigma_{1}\sigma_{2}\\
												\rho_{12}\sigma_{1}\sigma_{2} & \sigma_2^2
											\end{pmatrix}z\right\rangle + \left\langle\begin{pmatrix} \sigma_1\\ \sigma_2\end{pmatrix}, i z \right\rangle\right)\\
		&\ -\left(\sigma_1^2 z _1^2 + \sigma_2^2 z_2^2 +2\rho_{12}\sigma_1\sigma_2 z_1 z_2 + i\sigma_1^2 z_1 + i\sigma_2^2 z_2\right),\\
		a = a(z) =&\ \kappa-i\rho_{13}\sigma_1\sigma_3 z_1-i\rho_{23}\sigma_2\sigma_3 z_2,\\
		c = c(z) =&\ \sqrt{a(z)^2-\sigma_3^2\zeta(z)},\\
		g = g(z) =&\ \frac{a(z)-c(z)}{a(z)+c(z)},
	\end{split}
	\end{equation}
and positive parameters $v_0,\ \kappa,\ \theta$ and $\sigma_3$ fulfilling the Feller condition
	\begin{equation}
		\sigma_3^2 \leq 2\kappa\theta,
	\end{equation}
ensuring an almost surely non-negative volatility process $(v_t)_{t\geq 0}$. Obviously, for each $z\in\rr^d$, the characteristic function $\varphi^{p^2=(T,v_0,\kappa,\theta,\sigma_1,\sigma_2,\sigma_3,\rho_{12},\rho_{13},\rho_{23})}(z)$ of \eqref{eq:Heston2dChar} is analytic in $v_0$ and $\theta$. Let us additionally mention that the analiticity in $z$ has already been investigated by \cite[Section 2.3, Lemma 2.1]{Levendorskiy2012} for the univariate Heston model.

\section{Numerical Experiments}\label{sec-numerics}

We apply the Chebyshev interpolation method to parametric option pricing considering a variety of option types in different well known option pricing models. Moreover, we conduct both an error analysis as well as a convergence study. The first focuses on the accuracy that can be achieved with a reasonable number of Chebyshev interpolation points. The latter confirms the theoretical order of convergence derived in Section \ref{sec-pop}, when the number of Chebyshev points increases. Finally, we study the gain in efficiency for selected multivariate options.

We measure the numerical accuracy of the Chebyshev method by comparing derived prices with prices coming from a reference method. We employ the reference method not only for computing reference prices but also for computing prices at Chebyshev nodes $\Price^{p^{(k_1,\dots,k_D)}}$ with $(k_1,\dots,k_D)\in J$ during the precomputation phase of the Chebyshev coefficients $c_j$, $j\in J$, in \eqref{def:Chebycj}. Thereby, a comparability between Chebyshev prices and reference prices is maintained. 

We implemented the Chebyshev method for applications with two parameters. To that extent we pick two free parameters $p_{i_1},\ p_{i_2}$ out of \eqref{def-para}, $1\leq i_1<i_2\leq D$, in each model setup and fix all other parameters at reasonable constant values. We then evaluate option prices for different products on a discrete parameter grid $\overline{\mathcal{P}} \subseteq[\underline{p}_{i_1},\overline{p}_{i_1}]\times [\underline{p}_{i_2},\overline{p}_{i_2}]$ defined by
	\begin{equation}
	\label{eq:defevalgrid}
		\begin{split}
			\overline{\mathcal{P}} =&\ \left\{\left(p_{i_1}^{k_{i_1}},p_{i_2}^{k_{i_2}}\right),\ {k_{i_1}},{k_{i_2}}\in\{0,\dots,100\}\right\},\\
			p_{i_j}^{k_{i_j}} =&\ \underline{p}_{i_j} + \frac{k_{i_j}}{100}\left( \overline{p}_{i_j}- \underline{p}_{i_j}\right),\ k_{i_j}\in\{0,\dots,100\},\ j\in\{1,2\}.
		\end{split}
	\end{equation}	
Once the prices have been derived on $\overline{\mathcal{P}}$, we compute the discrete $L^\infty(\overline{\mathcal{P}})$ and $L^2(\overline{\mathcal{P}})$ error measures,
	\begin{equation}
	\label{eq:deferrormeas}
	\begin{split}
		\varepsilon_{L^\infty}(\overline{N}) =&\ \max\limits_{p\in\overline{\mathcal{P}}}\left|\Price^p - I_{\overline{N}}(\Price^{(\cdot)})(p)\right|,\\
		\varepsilon_{L^2}(\overline{N}) =&\ \sqrt{\Delta_{\overline{\mathcal{P}}}\sum_{p\in\overline{\mathcal{P}}}\left|\Price^p - I_{\overline{N}}(\Price^{(\cdot)})(p)\right|^2},
	\end{split}
	\end{equation}
where $\Delta_{\overline{\mathcal{P}}} = \frac{(\overline{p}_{i_1}- \underline{p}_{i_1})}{100}\frac{(\overline{p}_{i_2}- \underline{p}_{i_2})}{100}$, to interpret the accuracy of our implementation and of the Chebyshev method as such.

\subsection{European Options}
We consider a plain vanilla European call option on one asset as well as a European digital down{\&}out option, first. Both derivatives have been introduced in Table~\ref{table-exEuropayoff}. For these products we investigate the performance of the Chebyshev interpolation method for the Heston model and the L\'evy models of Black{\&}Scholes, Merton and the CGMY model. 
\begin{table}[htb!]
\begin{center}
\begin{tabular}{@{}lllllll@{}}
\toprule
\textbf{Model}  & \phantom{a} & \multicolumn{2}{c}{\textbf{fixed parameters}}                  & \phantom{a} & \multicolumn{2}{c}{\textbf{free parameters}}                \\ 
       &                 & \multicolumn{1}{c}{$p^1$} & \multicolumn{1}{c}{$p^2$} &          & \multicolumn{1}{c}{$p^1$} & \multicolumn{1}{c}{$p^2$} \\ \midrule
BS     &                 & $K=1$                     & $\sigma=0.2$             &          & $S_0/K\in[0.8,\ 1.2]$    & $T\in[0.5,2]$  \\ \midrule
Merton    &                 & $K=1$                     & $\sigma=0.15$,             &          & $S_0/K\in[0.8,\ 1.2]$    &  $T\in[0.5,2]$                          \\
       &                 &                           & $\alpha = -0.04$,             &          &  &                           \\
       &                 &                           & $\beta=0.02$,              &          &                           &         \\
       & & &  $\lambda=3$ \\ 
\midrule
CGMY   &                 & $K=1$                     & $C=0.6$,                  &          & $S_0/K\in[0.8,\ 1.2]$    &  $T\in[0.5,2]$    \\
       &                 &                           & $G=10$,                   &          &  &                           \\
       &                 &                           & $M=28$,                   &          &                           &                           \\
       &                 &                           & $Y=1.1$                   &          &                           &                           \\ \midrule
Heston &                 & $K=1$                    & $T=2$,             &          & $S_0/K\in[0.8,\ 1.2]$     & $v_0\in[0.1^2,\ 0.4^2]$   \\
       &                 &                      & $\kappa=1.5$,           &          &                           &                           \\
       &                 &                           & $\theta=0.2^2$,            &          &                           &                           \\
       &                 &                           & $\sigma=0.25$,                &          &                           &                           \\
       & & & $\rho=0.1$ \\\bottomrule
\end{tabular}
\caption{Parametrization of models and European call option.}
\label{tab:ChebyD2EuroCallparam}
\end{center}
\end{table}
 We keep the strike parameter constant, $p^1=k=\log(K)$ for $K=1$. As previously discussed in Remark~\ref{cor-callLevy}, this is no restriction of generality. We also disregard interest rates, setting $r=0$. For the three L\'evy models we vary the maturity $T$ (in years) as well as the option moneyness $S_0/K$. All three models fall within the scope of Corollary~\ref{cor-callLevy}, where the error is analyzed explicitly. Thus we expect (sub)exponential convergence for the three L\'evy models. For the Heston model we vary $S_0/K$ and let $v_0$ as one of the model parameters float. Due to the analyticity of the Fourier transform of the payoff in $S_0/K$ and of the characteristic function of the process in $v_0$, compare Section \ref{sec:Heston_Model_for_Two_Assets}, we expect this convergence for the Heston model as well.
  A detailed overview of the realistically chosen parametrization is given by Table \ref{tab:ChebyD2EuroCallparam}. 
 For numerical integration in Fourier pricing we use Matlab's \texttt{quadgk} routine over the interval $[0,\infty)$ with absolute precision bound of $\varepsilon<10^{-14}$.

The first question we address concerns the achievable accuracy with a fixed number of Chebyshev polynomials. We set $N_1=N_2=10$ and precompute the Chebyshev coefficients as defined in \eqref{def:Chebycj} with $D=2$ using the parametrization of Table \ref{tab:ChebyD2EuroCallparam} for the models therein. We evaluate the resulting polynomial over a parameter grid of dimension $D=2$ and compute the approximate European option prices in each node. As a comparison, we also compute the respective Fourier price via numerical integration of the accordingly parametrized integrand in \eqref{FT-Price}. Figure~\ref{fig:Cheby2Dhowgoodwith10_call} shows the results for the European call option. The accuracy achieved by $N=N_1=N_2=10$ shows a significant spread over the four different models but reaches very satisfying error levels from $10^{-7}$ to $10^{-10}$. Increasing the number of Chebyshev points further improves the accuracy. Since at its core the implementation of the Chebyshev method consists of summing up matrices, this refinement comes at virtually no additional cost.

\begin{figure}[htcp!]
	\centering
	\begin{subfigure}[t]{.48\linewidth}
		\centering
		\includegraphics[scale=1]{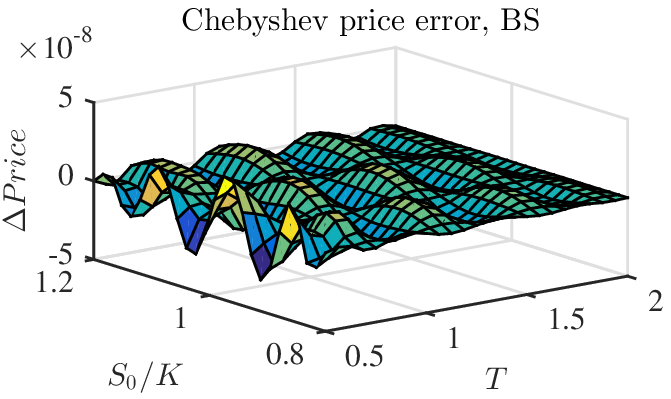}
	\end{subfigure}
	\begin{subfigure}[t]{.48\linewidth}
		\centering
		\includegraphics[scale=1]{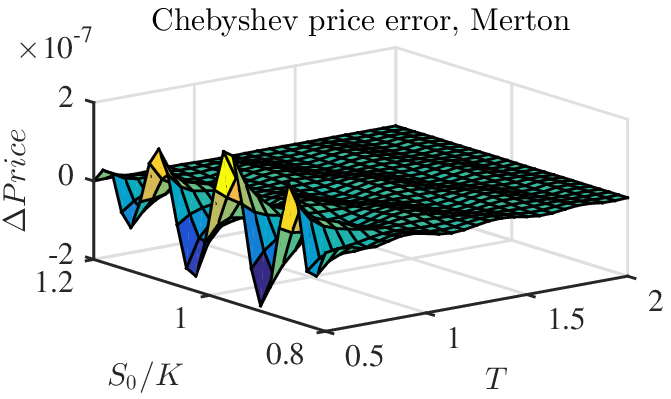}
	\end{subfigure}
	\begin{subfigure}[t]{.48\linewidth}
		\centering
		\includegraphics[scale=1]{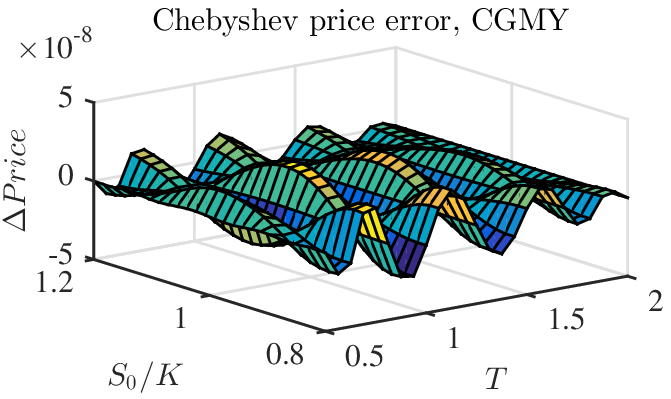}
	\end{subfigure}
	\begin{subfigure}[t]{.48\linewidth}
		\centering
		\includegraphics[scale=1]{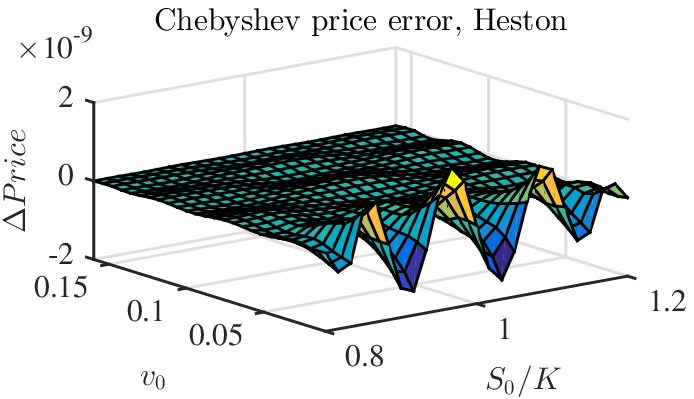}
	\end{subfigure}
	\caption{Absolute pricing error for a European call option with strike $K=1$ in various models. We compare the Chebyshev interpolation with $N=N_1=N_2=10$ to classic Fourier pricing by numerical integration. The parametrization of the models and the option has been chosen according to Table \ref{tab:ChebyD2EuroCallparam}. We observe that the achieved accuracy varies significantly between the models. The order of accuracy of \BS prices is identical to the accuracy for the CGMY case. The Merton model falls behind by one order of magnitude, while the Heston model shows a very strong fit between prices and their approximation by the Chebyshev method.}\label{fig:Cheby2Dhowgoodwith10_call}
\end{figure}
In the same parametrization setting we price a European digital down{\&}out option. While a call payoff profile is not differentiable but at least continuous, the digital payoff function is not even continuous, compare Table~\ref{table-exEuropayoff}. This reduction in smoothness of the payoff function reduces the accuracy of the interpolation $p\mapsto\Price^{(p)}$ as well. We compare the Chebyshev interpolation again with $N_1=N_2=10$ to classic Fourier pricing by numerical integration. The parametrization of the models and the option has been chosen again according to Table~\ref{tab:ChebyD2EuroCallparam}, where $K=1$ now denotes the parameter of the digital option. Figure~\ref{fig:Cheby2Dhowgoodwith10_digi} shows the results. Comparing the results of the call option pricing in Figure~\ref{fig:Cheby2Dhowgoodwith10_call} to the results of the digital option pricing in Figure~\ref{fig:Cheby2Dhowgoodwith10_digi} we observe that the accuracy achieved by $N_1=N_2=10$ is reduced by a factor $10^1$ to $10^2$. This demonstrates how little the reduced smoothness of the payoff profile effects the accuracy of the Chebyshev method in these cases.
\begin{figure}[htcp!]
	\centering
	\begin{subfigure}[t]{.48\linewidth}
		\centering
		\includegraphics[scale=1]{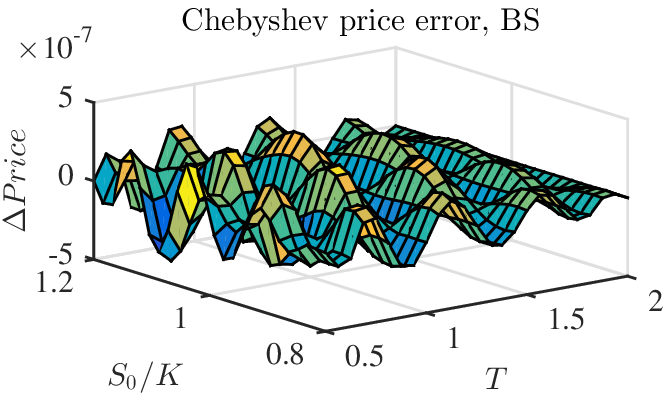}
	\end{subfigure}
	\begin{subfigure}[t]{.48\linewidth}
		\centering
		\includegraphics[scale=1]{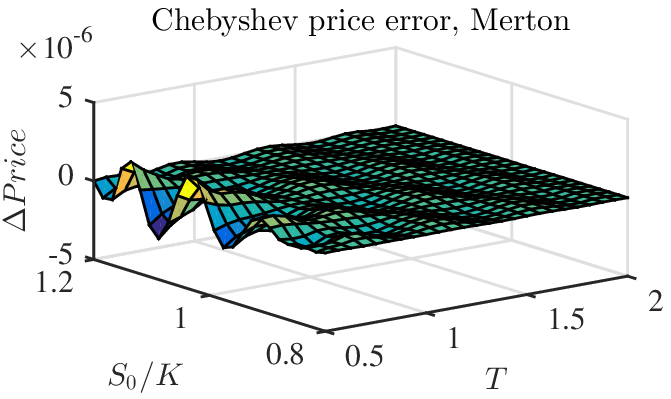}
	\end{subfigure}
	\begin{subfigure}[t]{.48\linewidth}
		\centering
		\includegraphics[scale=1]{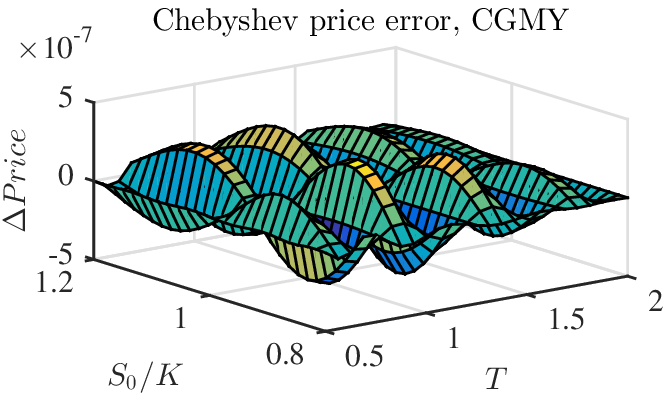}
	\end{subfigure}
	\begin{subfigure}[t]{.48\linewidth}
		\centering
		\includegraphics[scale=1]{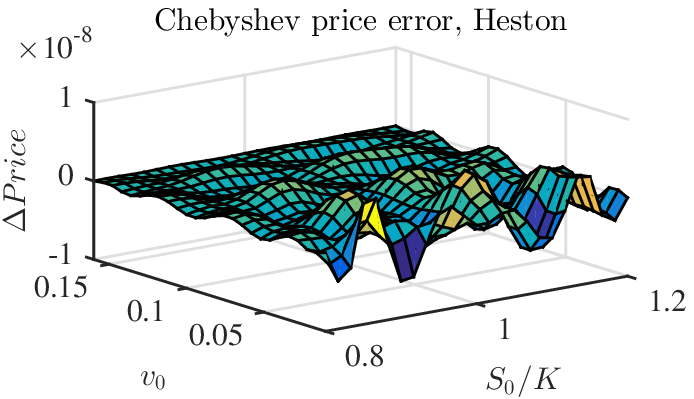}
	\end{subfigure}
	\caption{Absolute pricing error for a European digital down{\&}out option with $K=1$ in various models. The plots of this figure correspond to the plots of Figure~\ref{fig:Cheby2Dhowgoodwith10_call} where a Europan call option was priced instead. We compare again the performance of the Chebyshev method for $N=N_1=N_2=10$. All four price surfaces lose between one to two orders of magnitude in accuracy compared to their call option counterparts.}\label{fig:Cheby2Dhowgoodwith10_digi}
\end{figure}
Furthermore we conduct an empirical convergence study for this very same setting of option and model parametrization. For an increasing degree $N=N_1=N_2$, the Chebyshev polynomial is set up and prices over a parameter grid of structure \eqref{eq:defevalgrid} are computed. Again, Fourier pricing serves as a comparison. For each $N\in\{1,\dots,30\}$, the error measures $\varepsilon_{L^\infty}$ and $\varepsilon_{L^2}$, defined by \eqref{eq:deferrormeas} on the discrete parameter grid $\overline{\mathcal{P}}$ defined in \eqref{eq:defevalgrid}, are evaluated. We observe exponential convergence for all four models. Figure~\ref{fig:ChebyErrorDecayAllLog10scale_call} shows the decay for the European call option while Figure~\ref{fig:ChebyErrorDecayAllLog10scale_digi} does so for the European digital down{\&}out option.\\

\begin{center}
\begin{figure}[htcp!]
	\includegraphics[scale=0.92]{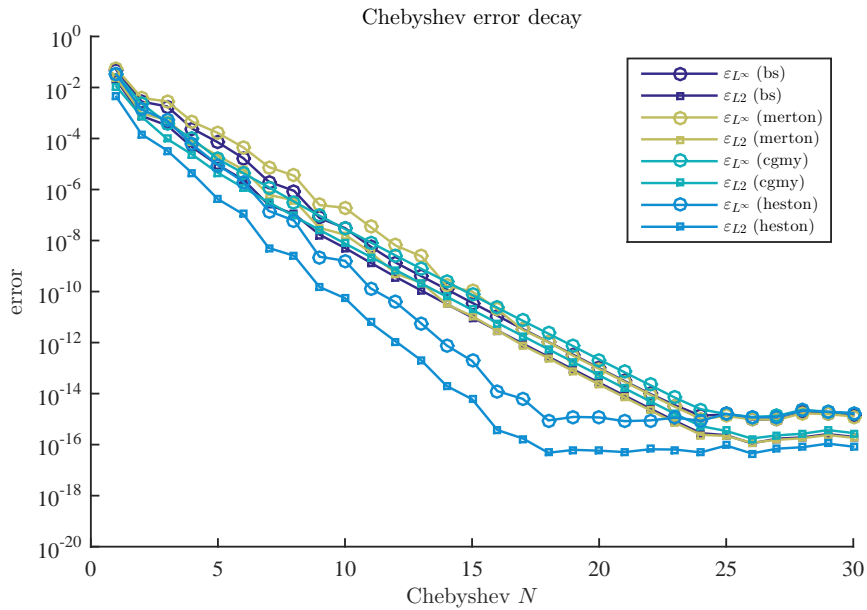}
 	\caption{Convergence study for the Black{\&}Scholes, Merton, CGMY and the Heston model for prices of a European call option parametrized as stated in Table \ref{tab:ChebyD2EuroCallparam}. The reference price is derived by Fourier pricing and numerical integration with an absolute accuracy of $10^{-14}$, which is reached by all models for $N=N_1=N_2\approx 25$ the latest. The error decay for the three L\'evy models of Black{\&}Scholes, Merton and the CGMY model roughly coincide, extending the findings from Figure \ref{fig:Cheby2Dhowgoodwith10_call}. }\label{fig:ChebyErrorDecayAllLog10scale_call}
\end{figure}
\end{center}
\begin{center}
\begin{figure}[htcp!]
	\includegraphics[scale=0.92]{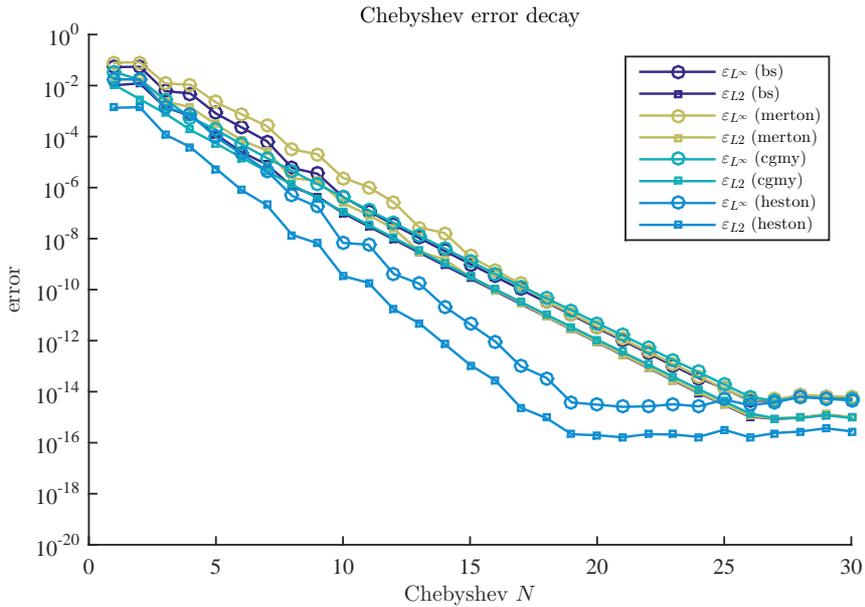}
 	\caption{Convergence study for prices of a European digital down{\&}out option in the Black{\&}Scholes, Merton, CGMY and the Heston model parametrized as stated in Table \ref{tab:ChebyD2EuroCallparam}. The results of this figure correspond to the results shown in Figure~\ref{fig:ChebyErrorDecayAllLog10scale_call} where the error decay of prices of a European call option in the same model setup was analyzed. Now higher $N$ and thus more Chebyshev nodes are needed to reach the same levels of accuracy as before.} \label{fig:ChebyErrorDecayAllLog10scale_digi}
\end{figure}
\end{center}
Defining $\zeta = \frac{\overline{T} + \underline{T}}{\overline{T} - \underline{T}}$ and setting $\varrho=\zeta +\sqrt{\zeta^2-1}$, the theoretical convergence analysis predicts a slope of the error decays in both Figure~\ref{fig:ChebyErrorDecayAllLog10scale_call} as well as Figure~\ref{fig:ChebyErrorDecayAllLog10scale_digi} of at least
\begin{equation*}
	\mathcal{S} = \log_{10}\left(\varrho\right) \approx -0.48
\end{equation*}
or steeper. Note that $\zeta_2 = \frac{\overline{S_0}\underline{K}+\underline{S_0}\overline{K}}{\overline{S_0}\underline{K}-\underline{S_0}\overline{K}}$ leads to a higher $\varrho$ and for this analysis we compare the slope to the minimal $\underline{\varrho}$ value as in Corollary \ref{cor-Asymptotic_error_Expo}. Empirically, we observe a slope for the \BS model of about $\mathcal{S}_\text{BS} = -0.64$, for the Merton model of $\mathcal{S}_\text{Merton} = -0.61$ and for the CGMY model of $\mathcal{S}_\text{CGMY} = -0.61$. Thus, the error in each L\'evy model empirically confirms the theoretical claim of Remark~\ref{cor-callLevy}. 

\FloatBarrier

\subsection{Basket and Path-dependent Options}\label{accuracy_exotics}
In this section we use the Chebyshev method to price basket and path-dependent options. First, we apply the method to interpolate Monte-Carlo estimates of prices of financial products and check the resulting accuracy. To this aim we exemplarily choose basket, barrier and lookback options in 5-dimensional \BS, Heston and Merton models. Second, we combine the Chebyshev method with a Crank-Nicolson finite difference solver with Brennan Schwartz approximation, see \cite{BrenanSchwartz1977}, for pricing a univariate American put option in the \BS model.

In our Monte-Carlo simulation we use $10^6$ sample paths, antithetic variates as variance reduction technique and 400 time steps per year.
The error of the Monte-Carlo method cannot be computed directly. We thus turn to statistical error analysis and use the well-known 95\% confidence bounds to determine the accuracy. Following the assumption of a normally distributed Monte-Carlo estimator with mean equal to the estimator's value and variance equal to the empirical variance of the payoff on the Monte-Carlo samples these bounds are derived. The confidence bounds then yield a range around the mean that includes the true price with 95\% probability. We pick two free parameters $p_{i_1},\ p_{i_2}$ out of \eqref{def-para}, $1\leq i_1<i_2\leq D$, in each model setup and fix all other parameters at reasonable constant values. In this section we define the discrete parameter grid $\overline{\mathcal{P}} \subseteq[\underline{p}_{i_1},\overline{p}_{i_1}]\times [\underline{p}_{i_2},\overline{p}_{i_2}]$  by
	\begin{equation}
	\label{MC_Testgrid}
		\begin{split}
			\overline{\mathcal{P}} =&\ \left\{\left(p_{i_1}^{k_{i_1}},p_{i_2}^{k_{i_2}}\right),\ {k_{i_1}},{k_{i_2}}\in\{0,\dots,40\}\right\},\\
			p_{i_j}^{k_{i_j}} =&\ \underline{p}_{i_j} + \frac{k_{i_j}}{40}\left( \overline{p}_{i_j}- \underline{p}_{i_j}\right),\ k_{i_j}\in\{0,\dots,40\},\ j\in\{1,2\},
		\end{split}
	\end{equation}	
and call $\overline{\mathcal{P}}$ test grid. On this test grid the largest confidence bound is 0.025 an on average lees than 0.013. For the finite difference method we find that the absolute error between numerical approximation and option price is below 0.005 on all computed parameter tuples in $\overline{\mathcal{P}}$. This error bound was computed by comparing each approximation to the limit of the sequence of finite difference approximations with increasing grid size. In our calculations we work with a grid size in time as well as in space (log-moneyness) of $50\cdot\max\{1,T\}$ and compared the result to the resulting prices using grid sizes of $1000\cdot\max\{1,T\}$. This grid size has been determined as sufficient for the limit due to  hardly any changes compared to grid sizes of $500\cdot\max\{1,T\}$.

Here, our main concern is the accuracy of the Chebyshev interpolation when we vary for each option the parameters strike and maturity analogously to the previous section. For $N\in\{5,10,30\}$, we precompute the Chebyshev coefficients as defined in~\eqref{def:Chebycj} with $D=2$ where always $N_1=N_2=N$. An overview of fixed and free parameters in our model selection is given in Table~\ref{tab:ChebyD2Exoticsparam}. For computational simplicity in the Monte-Carlo simulation, we assume uncorrelated underlyings.

\begin{table}[h]
\begin{center}
\begin{tabular}{@{}lllllll@{}}
\toprule
\textbf{Model}  & \phantom{a} & \multicolumn{2}{c}{\textbf{fixed parameters}}                  & \phantom{a} & \multicolumn{2}{c}{\textbf{free parameters}}                \\ 
       &                 & \multicolumn{1}{c}{$p^1$} & \multicolumn{1}{c}{$p^2$} &          & \multicolumn{1}{c}{$p^1$} & \multicolumn{1}{c}{$p^2$} \\ \midrule
BS     &                 & $S^j_{0}=100$,                     & $\sigma_j=0.2$             &          & $K\in[83.33,\ 125]$    &    $T\in[0.5,2]$                       \\
       &                 &   $r=0.005$                        &                           &          &              &                           \\ \midrule

Heston &                 & $S^j_{0}=100$,                    & $\kappa_j=2$,             &          & $K\in[83.33,\ 125]$     &  $T\in[0.5,2]$   \\
       &                 &    $r=0.005$                  & $\theta_j=0.2^2$,           &         &                           &                           \\
       &                 &                           & $\sigma_j=0.3$,            &          &                           &                           \\
       &                 &                           & $\rho_j=-0.5$,                &          &                           &                           \\
       &                 &                           & $v_{j,0}=0.2^2$                &          &                           &                           \\
        \midrule

Merton &                 & $S^j_{0}=100$,                    & $\sigma_j=0.2$,             &          & $K\in[83.33,\ 125]$     & $T\in[0.5,2]$    \\
       &                 &       $r=0.005$               & $\alpha_j=-0.1$,           &         &                           &                           \\
       &                 &                           & $\beta_j=0.45$,            &          &                           &                           \\
       &                 &                           & $\lambda_j=0.1$                &          &                           &                           \\ \bottomrule
\end{tabular}
\end{center}
\caption{Parametrization of models, basket and path-dependent options. The model parameters are given for $j=1,\ldots,d$ to reflect the multivariate setting with free parameters strike $K$ and maturity $T$. Note that in contrast to the two dimensional Heston model described in Section \ref{sec:Heston_Model_for_Two_Assets} we use here in the numerical experiments a multivariate Heston model in which the volatility of each underlying is driven by its own volatility process.}
\label{tab:ChebyD2Exoticsparam}
\end{table}

Let us briefly define the payoffs of the multivariate basket and path-dependent options. The payoff profile of a basket option for $d$ underlyings is given as
$$f^K\big(S^1_T,\ldots,S^d_T\big)=\left(\left(\frac{1}{d}\sum_{j=1}^d S^j_T \right)-K\right)^{+}. $$
We denote $S_t=(S^1_t,\ldots,S^d_t)$, $\underline{S}^j_T:=\min_{0\le t\le T}S^j_t$ and $\overline{S}^j_T:=\max_{0\le t\le T}S^j_t$. A lookback option for $d$ underlyings is defined as
$$f^K\left(\overline{S}^1_T,\ldots,\overline{S}^d_T\right)=\left(\left(\frac{1}{d}\sum_{j=1}^d \overline{S}^j_T\right)-K\right)^{+}.$$ As a multivariate barrier option on $d$ underlyings we define the payoff 
$$f^K\big(\{S(t)\}_{0\le t\le T}\big)=\left(\left(\frac{1}{d}\sum_{j=1}^d S^j_T\right)-K\right)^{+}\cdot \mathbbm{1}_{\{\underline{S}^j_T\ge 80,\ j=1,\ldots,d\}}. $$
For an American put option the payoff is the same as for a European put,
$$f^K\big(S_t\big)=\left(K-S_t\right)^{+},$$
but the option holder has the right to exercise the option at any time $t$ up to maturity $T$.\\

\begin{table}[h!]
\begin{center}
\begin{tabular}{llcccc}
\toprule
\textbf{Model}&\textbf{Option}&$\varepsilon_{L^\infty}$&\textbf{MC price}&\textbf{MC conf. bound}& \textbf{CI price}\\
\hline
BS&Basket&$1.338\cdot10^{-1}$&8.6073&$1.171\cdot10^{-2}$   &8.4735\\
Heston&Basket&$9.238\cdot10^{-2}$&0.0009&$1.036\cdot10^{-4}$&0.0933\\
Merton&Basket&$9.815\cdot10^{-2}$&8.8491&$1.552\cdot10^{-2}$&8.7510\\
\hline
BS&Lookback&$2.409\cdot10^{-1}$&9.4623&$9.861\cdot10^{-3}$&9.2213\\
Heston&Lookback&$5.134\cdot10^{-1}$&0.0314&$6.472\cdot10^{-4}$&-0.4820\\
Merton&Lookback&$2.074\cdot10^{-1}$&1.0919&$9.568\cdot10^{-3}$&0.8844\\
\hline
BS&Barrier&$1.299\cdot10^{-1}$&1.0587&$5.092\cdot10^{-3}$&1.1887\\
Heston&Barrier&$1.073\cdot10^{-1}$&2.7670&$9.137\cdot10^{-3}$&2.6597\\
Merton&Barrier&$9.916\cdot10^{-2}$&1.3810&$1.102\cdot10^{-2}$&1.4802\\
\bottomrule
\end{tabular}
\caption{Interpolation of exotic options with Chebyshev interpolation. $N=5$ and $d=5$ in all cases. In addition to the $L^{\infty}$ errors the table displays the Monte-Carlo (MC) prices, the Monte-Carlo confidence bounds and the Chebyshev Interpolation (CI) prices for those parameters at which the $L^{\infty}$ error is realized.}\label{table-exotics_N5}
\end{center}
\end{table}

\begin{table}[h!]
\begin{center}
\begin{tabular}{llcccc}
\toprule
\textbf{Model}&\textbf{Option}&$\varepsilon_{L^\infty}$&\textbf{MC price}&\textbf{MC conf. bound}& \textbf{CI price}\\
\hline
BS&Basket&$2.368\cdot10^{-3}$&2.4543&$7.493\cdot10^{-3}$&2.4566\\
Heston&Basket&$2.134\cdot10^{-3}$&3.1946&$1.073\cdot10^{-2}$&3.1925\\
Merton&Basket&$3.521\cdot10^{-3}$&6.1929&$2.231\cdot10^{-2}$&6.1894\\
\hline
BS&Lookback&$2.861\cdot10^{-2}$&0.9827&$4.197\cdot10^{-3}$&0.9541\\
Heston&Lookback&$1.098\cdot10^{-1}$&2.0559&$4.826\cdot10^{-3}$&2.1656\\
Merton&Lookback&$3.221\cdot10^{-2}$&4.7072&$1.264\cdot10^{-2}$&4.7394\\
\hline
BS&Barrier&$4.414\cdot10^{-3}$&5.3173&$1.725\cdot10^{-2}$&5.3129\\
Heston&Barrier&$5.393\cdot10^{-3}$&0.7158&$5.879\cdot10^{-3}$&0.7212\\
Merton&Barrier&$3.376\cdot10^{-3}$&9.2688&$2.302\cdot10^{-2}$&9.2722\\
\bottomrule
\end{tabular}
\caption{Interpolation of exotic options with Chebyshev interpolation. $N=10$ and $d=5$ in all cases. In addition to the $L^{\infty}$ errors the table displays the Monte-Carlo (MC) prices, the Monte-Carlo confidence bounds and the Chebyshev Interpolation (CI) prices for those parameters at which the $L^{\infty}$ error is realized.}\label{table-exotics_N10}
\end{center}
\end{table}

\begin{table}[h!]
\begin{center}
\begin{tabular}{llcccc}
\toprule
\textbf{Model}&\textbf{Option}&$\varepsilon_{L^\infty}$&\textbf{MC price}&\textbf{MC conf. bound}& \textbf{CI price}\\
\hline
BS&Basket&$1.452\cdot10^{-3}$&5.1149&$1.200\cdot10^{-2}$&5.1163\\
Heston&Basket&$1.047\cdot10^{-3}$&7.6555&$1.371\cdot10^{-2}$&7.6545\\
Merton&Basket&$3.765\cdot10^{-3}$&7.2449&$2.359\cdot10^{-2}$&7.2412\\
\hline
BS&Lookback&$3.766\cdot10^{-3}$&25.9007&$1.032\cdot10^{-2}$&25.9045\\
Heston&Lookback&$1.914\cdot10^{-3}$&16.4972&$9.754\cdot10^{-3}$&16.4991\\
Merton&Lookback&$3.646\cdot10^{-3}$&27.1018&$1.623\cdot10^{-2}$&27.1054\\
\hline
BS&Barrier&$5.331\cdot10^{-3}$&5.6029&$1.730\cdot10^{-2}$&5.6082\\
Heston&Barrier&$2.486\cdot10^{-3}$&3.6997&$1.353\cdot10^{-2}$&3.6972\\
Merton&Barrier&$4.298\cdot10^{-3}$&6.6358&$2.309\cdot10^{-2}$&6.6315\\
\bottomrule
\end{tabular}
\caption{Interpolation of exotic options with Chebyshev interpolation. $N=30$ and $d=5$ in all cases. In addition to the $L^{\infty}$ errors the table displays the Monte-Carlo (MC) prices, the Monte-Carlo confidence bounds and the Chebyshev Interpolation (CI) prices for those parameters at which the $L^{\infty}$ error is realized.}\label{table-exotics_N30}
\end{center}
\end{table}

We now turn to the results of our numerical experiments. In order to evaluate the accuracy of the Chebyshev interpolation we look for the worst case error $\varepsilon_{L^{\infty}}$. The absolute error of the Chebychev interpolation method can be directly computed by comparing the interpolated option prices with those obtained by the reference numerical algorithm i.e. either the Monte-Carlo or the Finite Difference method. Since the Chebychev interpolation matches the reference method on the Chebychev nodes, we will use the out-of-sample test grid as in \eqref{MC_Testgrid}.  Table \ref{table-exotics_N5} shows the numerical results for the basket and path-dependent options for $N=5$, Table \ref{table-exotics_N10} for $N=10$ and Table \ref{table-exotics_N30} for $N=30$. In addition to the $L^{\infty}$ errors the tables display the Monte-Carlo (MC) prices, the Monte-Carlo confidence bounds and the Chebyshev Interpolation (CI) prices for those parameters at which the $L^{\infty}$ error is realized. 

The results show that for $N=30$ the accuracy is for all selected options at a level of $10^{-3}$. We see that the Chebyshev interpolation error is dominated by the Monte-Carlo confidence bounds to a degree which renders it negligible in a comparison between the two. For basket and barrier options the $L^{\infty}$ error already reaches satisfying levels of order $10^{-3}$ at $N=10$ already. Again, the Chebyshev approximation falls within the confidence bounds of the Monte-Carlo approximation. Thus, Chebyshev interpolation with only $121=(10+1)^2$ nodes suffices for mimicking the Monte Carlo pricing results. This statement does not hold for lookback options, where the $L^{\infty}$ error still differs noticeably when comparing $N=10$ to $N=30$. As can be seen from Table\tild \ref{table-exotics_N5} Chebyshev interpolation with $N=5$ may yield unreliable pricing results. For lookback options in the Heston model we even observe negative prices in individual cases.
\begin{table}[h!]
\begin{center}
\begin{tabular}{lccc}
\toprule
$N$&$\varepsilon_{L^\infty}$& \textbf{FD price}&\textbf{CI price}\\
\hline
5&$3.731\cdot10^{-3}$&1.9261&1.9224\\
10&$1.636\cdot10^{-3}$&12.0730&12.0746\\
30&$3.075\cdot10^{-3}$&6.3317&6.3286\\
\bottomrule
\end{tabular}
\caption{Interpolation of one-dimensional American puts with Chebyshev interpolation in the \BS model. In addition to the $L^{\infty}$ errors the table displays the Finite Differences (FD) prices and the Chebyshev Interpolation (CI) prices for those parameters at which the $L^{\infty}$ error is realized.}\label{table-exotics_American}
\end{center}
\end{table}
Chebyshev pricing of American options in the \BS model is even more accurate as illustrated in Table \ref{table-exotics_American}. Here, already for $N=5$ the accuracy of the reference method is achieved. We conclude that the Chebyshev interpolation is highly promising for the evaluation of multivariate basket and path-dependent options. Yet the accuracy of the interpolation critically depends on the accuracy of the reference method at the nodal points which motivates further analysis that we perform in the subsequent subsection.

\subsubsection{Interaction of Approximation Errors at Nodal Points and Interpolation Errors}
\label{accuracy_exotics_bound}
The Chebyshev method is most promising for use cases, where computationally intensive pricing methods are required. Then, for computing the prices at the Chebyshev nodes in order to set up the interpolation, the issue of distorted prices at the Chebyshev nodes and their consequences rises naturally. The observed noisy prices at the Chebyshev nodes are
\begin{align*}
\Price_{\varepsilon}^{p^{(k_1,\dots,k_D)}}=\Price^{p^{(k_1,\dots,k_D)}} + \varepsilon^{p^{(k_1,\dots,k_D)}},
\end{align*}
where $\varepsilon^{p^{(k_1,\dots,k_D)}}$ is the approximation error introduced by the underlying numerical technique at the Chebyshev nodes. Due to linearity, the resulting interpolation is of the form
	\begin{equation}
		I_{\overline{N}}(\Price_{\varepsilon}^{(\cdot)})(p) = I_{\overline{N}}(\Price^{(\cdot)})(p)+I_{\overline{N}}(\varepsilon^{(\cdot)})(\cdot)
	\end{equation}
with the error function
	\begin{equation}
		\varepsilon^{(p)}=\sum_{j_D=0}^{N_D}\ldots\sum_{j_1=0}^{N_1}c_{j_1,\ldots,j_D}^\varepsilon T_{j_1,\ldots,j_D}(p),
	\end{equation}
with the coefficients $c^\varepsilon_j$ for $j=(j_1,\dots, j_D)\in J$ given by
	\begin{equation}
		c^\varepsilon_j = \Big( \prod_{i=1}^D \frac{2^{\1_{\{0<j_i<N_i\}}}}{N_i}\Big)\sum_{k_1=0}^{N_1}{}^{''}\ldots\sum_{k_D=0}^{N_D}{}^{''} \varepsilon^{p^{(k_1,\dots,k_D)}}\prod_{i=1}^D \cos\left(j_i\pi\frac{k_i}{N_i}\right).
	\end{equation}
If $\varepsilon^{p^{(k_1,\dots,k_D)}}\le \overline{\varepsilon}$ for all Chebyshev nodes $p^{(k_1,\dots,k_D)}$, we obtain
\begin{equation}
|\varepsilon^{(p)}|\le 2^D \bar{\varepsilon}\prod_{i=1}^D (N_i+1),
\end{equation}
since the Chebyshev polynomials are bounded by $1$. 
This yields the following remark.
\begin{remark}\label{Remark_Error}
Let $\OP\ni p\mapsto \Price^p$ be given as in Theorem \ref{Asymptotic_error_decay_multidim} and assume that $\varepsilon^{p^{(k_1,\dots,k_D)}}\le \overline{\varepsilon}$ for all Chebyshev nodes $p^{(k_1,\dots,k_D)}$. Then
  \begin{equation}
  \label{absch-TCana}
  \begin{split}
 \max_{p\in\OP}\big|\Price^p& - I_{\overline{N}}(\Price^{(\cdot)}_\varepsilon)(p)\big|
  \\
 &\le2^{\frac{D}{2}+1}\cdot V \cdot\left(\sum_{i=1}^D\varrho_i^{-2N_i}\prod_{j=1}^D\frac{1}{1-\varrho_j^{-2}}\right)^{\frac{1}{2}} + 2^D \bar{\varepsilon}\prod_{i=1}^D (N_i+1).
  \end{split}
  \end{equation} 
\end{remark}
The following example shall illustrate the practical consequences of Remark \ref{Remark_Error}. In the setting of Corollary \ref{cor-callLevy} we set  $[\underline{S_0}/\overline{K},\overline{S_0}/\underline{K}]=[0.8,1.2]$, $[\underline{T},\overline{T}]=[0.5,2]$. This results in $\zeta^1=\frac{2.5}{1.5}=\frac{5}{3}$ and $\zeta^2=\frac{2}{0.4}=5$. Thus, for $\varrho_1=2.9\in(1,3)$ and $\varrho_2=9.8\in(1,5+\sqrt{24})$, Remark \ref{Remark_Error} yields with $N_1=N_2=6$,
     \begin{align*}
   \max_{p\in\OP}\big|\Price^p - I_{\overline{N}}(\Price^{(\cdot)})(p)\big|&\le 0.0072 + 196\cdot\bar{\epsilon}.
  \end{align*}
  In this example, the accuracy of the reference method has to reach a level of $10^{-5}$ to guarantee an overall error of order $10^{-3}$. This demonstrates a trade-off between increasing $N_1$ and $N_2$ compared to the accuracy of the reference method. The error bound above is rather conservative. Our experiments from the previous section suggest that this bound highly overestimates the errors empirically observed. However, the presented error bound from Remark \ref{Remark_Error} can guarantee a desired accuracy by determining an adequate number of Chebyshev nodes and the corresponding accuracy of the reference method used at the Chebyshev nodes. For practical implementation we suggest the following procedure. For a prescribed accuracy the $N_i$, $i=1,\dots,D$, can be determined from the first term in \eqref{absch-TCana} by choosing $N_i$, $i=1,\dots,D$, as small as possible such that the prescribed accuracy is attained. Accordingly, the accuracy that the reference method needs to achieve is bounded by the second term. A very accurate reference method in combination with small $N_i$, $i=1,\dots,D$, promise best results. With this rule of thumb in mind the experiments of Section \ref{sec:ChebyshevEfficiency_MC} below have been conducted.

\subsection{Study of the gain of efficiency}
In the previous section we investigated the accuracy of the Chebyshev polynomial interpolation method using Fourier, Monte-Carlo and Finite Difference as reference pricing methods. Finally, we investigate the gain in efficiency achieved by the method in comparison with Fourier pricing as well as in comparison to Monte Carlo pricing. In Section \ref{sec:ChebyshevEfficiency} we compute the results on a standard PC with an Intel i5 CPU, 2.50~GHz with cache size 3~MB. In Section \ref{sec:ChebyshevEfficiency_MC} we used a PC with Intel Xeon~CPU with 3.10~GHz with 20~MB SmartCache. All codes are written in Matlab R2014a.

\subsubsection{Comparison to Fourier pricing}
\label{sec:ChebyshevEfficiency}
Here, we compare the method to Fourier pricing. We choose the pricing problem of a call option on the minimum of two assets as an example. Today's values of the underlying two assets are fixed at
	\begin{equation}
		S_0^{1} = 1,\qquad S_0^{2} = 1.2.
	\end{equation}
Modeling the future development of the underlyings, $(S_t^{j})_{t\geq 0}$, $j\in\{1,2\}$, we consider two bivariate models, separately. First, the two assets will be driven by the bivariate \BS model of Section~\ref{sec:Levy1}. The bivariate \BS model is parametrized by a covariance matrix $\sigma\in\IR^{2\times 2}$ that we choose to be given by
	\begin{equation}
		\sigma_{11} = 0.2^2,\qquad\sigma_{12} = 0.01,\qquad\sigma_{22} = 0.25^2.\label{eq:ChebyEffBSParams}
	\end{equation}
In a second efficiency study, asset movements follow the more involved bivariate Heston model in the version of Section~\ref{sec:Heston_Model_for_Two_Assets} above for which we choose the parametrization	
	\begin{equation}
	\begin{aligned}
		v_0 &= 0.05,\qquad  &\sigma_1 &= 0.15,\qquad  &\rho_{13} &= 0.01,\\
		\kappa &= 0.4963,\qquad &\sigma_2 &= 0.2,\qquad &\rho_{12} &= 0,\label{eq:ChebyEffHestonParams}\\
		\theta &= 0.2286,\qquad &\sigma_3 &= 0.1,\qquad &\rho_{23} &= 0.02.
	\end{aligned}
	\end{equation}
In both cases we neglect interest rates, thus setting $r=0$. The benchmark method, that is Fourier pricing, is evaluated using Matlab's \texttt{quad2d} routine. We prescribe an absolute and relative accuracy of at least $10^{-8}$ from the integration result and integrate the Fourier integrand over the domain \mbox{$\Omega = [-50,\ 50] \times [0,50]$}, with a maximum number of $4000$ function evaluations.
We use Fourier integration with the same accuracy specifications at the Chebyshev nodes to set up the Chebyshev method for pricing based on strike $K$ and maturity $T$ as the two free parameters taking values in the intervals
	\begin{equation}
	\begin{aligned}
		K&\in [K_\text{min},\ K_\text{max}],&\qquad K_\text{min}&=0.8,& K_\text{max}&=1.2,\\ 
		T&\in [T_\text{min},\ T_\text{max}],&\qquad T_\text{min}&=0.5,& T_\text{max}&=2.
	\end{aligned}
	\end{equation}
For a fair comparison, the number of Chebyshev polynomials is chosen such that Chebyshev interpolation prices yield an accuracy that matches the accuracy of the benchmark method resulting in
	\begin{equation}
		N_\text{Cheby}^\text{BS} = 11\quad\text{and}\quad N_\text{Cheby}^\text{Heston} = 23,
	\end{equation}
for the bivariate \BS model and the bivariate Heston model, respectively. \mbox{Figure~\ref{fig:ChebyshevEfficiencySurfaces}} illustrates the absolute errors over the whole $K\times T$ domain of interest between Fourier pricing and the Chebyshev method for both models, with the Chebyshev interpolator being based on $N_\text{Cheby}^\text{BS}+1$ polynomials in the \BS model case and $N_\text{Cheby}^\text{Heston}+1$ polynomials in the Heston model case.
	
\begin{figure}
	\centering
	\begin{subfigure}[t]{.48\linewidth}
		\includegraphics{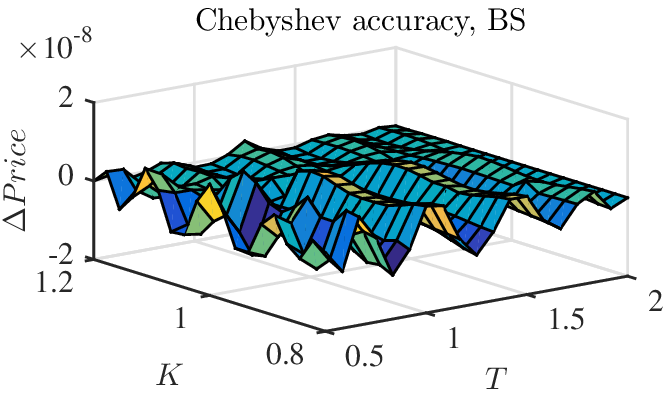}
	\end{subfigure}
	\begin{subfigure}[t]{.48\linewidth}
		\includegraphics{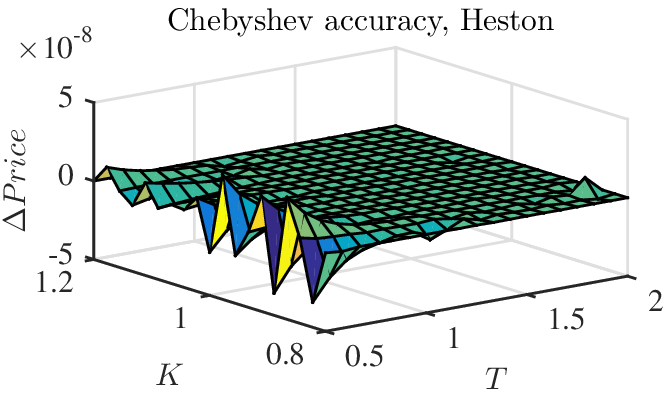}
	\end{subfigure}
\caption{Left: Difference between prices from the Fourier method and Chebyshev interpolation in the bivariate \BS model over the whole parameter domain of interest. The model is parametrized as indicated by~\eqref{eq:ChebyEffBSParams}. Chebyshev interpolation is based on $N_\text{Cheby}^\text{BS}+1 = 12$ Chebyshev polynomials. Right: The respective plot for the Heston model parametrized as in~\eqref{eq:ChebyEffHestonParams}. Here, Chebyshev interpolation is based on $N_\text{Cheby}^\text{Heston}+1 = 24$ Chebyshev polynomials. We achieve an absolute accuracy of order $10^{-8}$ in both cases, thus matching the accuracy that the benchmark method Fourier pricing reaches.} \label{fig:ChebyshevEfficiencySurfaces}
\end{figure}

In order to set up an efficiency study, we will compute sets of prices with increasing number of parameter tupels. To this end, when the offline phase of the Chebyshev method has been completed we compute~$98$ pricing surfaces, that is for each $M\in\{3,\dots,100\}$ we compute prices for all parameter tuples from $\Theta_M$ defined by
	\begin{align}
		\Theta_M = \big\{(K_i^{M}, T_j^{M})\ \big|\ &K_i^{M} = K_\text{min} + \frac{i-1}{M-1}(K_\text{max}-K_\text{min}),\notag\\
		&T_j^{M} = T_\text{min} + \frac{j-1}{M-1}(T_\text{max}-T_\text{min}),\text{ for } 1\leq i,j\leq M\big\}.\label{eq:defThetaMFourier}
	\end{align}
	
The computation time consumed by the Chebyshev offline phase is stored. Also, for each $M\in\{3,\dots,100\}$, run-times for deriving all $|\Theta_M|=M^2$ prices are measured and stored for both routines, the Fourier pricing method and the Chebyshev interpolation algorithm. Figure~\ref{fig:ChebyshevEfficiencyRuntimes} depicts these run-time measurements and Table~\ref{tab:ChebyshevEfficiencyRuntimes} provides a second perspective.
\begin{figure}
	\centering
	\makebox[0pt]{\includegraphics[width=.55\linewidth]{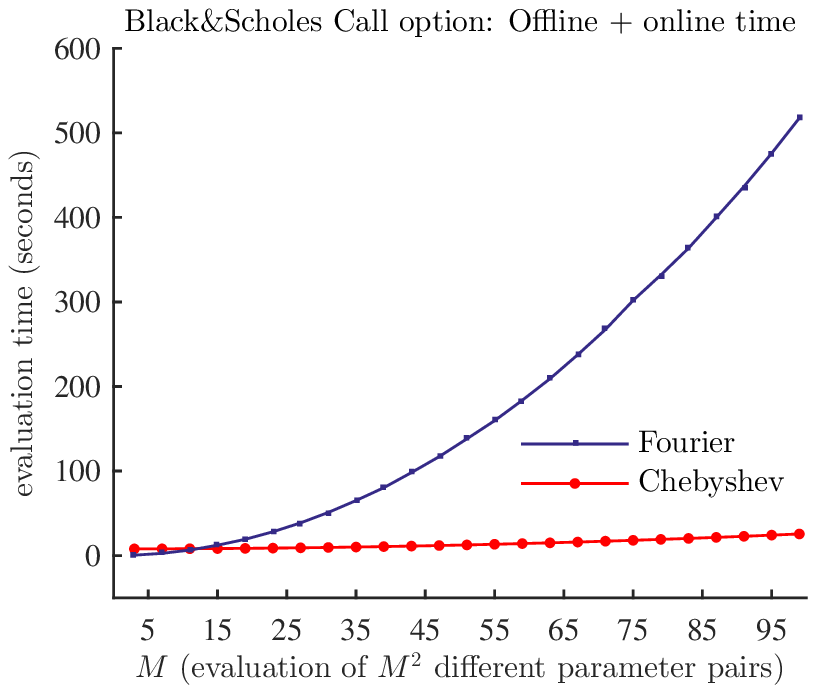}
	\includegraphics[width=.55\linewidth]{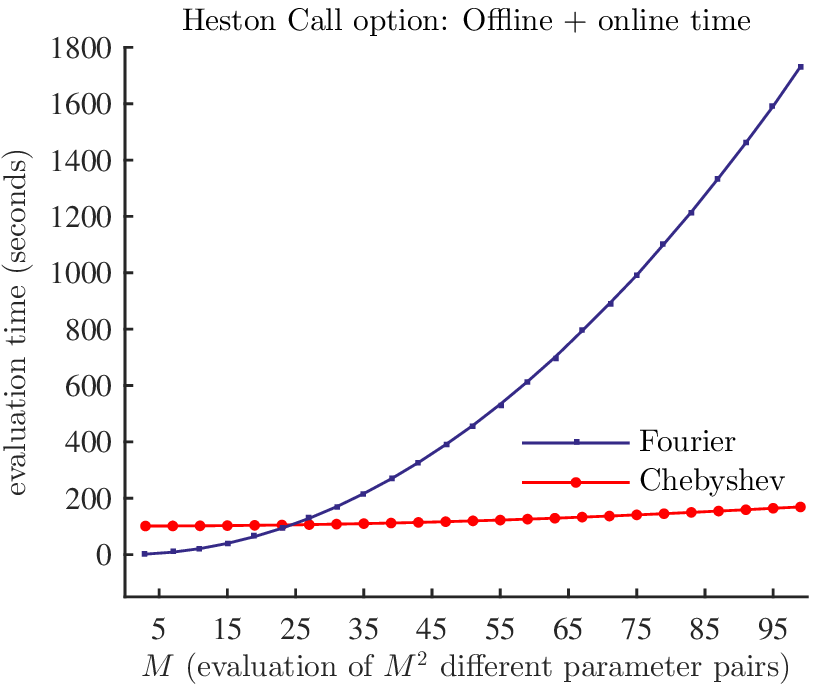}}
 \caption{Comparison of pricing times between Fourier pricing and the Chebyshev method for a call option on the minimum of two assets in the \BS model (left) and the Heston model (right). For each $M\in\{3,\dots,100\}$, run-times for deriving option prices for all $M^2$ parameter tupels from $\Theta_M$ defined by~\eqref{eq:defThetaMFourier} are depicted. In both model cases, computation times for the Chebyshev method contain the duration of the offline phase that has to be conducted once in the beginning. The Fourier and the Chebyshev curves roughly intersect when $M=N^\text{BS}_\text{Cheby}+1=12$ for the \BS model and when $M=N^\text{Heston}_\text{Cheby}+1=24$ for the Heston model, respectively. }\label{fig:ChebyshevEfficiencyRuntimes}
\end{figure}

In the \BS model case, the offline phase required $T_\text{offline}^\text{BS} = 8$ seconds for deriving option prices at all $(N_\text{Cheby}^\text{BS}+1)^2=144$ Chebyshev nodes. The Heston model required $T_\text{offline}^\text{Heston}= 101$ seconds for the $(N_\text{Cheby}^\text{Heston}+1)^2=576$ supporting prices. 
Taking this initial investment into account deems pricing with the Chebyshev method rather costly when only few option prices are derived after the offline phase has been completed. Yet, as Figure~\ref{fig:ChebyshevEfficiencyRuntimes} shows, the increase in pricing speed that is achieved once the Chebyshev algorithm has been set up eventually outpaces Fourier pricing. From our experiments we conclude that the Chebyshev method already outruns its benchmark Fourier pricing in terms of total run-times when the number of prices to be computed exceeds $(N^\text{BS}_\text{Cheby}+1)^2$ or $(N^\text{Heston}_\text{Cheby}+1)^2$, respectively. Additionally, Table \ref{tab:ChebyshevEfficiencyRuntimes} highlights that in both models for a total number of $50^2$ parameter tuples, the Chebyshev method exhibits a significant decrease in (total) pricing run-times. For the maximal number investigated, i.e. $100^2$ parameter tuples, pricing in the \BS model results in 95\% of run-time savings and 90\% run-time savings in the Heston model case in our implementation. This results from the fact the online phase in the Chebyshev method consist of computationally cheap polynomial evaluations and elementary assembling.

\begin{table}[]
\centering
\resizebox{\textwidth}{!}{  \begin{tabular}{lllllcllll}
\toprule
 & \multicolumn{4}{c}{\textbf{BS}}      &           & \multicolumn{4}{c}{\textbf{Heston}}                \\
$M$    & 10   & 50     & 75     & 100   &  \phantom{a}   & 10     & 50     & 75     & 100     \\
\midrule
\parbox[0pt][2.1em][c]{0cm}{}$T_\text{online}^\text{Cheby}$ ($s$)    & 0.18 & 4.54   & 10.20  & 18.11 &  & 0.70   & 17.58  & 39.66  & 69.82   \\ 
\parbox[0pt][2.1em][c]{0cm}{}$T_\text{offline+online}^\text{Cheby}$ ($s$)    & 8.06 & 12.42   & 18.07  & 25.98 &  & 101.96 & 118.85  & 140.92  & 171.08   \\ 
\parbox[0pt][2.1em][c]{0cm}{}$T^\text{Fourier}$ ($s$)  & 5.34 & 131.96 & 301.82 & 528.74 & & 17.60  & 442.62 & 991.33 & 1788.08 \\
\parbox[0pt][2.1em][c]{0cm}{}$\dfrac{T_\text{offline+online}^\text{Cheby}}{T^\text{Fourier}}$   & 151\%  & 9.41\%   & 5.99\%   & 4.91\%  &  & 579.27\% & 26.85\%  & 14.22\%  & 9.57\% \\
\bottomrule 
\end{tabular}}
\caption{
Efficiency study for the bivariate \BS model and the bivariate Heston model: A selection of the results fully depicted in Figure~\ref{fig:ChebyshevEfficiencyRuntimes}. With increasing number of computed prices, the Chebyshev algorithm increasingly profits from the initial investment of the offline phase.}
\label{tab:ChebyshevEfficiencyRuntimes}
\end{table}

\FloatBarrier
\subsubsection{Comparison to Monte-Carlo pricing}
\label{sec:ChebyshevEfficiency_MC}
In this section we choose a multivariate lookback option in the Heston model, based on 5 underlyings, as example. For the efficiency study we first vary one parameter, then we vary two.\\
\textbf{Variation of one model parameter}\\
For the multivariate lookback option in the Heston model the following parameters are fixed with $j=1,\dots,5$ as
	\begin{equation}
	\begin{aligned}
		S^j_{0} &= 100,\qquad  &r &= 0.005,\qquad  &K &= 100,\qquad &T&=1,\\
		\kappa_j &= 2,\qquad &\theta_j &= 0.2^2,\qquad &\rho_j &=-0.5,\qquad &v_{j,0}&=0.2^2.\label{eq:ChebyEffMCHestonParams}
	\end{aligned}
	\end{equation}
As free parameter in the Chebyshev interpolation we pick the volatility of the volatility coefficient $\sigma=\sigma_j$, $j=1,\ldots,5$,
	\begin{equation}	
		\sigma\in [\sigma_\text{min},\ \sigma_\text{max}],\qquad \sigma_\text{min}=0.1,\quad \sigma_\text{max}=0.5.	
	\end{equation}
The benchmark method is the Monte-Carlo pricing, again with $10^6$ sample paths, antithetic variates as variance reduction technique and 400 time steps per year. We refer to this setting as benchmark setting.

Following the discussion from Section \ref{accuracy_exotics_bound}, for the evaluation of the prices at the nodal points we guarantee a small $\bar{\varepsilon}$ by the Monte-Carlo method we enrich the Monte-Carlo setting to $5\cdot 10^6$ sample paths, antithetic variates and 400 time steps per year. In Table~\ref{table-MC_High_Accuracy} we present the accuracy results for the Chebyshev interpolation with $N^{\text{Heston}}_{\text{Cheby}}=6$ based on the enriched Monte-Carlo setting. To this aim, we compare the absolute differences of the Chebyshev prices and the enriched Monte-Carlo prices on the test grid $\overline{\mathcal{P}} \subseteq[\underline{p},\overline{p}]$,
	\begin{equation}	
		\begin{split}
			\overline{\mathcal{P}} =&\ \left\{\left(\sigma^k\right),\ k\in\{0,\dots,20\}\right\},\\
			\sigma^k =&\ \sigma_\text{min} + \frac{k}{20}\left( \sigma_\text{max}- \sigma_\text{min}\right),\ k\in\{0,\dots,20\}.
		\end{split}
	\end{equation}

	The highest observed error on the test grid is at a level of $10^{-2}$. On the same test grid the benchmark Monte-Carlo setting has a worst case confidence bound of $1.644\cdot10^{-2}$ and by comparing the benchmark Monte-Carlo prices to the enriched Monte-Carlo prices on this test grid, the maximal absolute error is $7.361\cdot10^{-3}$. Therefore, we conclude that the Monte-Carlo benchmark setting and the presented Chebyshev interpolation method have a roughly comparable accuracy and on the basis of this accuracy study we now turn to the comparison of run-times.
	
We compare run-times of the Chebyshev interpolation with $N^{\text{Heston}}_{\text{Cheby}}=6$, in which the offline phase is based on the enriched Monte-Carlo setting, to the run-times of the Monte-Carlo benchmark setting described above.  
\begin{table}[]
\centering
\resizebox{0.7\textwidth}{!}{
\begin{tabular}{ccccc}
\toprule
\textbf{Varying}&$\varepsilon_{L^\infty}$&\textbf{MC price}&\textbf{MC conf. bound}& \textbf{CI price}\\
\hline
$\sigma$&9.970$\cdot10^{-3}$&18.6607&$4.592\cdot10^{-3}$&18.6707\\
\bottomrule
\end{tabular}}
\caption{Interpolation of multivariate lookback options with Chebyshev interpolation for $N=6$ based on an enriched Monte-Carlo setting with $5\cdot 10^6$ sample paths, antithetic variates and 400 time steps per year. In addition to the $L^{\infty}$ error on the test grid we also report the Monte-Carlo (MC) price, the Monte-Carlo confidence bound and the Chebyshev Interpolation (CI) price for those parameters at which the $L^{\infty}$ error is realized. We observe that the accuracy of the Chebyshev interpolation for $N=6$ is roughly in the same range as the accuracy of the benchmark Monte-Carlo setting (worst case confidence bound of $1.644\cdot10^{-2}$ and worst case error of $7.361\cdot10^{-3}$).}\label{table-MC_High_Accuracy}
\end{table}

Table~\ref{tab:ChebyshevEfficiencyRuntimes_MC_1D} provides the respective results. The results for $M=1$ have been empirically measured, all others have been extrapolated from that since for each parameter set, the same amount of computation time would have to be invested. The table indicates that from $M=50$ onwards interpolation by Chebyshev is faster. In Figure~\ref{fig:Cheby_Efficiency_MC_Heston_1D} we present additionally for each $M=1,\ldots,100$ the run-times of the Chebyshev interpolation method, including the offline phase, compared to the Monte-Carlo method. Here we observe that for $M=35$ both lines intersect and for $M>35$ the Chebyshev interpolation method is faster. Contrary to Section \ref{sec:ChebyshevEfficiency}, the intersection of the two lines does not occur at $M=N^{\text{Heston}}_{\text{Cheby}}+1$. This reflects the fact that in the offline phase for the Chebyshev interpolation a Monte-Carlo method with more sample paths has been used.

\begin{figure}[htcp!]
	\centering
		\includegraphics[width=1\textwidth]{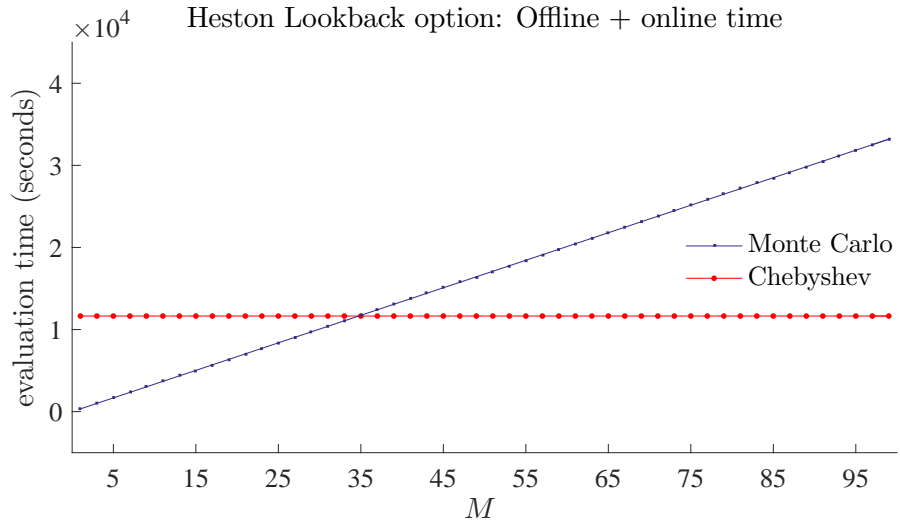}

	\caption{Effiency study for a multidimensional lookback option in the Heston model with 5 underlyings varying one model parameter $\sigma$. Comparison of run-times of Monte-Carlo pricing with Chebyshev pricing including the offline phase. Both methods have been set up to deliver comparable accuracies. We observe that both curves roughly intersect at $M=35$.}\label{fig:Cheby_Efficiency_MC_Heston_1D}
\end{figure}

\begin{table}[]
\centering
\resizebox{0.7\textwidth}{!}{  \begin{tabular}{lcccc}
\toprule
$M$      & 1     & 10     & 50     & 100     \\
\midrule
\parbox[0pt][2.1em][c]{0cm}{}$T_\text{online}^\text{Cheby}$ ($s$)    &2.7$\cdot10^{-5}$&2.7$\cdot10^{-4}$&1.4$\cdot10^{-3}$&2.7$\cdot10^{-3}$  \\ 
\parbox[0pt][2.1em][c]{0cm}{}$T_\text{offline+online}^\text{Cheby}$ ($s$)   &1.2$\cdot10^{4}$&1.2$\cdot10^{4}$&1.2$\cdot10^{4}$&1.2$\cdot10^{4}$   \\ 
\parbox[0pt][2.1em][c]{0cm}{}$T^\text{Monte-Carlo}$ ($s$)  &3.4$\cdot10^{2}$&3.4$\cdot10^{3}$&1.7$\cdot10^{4}$&3.4$\cdot10^{4}$ \\
\parbox[0pt][2.1em][c]{0cm}{}$\dfrac{T_\text{offline+online}^\text{Cheby}}{T^\text{Monte-Carlo}}$   &3473.4\%&347.3\%&69.5\%&34.73\% \\
\bottomrule 
\end{tabular}}
\caption{Efficiency study for a multivariate lookback option in the Heston model based on 5 underlyings. Here, we vary one model parameter and compare the Chebyshev results to Monte-Carlo. Both methods have been set up to deliver comparable accuracies. With increasing number of computed prices, the Chebyshev algorithm increasingly profits from the initial investment of the offline phase.}
\label{tab:ChebyshevEfficiencyRuntimes_MC_1D}
\end{table}

\noindent\textbf{Variation of two model parameters}\\
Now, we vary two parameters, we choose $\rho_j=\rho$, $j=1,\ldots,5$, and vary
	\begin{equation}
	\begin{aligned}
		\rho&\in [\rho_\text{min},\ \rho_\text{max}],&\qquad \rho_\text{min}&=-1,& \rho_\text{max}&=1,\\ 
		\sigma&\in [\sigma_\text{min},\ \sigma_\text{max}],&\qquad \sigma_\text{min}&=0.1,& \sigma_\text{max}&=0.5,
	\end{aligned}
	\end{equation}
fixing all other parameters to the values of setting \eqref{eq:ChebyEffMCHestonParams}.
Guaranteeing a roughly comparable accuracy between the Chebyshev interpolation method and the benchmark Monte-Carlo pricing, we use the following test grid $\overline{\mathcal{P}} \subseteq[\sigma_\text{min},\sigma_\text{max}]\times [\rho_\text{min},\rho_\text{max}]$,
	\begin{equation}
	\label{eq:defevalgrid}
		\begin{split}
			\overline{\mathcal{P}} =&\ \left\{\left(\sigma^{k_{1}},\rho^{k_{2}}\right),\ {k_{1}},{k_{2}}\in\{0,\dots,20\}\right\},\\
			\sigma^{k_{1}} =&\ \sigma_\text{min} + \frac{k_{1}}{20}\left( \sigma_\text{max}- \sigma_\text{min}\right),\ k_{1}\in\{0,\dots,20\},\\
			\rho^{k_{2}} =&\ \rho_\text{min} + \frac{k_{2}}{20}\left( \rho_\text{max}- \rho_\text{min}\right),\ k_{2}\in\{0,\dots,20\}.
		\end{split}
	\end{equation}	
In Table~\ref{table-MC_High_Accuracy2} we present the accuracy results for the Chebyshev interpolation with $N^{\text{Heston}}_{\text{Cheby}}=6$ based on the enriched Monte-Carlo setting.
Comparing the benchmark Monte-Carlo setting and the enriched Monte-Carlo setting on this test grid, we observe that the maximal absolute error is $2.791\cdot10^{-2}$ and the confidence bounds of the benchmark Monte-Carlo setting do not exceed $6.783\cdot10^{-2}$.

\begin{table}[]
\centering
\resizebox{0.7\textwidth}{!}{
\begin{tabular}{ccccc}
\toprule
\textbf{Varying}&$\varepsilon_{L^\infty}$&\textbf{MC price}&\textbf{MC conf. bound}& \textbf{CI price}\\
\hline
$\sigma$, $\rho$&$5.260\cdot10^{-2}$&5.239&$1.428\cdot10^{-2}$&5.292\\
\bottomrule
\end{tabular}}
\caption{Interpolation of multivariate lookback options with Chebyshev interpolation for $N=6$ based on an enriched Monte-Carlo setting with $5\cdot 10^6$ sample paths, antithetic variates and 400 time steps per year. In addition to the $L^{\infty}$ error on the test grid we also report the Monte-Carlo (MC) price, the Monte-Carlo confidence bound and the Chebyshev Interpolation (CI) price for those parameters at which the $L^{\infty}$ error is realized. We observe that the accuracy of the Chebyshev interpolation $N=6$ is roughly in the same range as the accuracy of the benchmark Monte-Carlo setting (worst case confidence bound of $6.783\cdot10^{-2}$ and worst case error of $2.791\cdot10^{-2}$).}\label{table-MC_High_Accuracy2}
\end{table}

For a run-time comparison, we show for different values of $M$ the run-times necessary to compute the prices for $M^2$ parameter tupels. Again, the run-times are measured for $M=1$ and extrapolated for other values of $M$. Table~ \ref{tab:ChebyshevEfficiencyRuntimes_MC_2D} presents the results. In Figure~\ref{fig:Cheby_Efficiency_MC_Heston_2D} for each $M=1,\ldots,100$ the run-times of the Chebyshev interpolation method, including the offline phase, compared to the Monte-Carlo method are presented. We observe that for $M=15$ both lines intersect and for $M>15$ the Chebyshev method outperforms its benchmark. Contrary to the case of varying only one parameter, the intersection of both lines occurs at a significantly lower value of $M$ due to the fact that for each $M$ pricing for $M^2$ parameter tupels is performed.
\begin{table}[]
\centering
\resizebox{0.7\textwidth}{!}{  \begin{tabular}{lcccc}
\toprule
          & \multicolumn{4}{c}{\textbf{Heston}}                \\
$M$    & 1     & 10     & 50     & 100     \\
\midrule
\parbox[0pt][2.1em][c]{0cm}{}$T_\text{online}^\text{Cheby}$ ($s$)    &7.1$\cdot10^{-4}$&7.1$\cdot10^{-2}$&1.8&7.1  \\ 
\parbox[0pt][2.1em][c]{0cm}{}$T_\text{offline+online}^\text{Cheby}$ ($s$)   &8.2$\cdot10^{4}$&8.2$\cdot10^{4}$&8.2$\cdot10^{4}$& 8.2$\cdot10^{4}$  \\ 
\parbox[0pt][2.1em][c]{0cm}{}$T^\text{Monte-Carlo}$ ($s$)  &3.4$\cdot10^{2}$&3.4$\cdot10^{4}$&8.4$\cdot10^{5}$&3.4$\cdot10^{6}$ \\
\parbox[0pt][2.1em][c]{0cm}{}$\dfrac{T_\text{offline+online}^\text{Cheby}}{T^\text{Monte-Carlo}}$   &24313.9\%&243.1\%&9.7\%&2.4\% \\
\bottomrule 
\end{tabular}}
\caption{Efficiency study for a multivariate lookback option in the Heston model based on 5 underlyings. Here, we vary two model parameters and compare the Chebyshev results to Monte-Carlo. Both methods have been set up to deliver comparable accuracies. With increasing number of computed prices, the Chebyshev algorithm increasingly profits from the initial investment of the offline phase.}
\label{tab:ChebyshevEfficiencyRuntimes_MC_2D}
\end{table}
	
\begin{figure}[htcp!]
	\centering
		\includegraphics[width=1\textwidth]{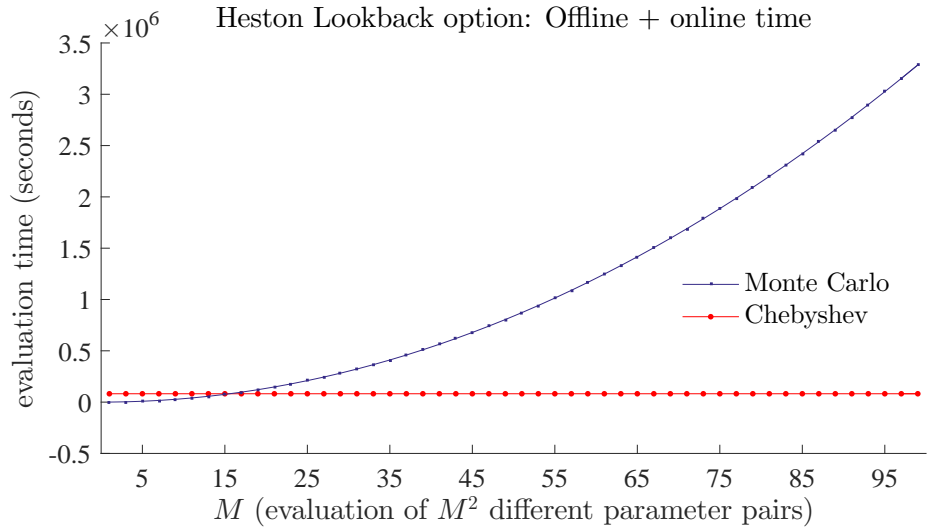}
	\caption{Effiency study for a multivariate lookback option in the Heston model based on 5 underlyings, varying the two model parameters $\sigma$ and $\rho$. Comparison of run-times for Monte-Carlo pricing and Chebyshev pricing including the offline phase. Both methods have been set up to deliver comparable accuracies. We observe that the Monte-Carlo and the Chebyshev curves roughly intersect at $M=15$.}\label{fig:Cheby_Efficiency_MC_Heston_2D}
\end{figure}	

Additionally, Table \ref{tab:ChebyshevEfficiencyRuntimes_MC_2D} highlights that for a total number of $50^2$ parameter tuples, the Chebyshev method exhibits a significant decrease in (total) pricing run-times. For the maximal number of $100^2$ parameter tuples that we investigated, pricing in either model resulted in more than 97\% of run-time savings in our implementation.  While the computation of $100^2$ Heston prices via the Monte-Carlo method consumes up to 39 days, the Chebyshev method computes the very same prices in 23 hours only. Note that only 7 seconds of this time span are consumed by actual pricing during the online phase.

	
\section{Conclusion and Outlook}

This article focuses on applying the Chebyshev method to European option pricing. Within this scope, Sections \ref{sec-Cheby}--\ref{sec-analyticity} establish theoretical convergence results and numerical case studies in Section \ref{sec-numerics} confirm these findings. Moreover, in experiments with Fourier pricing an accuracy of $10^{-5}$ is observed with less than ten Chebyshev nodes in each parameter, see Figure \ref{fig:ChebyErrorDecayAllLog10scale_call}. 
The financial implications of the high precision achieved with such a small number of interpolation nodes are twofold. First, it shows that there are interesting cases in which we observe an accuracy in the range of machine precision. In this comfortable situation without methodological risk we can ignore the fact that approximations are implemented. Second, compared with other sources of risk, already errors from a much lesser accuracy level can be ignored. If we agree that an accuracy of $10^{-4}$ is satisfactory, already $36$--$49$ interpolation nodes for the approximation of call option prices as reported in Figure \ref{fig:ChebyErrorDecayAllLog10scale_call} are sufficient. 

Additionally, also the numerical experiments for American, barrier and lookback options display promising results. A theoretical error analysis for nonlinear pricing problems is beyond the scope of the present article, while we are convinced that further investigations in this direction are valuable. 
For instance our analytic approach based on examinations of the Fourier representations can be adopted to barrier options in L\'evy models leading to the involvement of Wiener-Hopf factorizations, see \cite{EberleinGlauPapapantoleon2010b}. In general we expect the regularity analysis to become more challenging in the presence of nonlinearities. For American options the current work of \cite{Teichmann2015} may lead to regularity assertions for American options that are inherited from their corresponding European counterparts.

The theoretical and experimental results of our case studies show that the method can perform considerably well when few parameters are varied. As a consequence, we recommend the interpolation method for this case and also when solely the strike of a plain vanilla option is varied and fast Fourier methods are available. For calibration purposes for example, strikes are not given in a discrete logarithmic scale, which makes an additional interpolation necessary in order to apply FFT. Here, Chebyshev polynomials offer an attractive alternative. In particular, the maturity can be used as supplementary free variable. Moreover, for models with a low number of parameters, another path could be beneficial: Interpolating the objective function of the parameters directly. Then the optimization would boil down to a minimization of a tensorized polynomial, which could be exploited in further research. As may be seen from \cite{ArmentiCrepeyDrapeauPapapantoleon2015}, where the present article is applied for the first time, this advantage can also be exploited for other optimization procedures in finance for example in risk allocation.

The multivariate construction of the interpolation and the theoretical error analysis suggest that the empirically observed error behaviour extends to three and more varying parameters, as well, as long as analyticity is provided. More precisely, in case of analyticity, the rate is of order $\rho^{-\sqrt[D]{N}}$ for some constant $\rho$ depending on the domain of analyticity and $N$ the total number of interpolation nodes. For multivariate polynomial interpolation, the introduction of sparsity techniques promise higher efficiency, as for instance by compression techniques for tensors as reviewed by \cite{KoldaBader2009}. 
We address the issue of curse of dimensionality further in Ga{\ss}, Glau and Mair (2015)\nocite{GassGlauMair2015}, where we take a different route by replacing the Chebyshev interpolation with an empirical interpolation for Fourier pricing methods.

\appendix
\section{Proof of Theorem \ref{Asymptotic_error_decay_multidim}}\label{sec-proofAsymptotic}
\begin{proof}
In \cite[Proof of Lemma 7.3.3]{SauterSchwab2004} the proof is given for the following error bound:
 \begin{align*}
 \max_{p\in\OP}\big|f - I_{\overline{N}}(f)\big|
\le \sqrt{D}2^{\frac{D}{2}+1}V\varrho_{\min}^{-N}(1-\varrho_{\min}^{-2})^{-\frac{D}{2}},
  \end{align*}
where $N$ is the number of interpolation points in each of the $D$ dimensions, $\varrho_{\min}:=\min_{i=1}^D\varrho_i$ and $V$ the bound of $f$ on $B(\OP,\varrho)$ with $\OP = [-1,1]^D$. Here, we extend the proof by incorporating the different values of $N_i$, $i=1,\ldots,D$, as well as expressing the error bound with the different $\varrho_i$, $i=1,\ldots,D$. 

In general we work with a parameter space $\OP$ of hyperrectangular structure, $\OP=[\underline{p}_{1},\olp[1]]\times\ldots \times[\underline{p}_D,\olp[D]]$. With the introduced linear transformation in Section \ref{sec:Multi_Cheby} we have a transformation $\tau_{\OP}:[-1,1]^D\rightarrow\OP$ with
\begin{align*}
\tau_{\OP}(p)=\left(\overline{p}_i+\frac{\underline{p}_i-\overline{p}_i}{2}(1-p)\right)_{i=1}^D.
\end{align*}
Let $p\mapsto \Price^p$ be a function on $\OP$. We set $\widehat{\Price^p}=\Price^p\circ\tau_{\OP}(p)$. Further, let $\widehat{I}_{\overline{N}}(\widehat{\Price}^{(\cdot)})(p)$ be the Chebyshev interpolation of $\widehat{\Price^p}$ on $[-1,1]^D$. Then it holds
\begin{align*}
I_{\overline{N}}(\Price^{(\cdot)})(p)=\hat{I}_{\overline{N}}(\widehat{\Price}^{(\cdot)})(\cdot)\circ\tau^{-1}_{\OP}(p).
\end{align*}
Therewith, it directly follows
\begin{align*}
\Price^p - I_{\overline{N}}(\Price^{(\cdot)})(p)=\left(\widehat{\Price} -\widehat{I}_{\overline{N}}(\widehat{\Price}^{(\cdot)})(\cdot)  \right)\circ \tau^{-1}_{\OP}(p).
\end{align*}
Applying the error estimation from \cite[Lemma 7.3.3]{SauterSchwab2004} results
\begin{align*}
\big|\Price - I_{\overline{N}}&(\Price^{(\cdot)})(\cdot)\big|_{C^0(\OP)}=\big|\Price - I_{\overline{N}}(\Price^{(\cdot)})(\cdot)\big|_{C^0([-1,1]^D)}\\
&\le \sqrt{D}2^{\frac{D}{2}+1}\widehat{V}\varrho_{\min}^{-N}(1-\varrho_{\min}^{-2})^{-\frac{D}{2}}\\
&= \sqrt{D}2^{\frac{D}{2}+1}V\varrho_{\min}^{-N}(1-\varrho_{\min}^{-2})^{-\frac{D}{2}},
\end{align*}
where $\widehat{V} =\sup_{p\in B([-1,1]^D,\varrho)}\widehat{\Price}^p$, $V=\sup_{p\in B(\OP,\varrho)}\Price^p$. Summarizing, the transformation $\tau_{\OP}:[-1,1]^D\rightarrow\OP$ does not affect the error analysis, only by applying the transformation as described in Section \ref{sec:Multi_Cheby}, 
\begin{align}\label{eq-genB}
B(\OP,\varrho):=B([\underline{p}_1,\olp[1]],\varrho_1)\times\ldots\times B([\underline{p}_D,\olp[D]],\varrho_D ),
\end{align}
with $B([\underline{p},\olp],\varrho):=\tau_{[\underline{p},\olp]}\circ B([-1,1],\varrho)$. Note that $\varrho_i$ is not the radius of the ellipse $B([\underline{p}_i,\overline{p}_i],\varrho_i)$ but of the normed ellipse $B([-1,1],\varrho_i)$.
Therefore, in the following it suffices to show the proof for $\OP=[-1,1]^D$.

 As in \cite[Proof of Lemma 7.3.3]{SauterSchwab2004} we introduce the scalar product
\begin{align*}
\langle f,g\rangle_{\varrho}:=\int_{B(\OP,\varrho)}\frac{f(z)\overline{g(z)}}{\prod_{i=1}^D\sqrt{|1-z_i^2|}}\dd z
\end{align*}
and the Hilbert space
\begin{align*}
L^2(B(\OP,\varrho)):=\{f:\ f\text{ is analytic in }B(\OP,\varrho)\text{ and }\vert\vert f\vert\vert^2_{\varrho}:=\langle f,f\rangle_{\varrho}<\infty\}.
\end{align*}
Following \cite[Proof of Lemma 7.3.3]{SauterSchwab2004}, we define a complete orthonormal system for $L^2(B(\OP,\varrho))$ w.r.t. the scalar product  $\langle \cdot,\cdot\rangle_{\varrho}$ by the scaled Chebyshev polynomials
\begin{align*}
\tilde{T}_{\mu}(z):=c_{\mu}T_{\mu}(z)\text{ with } c_{\mu}:=\left(\frac{2}{\pi}\right)^{\frac{D}{2}}\prod_{i=1}^D(\varrho_i^{2\mu_i}+\varrho_i^{-2\mu_i})^{-\frac{1}{2}},\quad\text{ for all } \mu\in\mathbb{N}_0^D.
\end{align*}
Then, for any arbitrary bounded functional $E$ on $L^2(B(\OP,\varrho))$ we have
\begin{align}
\vert E(f)\vert\le \vert\vert E\vert\vert_{\varrho}\vert\vert f\vert\vert_{\varrho},\label{norm_of_operator}
\end{align}
where $\vert\vert E\vert\vert_{\varrho}$ denotes the operator norm. Due to the orthonormality of $\left(\tilde{T}_{\mu}\right)_{\mu\in\mathbb{N}_0^D}$ it follows that
\begin{align*}
\vert\vert E\vert\vert_{\varrho}=\sup_{f\in L^2(B(\OP,\varrho))\setminus \{0\}}\frac{\vert E(f)\vert}{\Vert f\Vert_{\varrho}}=\sqrt{\sum_{\mu\in\mathbb{N}_0^D}\vert E(\tilde{T}_{\mu})\vert^2}.
\end{align*}
In the following let $E$ be the error of the Chebyshev polynomial interpolation at a fix $p\in\OP$,
\begin{align*}
E(f):=|f(p)- I_{\overline{N}}(f{(\cdot)})(p)|.
\end{align*}
Starting with \eqref{norm_of_operator}, we first focus on $\vert\vert E\vert\vert_{\varrho}$,
\begin{align*}
\Vert E \Vert^2_{\varrho}=\sum_{\mu\in\mathbb{N}_0^D}\vert E(\tilde{T}_{\mu})\vert^2=\sum_{\mu\in\mathbb{N}_0^D} c^2_{\mu}\vert E(T_{\mu})\vert^2.
\end{align*}
At this step we apply Lemma \ref{Lemma_Interpol_Tj} and obtain
\begin{align*}
\sum_{\mu\in\mathbb{N}_0^D} c^2_{\mu}\vert E(T_{\mu})\vert^2=\sum_{\mu\in\mathbb{N}_0^D, \exists i:\mu_i>N_i} c^2_{\mu}\vert E(T_{\mu})\vert^2\le \sum_{\mu\in\mathbb{N}_0^D, \exists i:\mu_i>N_i} 4c^2_{\mu}.
\end{align*}
Overall, using $\left(\prod_{j=1}^D\varrho_j^{2\mu_j}+x\right)^{-1}\le \left(\prod_{j=1}^D\varrho_j^{2\mu_j}\right)^{-1}=\prod_{j=1}^D\varrho_j^{-2\mu_j}$ for $x>0$, $\mu_j\in\mathbb{N}_0$ and $j=1,\ldots,D$ and  this leads to
\begin{align*}
\Vert E \Vert^2_{\varrho}&\le 4 \sum_{\mu\in\mathbb{N}_0^D, \exists i:\mu_i>N_i}c_{\mu}^2\le 4\left(\frac{2}{\pi}\right)^D\sum_{i=1}^D\left( \sum_{\mu\in\mathbb{N}_0^D, \mu_i>N_i}\prod_{j=1}^D\varrho^{-2\mu_j}_j\right)\\
&\le4\left(\frac{2}{\pi}\right)^D \sum_{i=1}^D\varrho_i^{-2N_i}\left( \sum_{\mu\in\mathbb{N}_0^D,\mu_i>N_i}\varrho_i^{-2(\mu_i-N_i)}\prod_{j=1,j\neq i}^D\varrho^{-2\mu_j}_j\right)\\
&\le 4\left(\frac{2}{\pi}\right)^D \sum_{i=1}^D\varrho_i^{-2N_i}\left( \sum_{\mu\in\mathbb{N}_0^D}\prod_{j=1}^D\varrho^{-2\mu_j}_j\right).
\end{align*}
From this point on we use the convergence of the geometric series since $\vert\varrho_j^{-2}\vert<1,\ j=1,\ldots,D$,
\begin{align*}
\Vert E \Vert^2_{\varrho}&\le 4\left(\frac{2}{\pi}\right)^D \sum_{i=1}^D\varrho_i^{-2N_i}\left( \sum_{\mu_1=0}^\infty\ldots\sum_{\mu_D=0}^\infty\prod_{j=1}^D\varrho^{-2\mu_j}_j\right)\\
&= 4\left(\frac{2}{\pi}\right)^D \sum_{i=1}^D\varrho_i^{-2N_i}\left( \sum_{\mu_1=0}^\infty\ldots\sum_{\mu_{D-1}=0}^\infty\prod_{j=1}^{D-1}\varrho^{-2\mu_j}_j\sum_{\mu_D=0}^{\infty}\varrho^{-2\mu_D}_D\right)\\
&= 4\left(\frac{2}{\pi}\right)^D \sum_{i=1}^D\varrho_i^{-2N_i}\left( \sum_{\mu_1=0}^\infty\ldots\sum_{\mu_{D-1}=0}^\infty\prod_{j=1}^{D-1}\varrho^{-2\mu_j}_j\frac{1}{1-\varrho^{-2}_D} \right)\\
&=\ldots=4\left(\frac{2}{\pi}\right)^D \sum_{i=1}^D\varrho_i^{-2N_i}\prod_{j=1}^D\frac{1}{1-\varrho^{-2}_j}.
\end{align*}
Recalling \eqref{norm_of_operator}, we have to estimate $\Vert f\Vert_\varrho$,
\begin{align*}
\Vert f\Vert_\varrho^2=\int_{B(\OP,\varrho)}\frac{f(z)\overline{f(z)}}{\prod_{i=1}^D\sqrt{|1-z_i^2|}}\dd z\le \left(\sup_{z\in B(\OP,\varrho)}\vert f(z)\vert\right)^2\Vert 1\Vert_\varrho^2.
\end{align*}
From $\pi^{\frac{D}{2}}\tilde{T}_0=1$ it directly follows that $\Vert 1\Vert_\varrho^2=\left(\pi^{\frac{D}{2}}\right)^2\Vert \tilde{T}_0\Vert_\varrho^2=\pi^D$ and therewith
\begin{align*}
\Vert f\Vert_\varrho^2\le \pi^D\cdot V^2.
\end{align*}
Combining the results leads to
\begin{align*}
\vert E(f)\vert=|f(p)- I_{\overline{N}}(f{(\cdot)})(p)\big|&\le \left(\pi^D\cdot V^2\cdot 4\left(\frac{2}{\pi}\right)^D \sum_{i=1}^D\varrho_i^{-2N_i}\prod_{j=1}^D\frac{1}{1-\varrho^{-2}_j}\right)^{\frac{1}{2}}\\
&= 2^{\frac{D}{2}+1}V\left( \sum_{i=1}^D\varrho_i^{-2N_i}\prod_{j=1}^D\frac{1}{1-\varrho^{-2}_j}\right)^{\frac{1}{2}}.
\end{align*}
\end{proof}
The following lemma shows that the Chebyshev interpolation of a polynomial with a degree as most as high as the degree of the interpolating Chebyshev polynomial is exact and furthermore determines an upper bound for interpolating Chebyshev polynomials with a higher degree.
\begin{lemma}\label{Lemma_Interpol_Tj}
For $x\in[-1,1]^D$ it holds
\begin{align}
&|T_{\mu}(x)-I_{\overline{N}}(T_{\mu}{(\cdot)})(x)|=0\quad\forall\mu\in\mathbb{N}_0^D:\mu_i\le N_i,i=1,\ldots,D,\label{Lemma_Interpol_T_0}\\
&|T_{\mu}(x)-I_{\overline{N}}(T_{\mu}{(\cdot)})(x)|\le2\quad\forall\mu\in\mathbb{N}_0^D:\exists i\in\{1,\ldots,D\}:\mu_i>N_i\label{Lemma_Interpol_T_2}.
\end{align}
\end{lemma}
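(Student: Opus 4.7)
The plan is to exploit the tensor product structure of both the basis functions and the interpolation operator. Because the nodes $p^{(k_1,\dots,k_D)}$ and the weights in \eqref{def:Chebycj} factor, one has the identity
\begin{equation*}
I_{\overline{N}}(T_\mu)(x) \;=\; \prod_{i=1}^D I_{N_i}\!\big(T_{\mu_i}\big)(x_i),
\end{equation*}
which reduces everything to a univariate analysis of $I_{N_i}(T_{\mu_i})$ on $[-1,1]$.

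For \eqref{Lemma_Interpol_T_0}, I would argue that whenever $\mu_i\le N_i$ for every $i$, each factor $T_{\mu_i}$ already lies in the span of $\{T_0,\dots,T_{N_i}\}$, so the standard reproducing property of univariate Chebyshev interpolation (which follows directly from the discrete orthogonality of $\{\cos(j\pi k/N_i)\}_{j,k=0}^{N_i}$ used to define the coefficients in \eqref{def:Chebycj}) gives $I_{N_i}(T_{\mu_i})=T_{\mu_i}$ for each $i$, and hence $I_{\overline{N}}(T_\mu)=T_\mu$.

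For \eqref{Lemma_Interpol_T_2}, the key observation is aliasing. On the nodes $p_{k_i}=\cos(\pi k_i/N_i)$ we have $T_{\mu_i}(p_{k_i})=\cos(\mu_i\pi k_i/N_i)$, and writing $\mu_i = 2qN_i\pm r_i$ with $0\le r_i\le N_i$, the elementary identity $\cos((2qN_i\pm r_i)\pi k_i/N_i)=\cos(r_i\pi k_i/N_i)$ shows that $T_{\mu_i}$ and $T_{r_i}$ take the same values at all Chebyshev nodes. Since $T_{r_i}$ is in the interpolation space, the first part yields $I_{N_i}(T_{\mu_i})=T_{r_i}$, which satisfies $\|T_{r_i}\|_{L^\infty([-1,1])}\le 1$. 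Taking the tensor product gives $\bigl|I_{\overline{N}}(T_\mu)(x)\bigr|\le 1$ on $[-1,1]^D$, and since also $|T_\mu(x)|\le 1$ there, the triangle inequality produces the bound $2$.

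The only mildly delicate step is verifying the aliasing identity cleanly for arbitrary $\mu_i\in\mathbb{N}_0$ (handling the cases of even/odd reduction modulo $2N_i$), but this is a one-line cosine manipulation and poses no real obstacle. The rest is bookkeeping enabled by the tensor structure.
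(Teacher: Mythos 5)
Your proof is correct, and it follows the same overall architecture as the paper's: reduce to the univariate case via the tensor structure, show that $I_{N_i}(T_{\mu_i})$ is again a single Chebyshev polynomial of degree at most $N_i$ (hence bounded by $1$ on $[-1,1]$), and conclude with the triangle inequality. The difference lies in the mechanism for the key univariate step. The paper works directly with the discrete orthogonality relation for $\sum_{k}T_{\mu}(x_k)T_j(x_k)$ at the Chebyshev extrema (quoted from Rivlin) and runs a case analysis on the congruences $\mu\pm j \equiv 0 \pmod{2N}$ to show that at most one coefficient $c_j$ survives, concluding $I_N(T_\mu)=T_\gamma$ for some $\gamma\le N$. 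You instead invoke the aliasing identity $T_\mu(p_k)=T_r(p_k)$ for $\mu=2qN\pm r$ with $0\le r\le N$, which is a one-line cosine computation; combined with the reproducing property $I_N(T_r)=T_r$ (itself a consequence of the same discrete orthogonality, or of uniqueness of the interpolant) it gives $I_N(T_\mu)=T_r$ immediately and identifies the surviving index explicitly. The two arguments encode the same arithmetic --- the congruence classes mod $2N$ in the paper's orthogonality table are exactly the aliasing classes --- but yours avoids the case analysis, and it also makes the multivariate step cleaner by stating the factorization $I_{\overline{N}}(T_\mu)=\prod_{i=1}^D I_{N_i}(T_{\mu_i})$ explicitly where the paper only gestures at ``inserting the one-dimensional result to each tensor component.'' One pedantic caveat: the decomposition $\mu=2qN\pm r$ silently assumes $N\ge 1$; the degenerate case $N_i=0$ (a single node at $p_0=1$, where $T_\mu(1)=1$ for every $\mu$, so the interpolant is the constant $1$ and the bound still holds) deserves a one-line remark, though the paper's proof shares this blind spot since its coefficient formula also divides by $N$.
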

\begin{proof}
Uniqueness properties of the Chebyshev interpolation directly imply \eqref{Lemma_Interpol_T_0}. 
The proof of \eqref{Lemma_Interpol_T_2} is similar to \cite[Proof of Hilfssatz 7.3.1]{SauterSchwab2004}. They use the zeros of the Chebyshev polynomial as interpolation points, whereas we use the extreme points and therefore, we use a different orthogonality property in this proof. We first focus on the one-dimensional case. Recalling \eqref{eq-ChebInter1dim}, the Chebyshev interpolation of $T_{\mu}$, $\mu>N$, is given as
\begin{align*}
I_N(T_{\mu})(x)=\sum_{j=0}^Nc_jT_j(x)\quad\text{with}\quad c_j=\frac{2^{\mathbbm{1}_{0<j<N}}}{N}\sum_{k=0}^N{}^{''}T_{\mu}(x_k)T_{j}(x_k),\quad j\le N,
\end{align*}
where $x_k$ denotes the $k$-th extremum of $T_N$. Here, we can apply the following orthogonality (\cite[p.54]{Rivlin1990}),
\begin{equation}
\label{eq:Ortho}
\begin{split}
\sum_{k=0}^N{}^{''}T_{\mu}(x_k)T_{j}(x_k)=\begin{cases}
0, & \mu+j\neq0\text{ mod } (2N) \text{ and } \vert \mu-j\vert\neq0\text{ mod } (2N),\\
N,& \mu+j=0\text{ mod } (2N) \text{ and } \vert \mu-j\vert=0\text{ mod } (2N), \\
\frac{N}{2}, &\mu+j=0\text{ mod } (2N) \text{ and } \vert \mu-j\vert\neq0\text{ mod } (2N),\\
\frac{N}{2}, &\mu+j\neq0\text{ mod } (2N)\text{ and }\vert \mu-j\vert=0\text{ mod } (2N).
\end{cases}
\end{split}
\end{equation}
For $j\le N$ and $\mu>N$ this yields the existence of $\gamma \le N$ such that
\begin{align}\label{eq:Aux}
I_N(T_{\mu})=T_{\gamma}.
\end{align}
\eqref{eq:Aux} follows elementarily from the case that for any $\mu>N$ only for one $0\le j\le N$ the orthogonality can lead to a coefficient $c_j>0$.

Proving the claim, we distinguish several cases. In all of these cases we assume that there exists $0\le j\le N$ such that $\sum_{k=0}^N{}^{''}T_{\mu}(x_k)T_{j}(x_k)\neq0$. We will then show that for all other $0\le i\le N,i\neq j$ it follows $\sum_{k=0}^N{}^{''}T_{\mu}(x_k)T_{j}(x_k)=0$.

Firstly, assume there exists $j$ such that $\mu+j=0\text{ mod } (2N)$ and $\mu-j=0\text{ mod } (2N)$. Then it directly follows for all $0\le i\le N$, $i\neq j$ that $\mu+i\neq 0\text{ mod } (2N)$ and $\mu-i\neq0\text{ mod } (2N)$.

Secondly, assume there exists $j$ such that $\mu+j=0\text{ mod } (2N)$ and $\mu-j\neq0\text{ mod } (2N)$. Analogously, for all $0\le i\le N$, $i\neq j$ we have $\mu+i\neq 0\text{ mod } (2N)$ and additionally from $\mu+j=0\text{ mod } (2N)$ it follows that $\mu+j-2N=0\text{ mod } (2N)$ and therewith for all $0\le i\le N$, $i\neq j$ we have $\mu-i>\mu+j-2N$ which is equivalent to $ \mu-i\neq0\text{ mod } (2N)$.

Similar argumentation holds for the third case $\mu+j\neq 0\text{ mod }(2N)\text{ and }\vert \mu-j\vert=0\text{ mod } (2N)$.

Therewith, \eqref{eq:Aux} holds and it directly follows that $\vert T_{\mu}-I_N(T_{\mu})\vert\le\vert T_{\mu}\vert+ \vert I_N(T_{\mu})\vert\le 1+1=2$. Thus \eqref{Lemma_Interpol_T_2} holds in the one-dimensional case. The extension to the multivariate case follows analogously by applying the triangle inequality and inserting the one-dimensional result to each tensor component.
\end{proof}

\bibliographystyle{chicago}
  \bibliography{LiteraturFourierRB}

\begin{thebibliography}{}

\bibitem[\protect\citeauthoryear{Armenti, Cr\'epey, Drapeau, and
  Papapantoleon}{Armenti et~al.}{2015}]{ArmentiCrepeyDrapeauPapapantoleon2015}
Armenti, Y., S.~Cr\'epey, S.~Drapeau, and A.~Papapantoleon (2015).
\newblock Multivariate shortfall risk allocation.
\newblock available on \url{http://arxiv.org/abs/1507.05351}.

\bibitem[\protect\citeauthoryear{Barndorff-Nielsen and
  Shephard}{Barndorff-Nielsen and
  Shephard}{2001}]{Barndorff-NielsenShephard2001}
Barndorff-Nielsen, O.~E. and N.~Shephard (2001).
\newblock {Non-Gaussian Ornstein--Uhlenbeck-based models and some of their uses
  in financial economics}.
\newblock {\em Journal of the Royal Statistical Society, Series B\/}~{\em 63},
  167--241.

\bibitem[\protect\citeauthoryear{Bernstein}{Bernstein}{1912}]{Bernstein1912}
Bernstein, S.~N. (1912).
\newblock Sur l'ordre de la meilleure approximation des fonctions continues par
  des polynomes de degr{\'e} donn{\'e}, mimoires acad.
\newblock {\em Acad{\'e}mie Royale de Belgique. Classe des Sciences.
  M{\'e}moires\/}~{\em 4}.

\bibitem[\protect\citeauthoryear{Black and Scholes}{Black and
  Scholes}{1973}]{BlackScholes1973}
Black, F. and M.~Scholes (1973).
\newblock The pricing of options and corporate liabilities.
\newblock {\em Journal of Political Economy\/}~{\em 81\/}(3), 637--654.

\bibitem[\protect\citeauthoryear{Boyarchenko and Levendorski\u{i}}{Boyarchenko
  and Levendorski\u{i}}{2000}]{BoyarchenkoLevendoskii2000}
Boyarchenko, S.~I. and S.~Z. Levendorski\u{i} (2000).
\newblock Option pricing for truncated l{\'e}vy processes.
\newblock {\em International Journal of Theoretical and Applied Finance\/}~{\em
  3\/}(03), 549--552.

\bibitem[\protect\citeauthoryear{Boyarchenko and Levendorski\u{i}}{Boyarchenko
  and Levendorski\u{i}}{2002}]{BoyarchenkoLevendorskii2002}
Boyarchenko, S.~I. and S.~Z. Levendorski\u{i} (2002).
\newblock {\em Non-Gaussian Merton-Black-Scholes Theory}, Volume~9.
\newblock World Scientific.

\bibitem[\protect\citeauthoryear{Brennan and Schwartz}{Brennan and
  Schwartz}{1977}]{BrenanSchwartz1977}
Brennan, M.~J. and E.~S. Schwartz (1977).
\newblock {The valuation of American put options}.
\newblock {\em The Journal of Finance\/}~{\em 2\/}(32), 449--462.

\bibitem[\protect\citeauthoryear{Burkovska, Haasdonk, Salomon, and
  Wohlmuth}{Burkovska et~al.}{2015}]{burkovska2015reduced}
Burkovska, O., B.~Haasdonk, J.~Salomon, and B.~Wohlmuth (2015).
\newblock {Reduced basis methods for pricing options with the Black-Scholes and
  Heston models}.
\newblock {\em SIAM Journal on Financial Mathematics\/}~{\em 6\/}(1), 685--712.

\bibitem[\protect\citeauthoryear{Canuto and Quarteroni}{Canuto and
  Quarteroni}{1982}]{CanutoQuarteroni1982}
Canuto, C. and A.~Quarteroni (1982).
\newblock Approximation results for orthogonal polynomials in sobolev spaces.
\newblock {\em Mathematics of Computation\/}~{\em 38\/}(157), 67--86.

\bibitem[\protect\citeauthoryear{Carr, Geman, Madan, and Yor}{Carr
  et~al.}{2002}]{CarrGemanMadanYor2002}
Carr, P., H.~Geman, D.~B. Madan, and M.~Yor (2002).
\newblock The fine structure of asset returns: An empirical investigation.
\newblock {\em Journal of Business\/}~{\em 75\/}(2), 305--332.

\bibitem[\protect\citeauthoryear{Carr, Geman, Madan, and Yor}{Carr
  et~al.}{2003}]{CarrGemanMadanYor2003}
Carr, P., H.~Geman, D.~B. Madan, and M.~Yor (2003).
\newblock {Stochastic volatility for L\'evy processes}.
\newblock {\em Mathematical Finance\/}~{\em 13}, 345--382.

\bibitem[\protect\citeauthoryear{Carr and Madan}{Carr and
  Madan}{1999}]{CarrMadan99}
Carr, P. and D.~B. Madan (1999).
\newblock Option valuation and the fast {F}ourier transform.
\newblock {\em Journal of Computional Finance\/}~{\em 2\/}(4), 61--73.

\bibitem[\protect\citeauthoryear{Cheridito and Wugalter}{Cheridito and
  Wugalter}{2012}]{CheriditoWugalter2012}
Cheridito, P. and A.~Wugalter (2012).
\newblock Pricing and hedging in affine models with possibility of default.
\newblock {\em SIAM Journal on Financial Mathematics\/}~{\em 3\/}(1), 328--350.

\bibitem[\protect\citeauthoryear{Cont, Lantos, and Pironneau}{Cont
  et~al.}{2011}]{ContLantosPironneau2011}
Cont, R., N.~Lantos, and O.~Pironneau (2011).
\newblock A reduced basis for option pricing.
\newblock {\em SIAM Journal on Financial Mathematics\/}~{\em 2\/}(1), 287--316.

\bibitem[\protect\citeauthoryear{Cuchiero, Keller-Ressel, and
  Teichmann}{Cuchiero et~al.}{2012}]{CuchieroKeller-ResselTeichmann2015}
Cuchiero, C., M.~Keller-Ressel, and J.~Teichmann (2012).
\newblock Polynomial processes and their applications to mathematical finance.
\newblock {\em Finance and Stochastics\/}~{\em 4\/}(16).

\bibitem[\protect\citeauthoryear{Duffie, Filipovi{\'c}, and
  Schachermayer}{Duffie et~al.}{2003}]{DuffieFilipovicSchachermayer2003}
Duffie, D., D.~Filipovi{\'c}, and W.~Schachermayer (2003).
\newblock Affine processes and applications in finance.
\newblock {\em Annals of Applied Probability\/}~{\em 13\/}(3), 984--1053.

\bibitem[\protect\citeauthoryear{Eberlein, Glau, and Papapantoleon}{Eberlein
  et~al.}{2010}]{EberleinGlauPapapantoleon2010a}
Eberlein, E., K.~Glau, and A.~Papapantoleon (2010).
\newblock Analysis of {F}ourier transform valuation formulas and applications.
\newblock {\em Applied Mathematical Finance\/}~{\em 17\/}(3), 211--240.

\bibitem[\protect\citeauthoryear{Eberlein, Glau, and Papapantoleon}{Eberlein
  et~al.}{2011}]{EberleinGlauPapapantoleon2010b}
Eberlein, E., K.~Glau, and A.~Papapantoleon (2011).
\newblock Analyticity of the {W}iener--{H}opf factorization and valuation of
  exotic options in {L}\'evy models.
\newblock In G.~Di~Nunno and B.~{\O}ksendal (Eds.), {\em Advanced Mathematical
  Methods for Finance}, pp.\  223 --� 245. Springer.

\bibitem[\protect\citeauthoryear{Eberlein, Keller, and Prause}{Eberlein
  et~al.}{1998}]{EberleinKellerPrause98}
Eberlein, E., U.~Keller, and K.~Prause (1998).
\newblock New insights into smile, mispricing and value at risk: the hyperbolic
  model.
\newblock {\em Journal of Business\/}~{\em 71\/}(3), 371--405.

\bibitem[\protect\citeauthoryear{Eberlein and {\"O}zkan}{Eberlein and
  {\"O}zkan}{2005}]{EberleinOezkan2005}
Eberlein, E. and F.~{\"O}zkan (2005).
\newblock The {L}\'evy {LIBOR} model.
\newblock {\em Finance and Stochastics\/}~{\em 9\/}(3), 327--348.

\bibitem[\protect\citeauthoryear{Feng and Linetsky}{Feng and
  Linetsky}{2008}]{FengLinetsky2008}
Feng, L. and V.~Linetsky (2008).
\newblock Pricing discretely monitored barrier options and defaultable bonds in
  {L}{\'e}vy process models: A fast {H}ilbert transform approach.
\newblock {\em Mathematical Finance\/}~{\em 18\/}(3), 337--384.

\bibitem[\protect\citeauthoryear{Filipovi\'c, Larsson, and Trolle}{Filipovi\'c
  et~al.}{2014}]{FilipovicLarssonTrolle2014}
Filipovi\'c, D., M.~Larsson, and A.~Trolle (2014).
\newblock Linear-rational term structure models.
\newblock Preprint.

\bibitem[\protect\citeauthoryear{Filipovi\'c and Mayerhofer}{Filipovi\'c and
  Mayerhofer}{2009}]{FilipovicMayerhofer2009}
Filipovi\'c, D. and E.~Mayerhofer (2009).
\newblock {Affine diffusion processes: Theory and applications}.
\newblock In {\em {Advanced Financial Modelling}}, Volume~8 of {\em Radon
  Series on Computational and Applied Mathematics}, pp.\  125--164. de Gruyter.

\bibitem[\protect\citeauthoryear{Ga{\ss}, Glau, and Mair}{Ga{\ss}
  et~al.}{2015}]{GassGlauMair2015}
Ga{\ss}, M., K.~Glau, and M.~Mair (2015).
\newblock {Magic Points Fourier Transform Option Pricing -- Polynomial and
  Empirical Interpolation}.
\newblock Working paper.

\bibitem[\protect\citeauthoryear{Glau}{Glau}{2015}]{Glau2015a}
Glau, K. (2015).
\newblock Classification of {L}\'evy processes with parabolic {K}olmogorov
  backward equations.
\newblock forthcoming in {S}{I}{A}{M} journal {T}heory of {P}robability.

\bibitem[\protect\citeauthoryear{Glau}{Glau}{2016}]{Glau2015b}
Glau, K. (2016).
\newblock {Feynman-Kac formula for {L}\'evy processes with discontionuous
  killing rate}.
\newblock forthcoming in Finance and Stochastics.

\bibitem[\protect\citeauthoryear{Haasdonk, Salomon, and Wohlmuth}{Haasdonk
  et~al.}{2013}]{HaasdonkSalomonWohlmuth2012b}
Haasdonk, B., J.~Salomon, and B.~Wohlmuth (2013).
\newblock {A reduced basis method for the simulation of American options}.
\newblock In {\em Numerical Mathematics and Advanced Applications 2011}, pp.\
  821--829. Springer.

\bibitem[\protect\citeauthoryear{Heston}{Heston}{1993}]{Heston1993}
Heston, S.~L. (1993).
\newblock A closed-form solution for options with stochastic volatility with
  applications to bond and currency options.
\newblock {\em Review of Financial Studies\/}~{\em 6\/}(2), 327--343.

\bibitem[\protect\citeauthoryear{J{\"a}nich}{J{\"a}nich}{2004}]{jaenich2004funktionentheorie}
J{\"a}nich, K. (2004).
\newblock {\em Funktionentheorie: eine Einf{\"u}hrung}.
\newblock Springer-Lehrbuch. Springer.

\bibitem[\protect\citeauthoryear{Kallsen}{Kallsen}{2006}]{Kallsen2006}
Kallsen, J. (2006).
\newblock A didactic note on affine stochastic volatility models.
\newblock In Y.~Kabanov, R.~Lipster, and J.~Stoyanov (Eds.), {\em From
  Stochastic Calculus to Mathematical Finance: The Shiryaev Festschrift}, pp.\
  343--368. Springer.

\bibitem[\protect\citeauthoryear{Keller-Ressel and Mayerhofer}{Keller-Ressel
  and Mayerhofer}{2015}]{Keller-ResselMayerhofer2015}
Keller-Ressel, M. and E.~Mayerhofer (2015).
\newblock {Exponential moments of affine processes}.
\newblock {\em The Annals of Applied Probability\/}~{\em 25}, 714--–752.

\bibitem[\protect\citeauthoryear{Keller-Ressel, Papapantoleon, and
  Teichmann}{Keller-Ressel
  et~al.}{2013}]{Keller-ResselPapapantoleonTeichmann2013}
Keller-Ressel, M., A.~Papapantoleon, and J.~Teichmann (2013).
\newblock The affine libor models.
\newblock {\em Mathematical Finance\/}~{\em 23\/}(4), 627--658.

\bibitem[\protect\citeauthoryear{Kolda and Bader}{Kolda and
  Bader}{2009}]{KoldaBader2009}
Kolda, T.~G. and B.~W. Bader (2009).
\newblock Tensor decompositions and applications.
\newblock {\em SIAM Review\/}~{\em 51\/}(3), 455--500.

\bibitem[\protect\citeauthoryear{Kudryavtsev and Levendorski\u{i}}{Kudryavtsev
  and Levendorski\u{i}}{2009}]{KudryavtsevLevendorskiy2009}
Kudryavtsev, O. and S.~Z. Levendorski\u{i} (2009).
\newblock Fast and accurate pricing of barrier options under {L}\'evy
  processes.
\newblock {\em Finance and Stochastics\/}~{\em 13\/}(4), 531--562.

\bibitem[\protect\citeauthoryear{Lee}{Lee}{2004}]{Lee2004}
Lee, R.~W. (2004).
\newblock {Option pricing by transform methods: Extensions, unification, and
  error control}.
\newblock {\em Journal of Computational Finance\/}~{\em 7\/}(3), 51--–86.

\bibitem[\protect\citeauthoryear{Levendorski\u{i}}{Levendorski\u{i}}{2012}]{Levendorskiy2012}
Levendorski\u{i}, S.~Z. (2012).
\newblock Efficient pricing and reliable calibration in the {H}eston model.

\bibitem[\protect\citeauthoryear{Lord, Fang, Bervoets, and Oosterlee}{Lord
  et~al.}{2008}]{LordFangBervoetsOosterlee2008}
Lord, R., F.~Fang, F.~Bervoets, and C.~W. Oosterlee (2008).
\newblock {A fast and accurate FFT-based method for pricing early-exercise
  options under {L}\'evy processes}.
\newblock {\em SIAM Journal on Scientific Computing\/}~{\em 30\/}(4),
  1678--1705.

\bibitem[\protect\citeauthoryear{Merton}{Merton}{1973}]{merton1973}
Merton, R.~C. (1973).
\newblock Theory of rational option pricing.
\newblock {\em The Bell Journal of Economics and Management Science\/}~{\em
  4\/}(1), 141--183.

\bibitem[\protect\citeauthoryear{Merton}{Merton}{1976}]{merton1976}
Merton, R.~C. (1976).
\newblock Option pricing when underlying stock returns are discontinuous.
\newblock {\em Journal of Financial Economics\/}~{\em 3}, 125--144.

\bibitem[\protect\citeauthoryear{Pachon}{Pachon}{2016}]{Pachon2016}
Pachon, R. (2016).
\newblock {Numerical pricing of European options with arbitrary payoffs}.
\newblock Available at SSRN: \url{http://ssrn.com/abstract=2712402}.

\bibitem[\protect\citeauthoryear{Pironneau}{Pironneau}{2011}]{Pironneau2011}
Pironneau, O. (2011).
\newblock Reduced basis for vanilla and basket options.
\newblock {\em Risk and Decision Analysis\/}~{\em 2\/}(4), 185--194.

\bibitem[\protect\citeauthoryear{Pistorius and Stolte}{Pistorius and
  Stolte}{2012}]{PistoriusStolte2012}
Pistorius, M. and J.~Stolte (2012).
\newblock {Fast computation of vanilla prices in time-changed models and
  implied volatilities}.
\newblock {\em International Journal of Theoretical and Applied Finance\/}~{\em
  15\/}(4), 1250031--1250031.

\bibitem[\protect\citeauthoryear{Platte and Trefethen}{Platte and
  Trefethen}{2008}]{PlatteTrefethen2008}
Platte, R.~B. and N.~L. Trefethen (2008).
\newblock {\em {Chebfun: A new kind of numerical computing}}, pp.\  69--86.
\newblock Springer.

\bibitem[\protect\citeauthoryear{Raible}{Raible}{2000}]{Raible}
Raible, S. (2000).
\newblock {\em L\'evy {P}rocesses in {F}inance: {T}heory, {N}umerics, and
  {E}mpirical {F}acts}.
\newblock Ph.\ D. thesis, Universit\"at Freiburg.

\bibitem[\protect\citeauthoryear{Rivlin}{Rivlin}{1990}]{Rivlin1990}
Rivlin, T.-J. (1990).
\newblock {\em Chebyshev polynomials.}
\newblock John Wiley \& Sons, Inc., copublished in the United States with John
  Wiley \&.

\bibitem[\protect\citeauthoryear{Runge}{Runge}{1901}]{Runge1901}
Runge, C. (1901).
\newblock {{\"U}ber empirische Funktionen und die Interpolation zwischen
  {\"a}quidistanten Ordinaten}.
\newblock {\em Zeitschrift f\"ur Mathematik und Physik\/}~{\em 46}, 224--243.

\bibitem[\protect\citeauthoryear{Sachs and Schu}{Sachs and
  Schu}{2010}]{SachsSchu2010}
Sachs, E.~W. and M.~Schu (2010).
\newblock Reduced order models in {PIDE} constrained optimization.
\newblock {\em Control and Cybernetics\/}~{\em 39\/}(3), 661--675.

\bibitem[\protect\citeauthoryear{Sato}{Sato}{1999}]{Sato}
Sato, K.-I. (1999).
\newblock {\em L{\'e}vy {P}rocesses and {I}nfinitely {D}ivisible
  {D}istributions}.
\newblock Cambridge University Press.

\bibitem[\protect\citeauthoryear{Sauter and Schwab}{Sauter and
  Schwab}{2004}]{SauterSchwab2004}
Sauter, S. and C.~Schwab (2004).
\newblock {\em Randelementmethoden: Analyse, Numerik und Implementierung
  schneller Algorithmen}.
\newblock Vieweg+ Teubner Verlag.

\bibitem[\protect\citeauthoryear{Tadmor}{Tadmor}{1986}]{Tadmor1986}
Tadmor, E. (1986).
\newblock The exponential accuracy of fourier and chebyshev differencing
  methods.
\newblock {\em SIAM Journal on Numerical Analysis\/}~{\em 23\/}(1), 1--10.

\bibitem[\protect\citeauthoryear{Teichmann}{Teichmann}{2015}]{Teichmann2015}
Teichmann, J. (2015).
\newblock Tractable american options problems.
\newblock Talk at ETH Z{\"u}rich, September 2015, available on
  \url{https://people.math.ethz.ch/~jteichma/its_talks/teichmann_talk_zurich_sep2015.pdf}.

\bibitem[\protect\citeauthoryear{Trefethen}{Trefethen}{2011}]{TrefethenMythTalk}
Trefethen, L.~N. (2011).
\newblock Talk: Six myths of polynomial interpolation and quadrature.
\newblock \url{https://people.maths.ox.ac.uk/trefethen/mythstalk.pdf}.

\bibitem[\protect\citeauthoryear{Trefethen}{Trefethen}{2013}]{Trefethen2013}
Trefethen, L.~N. (2013).
\newblock {\em {Approximation Theory and Approximation Practice}}.
\newblock SIAM books.

\bibitem[\protect\citeauthoryear{Wloka}{Wloka}{1987}]{Wloka-english}
Wloka, J. (1987).
\newblock {\em Partial {D}ifferential {E}quations}.
\newblock Cambridge University Press.

\end{thebibliography}

\end{document}